\titleformat*{\section}{\bf\Large\center}
\newcommand{\GG}[1]{}
\theoremstyle{definition}
\newtheorem*{theorem*}{Theorem}
\newtheorem{theorem}{Theorem}
\newtheorem*{rmk*}{Remark}
\newtheorem{proposition}{Proposition}
\newtheorem{lemma}{Lemma}
\newtheorem{algo}{Algorithm}
\newtheorem{condition}{Condition}
\newtheorem{definition}{Definition}
\newtheorem{remark}{Remark}
\newtheorem*{corollary*}{Corollary}
\def\rank{\text{rank}}
\def\dypr{m}
\def\strata{b}
\def\c{\complement}
\apptocmd{\sloppy}{\hbadness 10000\relax}{}{} 
\DeclareMathOperator*{\argmin}{arg\,min}
\DeclareMathOperator*{\argmax}{arg\,max}
\newcommand{\lxr}{\color{black}}
\def\Var{\text{Var}}
\def\converged{\stackrel{d}{\longrightarrow}}
\def\I{\mathbbm{1}}
\def\E{\mathbb{E}}
\def\Pr{\text{Pr}}
\def\bs{}
\def\GT{\text{GT}}
\def\LP{\prime}
\def\c{\complement}
\def\rev{\color{black}}
\def\Unif{\text{Unif}}
\tikzstyle{io} = [trapezium, trapezium left angle=70, trapezium right angle=110, minimum width=1cm, minimum height=1cm, text centered, draw=black,  trapezium stretches=true, thick]
\tikzstyle{process} = [rectangle, minimum width=3cm, minimum height=1cm, text centered, draw=black, thick]
\tikzstyle{decision} = [diamond, minimum width=1cm, minimum height=1cm, text centered, draw=black, thick]
\tikzstyle{arrow} = [thick,->,>=stealth]
\begin{document}
\onehalfspacing
\allowdisplaybreaks

\title{\bf 
Treatment Effect Quantiles in Stratified Randomized Experiments and Matched Observational Studies
}

\author{Yongchang Su and Xinran Li
\footnote{
Yongchang Su is Doctoral Candidate, Department of Statistics, University of Illinois 
at Urbana-Champaign, Champaign, IL 61820 (e-mail: \href{mailto:ysu17@illinois.edu}{ysu17@illinois.edu}). 
Xinran Li is Assistant Professor, Department of Statistics, University of Illinois at Urbana-Champaign, Champaign, IL 61820 (e-mail: \href{mailto:xinranli@illinois.edu}{xinranli@illinois.edu}). 
}
}
\date{}
\maketitle

\vspace{-0.8em}

\begin{abstract}	
Evaluating {\rev the treatment effect} has become an important topic for many applications.  
However, most existing literature focuses mainly on average treatment effects. 
When the individual effects are heavy-tailed 
or have outlier values, 
not only may the average effect not be appropriate for summarizing treatment effects, 
but also the conventional inference for it can be sensitive and possibly invalid due to poor large-sample approximations. 
In this paper we focus on quantiles of individual treatment effects, 
which can be more robust 
in the presence of extreme individual effects. 
Moreover, our inference for 
them 
is purely randomization-based, 
avoiding any 
distributional assumptions on the 
units. 
We first consider inference in stratified randomized experiments, 
extending the recent work by \citet{li2020quantile}.  
We show that 
the computation of 
valid $p$-values for testing null hypotheses on quantiles of individual effects can be 
transformed into instances of the
multiple-choice knapsack problem, which can be efficiently solved exactly or slightly conservatively. 
We then extend our approach to matched observational studies and propose sensitivity analysis to investigate to what extent our inference on quantiles of individual effects is robust to unmeasured confounding.
%
The proposed 
randomization inference and sensitivity analysis are simultaneously valid for all quantiles of individual effects, 
noting that the analysis for the maximum {\rev or minimum} individual effect coincides with the conventional analysis assuming constant treatment effects. 

%
%
%
%
%
%
%
    
\end{abstract}
\begin{keywords}
randomization inference; 
sensitivity analysis;
multiple-choice knapsack problem; 
greedy algorithm; 
dynamic programming
\end{keywords}

\section{Introduction}

Evaluating the causal or treatment effect is important in many applications, such as drug development, policy evaluation and design of educational intervention; see, e.g., \citet{Imbens15a}, \citet{angrist2008mostly}, \citet{robins2020}. 
However, most existing literature focuses mainly on the average or certain subgroup average treatment effects. 
First, the average effects may not be the most appropriate summary for treatment effects, especially when individual effects are heterogeneous, heavy-tailed or have outlier values. 
For example,
a treatment has a negative effect for most of the units, but its effect averaging over all units is positive, driven by extremely large effects on a tiny proportion of the population that may be due to outliers. 
Hence, 
simply summarizing by {\rev the average} can be misleading.
Second, inferences that rely on large-sample approximations may perform poorly when data are heavy-tailed or have limited sample size. 
Third, with discrete or ordinal outcomes, the average treatment effect may be difficult to interpret, especially when the numerical values for the outcome do not have obvious meaning and only their relative values matter; see, e.g., \citet{Lu2018} and references therein. 

Motivated by the above concerns, we propose to consider quantiles of individual treatment effects, 
which are closely related to proportions of units with effects greater {\rev or less} than certain thresholds. 
These measures of treatment effects are generally more robust to extreme or outlier values in individual effects. 
Different from usual average effects, 
the quantiles of individual effects and proportions of units with effects passing any thresholds are generally not point identifiable from the observed data \citep[see, e.g.,][]{Fan2010}. 
Because they depend on the joint distribution of the two potential outcomes, which can never be observed simultaneously for the same unit. 
Nevertheless, we can still construct confidence intervals for them. 
Recently, \citet{Lu2018} and \citet{lu2020} studied the sharp bound for the proportion of units benefited from the treatment with ordinal outcomes, and \citet{Rosenblum2019} studied statistical inference for the same estimand in randomized experiments with binary outcomes. 
Importantly, our proposed method can infer the proportion of units with effects passing any threshold and works for general outcomes, which can be continuous, discrete or even a mixture of them. 

In this paper we will consider randomization-based inference for treatment effects, which is also called design-based or finite population inference. 
In particular, we will use solely the random treatment assignments as the reasoned basis for inference \citep{Fisher35a}, and focus on the units in hand without imposing any model or distributional assumptions on them, such as independent and identically distributed (i.i.d.) sampling from some superpopulation. 
Besides, the proposed inference utilizes rank-based test statistics and can be finite-sample valid.
Therefore, 
it is robust to heavy-tailed data and can avoid poor asymptotic approximations.

We first consider inference for quantiles of individual treatment effects in stratified randomized experiments, which builds upon and generalizes the recent work by \citet{li2020quantile}. 
In particular, \citet{li2020quantile} requires exchangeable treatment assignments across all units (e.g., completely randomized experiments), while we require only exchangeable assignments within each stratum and allow units in different strata to have different propensity scores \citep{pscore1983}. 
{\rev The} stratified experiment is not only a popular design in practice \citep{Fisher35a, box2005statistics}, but also can be used to approximate observational studies through careful designs such as substratification or  matching \citep{Imbens15a, Rosenbaum:2010}. 
The inference in stratified experiments encounters additional computation difficulty, especially when the number of strata and stratum sizes are large,  since the calculation of valid $p$-values for testing null hypotheses on quantiles of individual effects involves integer linear programming. 

Fortunately,
the optimization for computing these valid $p$-values 
can be transformed into the classical multiple-choice knapsack problems, 
for which many efficient algorithms have been developed and will be utilized for our purpose \citep{kellerer2004multiple}. 
{\rev The} multiple-choice knapsack problems are generally NP-hard. 
However, in our specific context, they can be solved in polynomial time. 
Specifically, for a sample of size $N$,  
the multiple-choice knapsack problem for calculating one valid $p$-value 
can be solved exactly in $O(N^2)$ time and conservatively in $O(N)$ time, 
where the latter will lead to a conservative $p$-value. 


We then consider sensitivity analysis for quantiles of individual effects in matched observational studies. 
When all confounding is measured and matched perfectly, the matched observational study reduces to a stratified randomized experiment, for which our randomization inference discussed before can be directly applied. 
However, when matching is not exact and more importantly there exists unmeasured confounding, which is inevitable in most observational studies, 
units within the same matched set can have different unknown propensity scores, under which the randomization inference pretending completely randomized assignments within each matched set may provide biased inference for the true treatment effects. 
In this case, a sensitivity analysis is often invoked, which investigates to what extent our inferred causal conclusion is robust to unmeasured confounding \citep{cornfield1959smoking, Rosenbaum02a, ding2016sensitivity, zhaosenIPW2019, Fogarty2020}. 
Here 
we adopt \citet{Rosenbaum02a}'s sensitivity analysis framework, assuming that the odds ratio of propensity scores for any two units within the same matched set is bounded between $1/\Gamma$ and $\Gamma$ for some $\Gamma\ge 1$. 
We then  
propose conservative but still valid inference for quantiles of individual effects under each value of $\Gamma$, and investigate how the inference results vary as $\Gamma$ increases. 
If the effect of interest is significant for a wide range of $\Gamma$, then 
the study 
is considered insensitive to hidden confounding; otherwise, it is sensitive to hidden confounding. 

Furthermore, we demonstrate that the proposed randomization inference and sensitivity analysis for all quantiles of individual effects are simultaneously valid, without the need for any adjustment due to multiple analyses. 
As discussed later, our 
analysis for the maximum {\rev or minimum} individual effects coincides with 
the conventional 
randomization inference and sensitivity analysis 
under the constant treatment effect assumption. 
Therefore, our analysis on all quantiles of individual effects is essentially a costless addition to the conventional analysis, and should thus always be encouraged in practice. 


\section{Framework and Notation}\label{sec:framework}

\subsection{Potential outcomes, individual effects and treatment assignments
}\label{sec:potential_outcome}

We consider an experiment with $S$ strata of $N$ units and two treatment arms, 
where there are $n_s$ units in stratum $s$ 
($1\le s\le S$)
and $N = \sum_{s=1}^S n_s$. 
We 
label the two treatment arms as treatment and control, and 
invoke the potential outcome framework \citep{Neyman23a, Rubin:1974wx}. 
For each unit $i$ in stratum $s$,  
we use $Y_{si}(1)$ and $Y_{si}(0)$ to denote its treatment and control potential outcomes, respectively, 
and $Z_{si}$ to denote its treatment assignment, which equals 1 if the unit receives treatment and 0 otherwise. 
Its individual treatment effect 
is then 
$\tau_{si} =  Y_{si}(1) - Y_{si}(0)$, 
and its 
observed outcome is $Y_{si} = Z_i Y_{si}(1) + (1-Z_i) Y_{si}(0)$. 
We define $\bs{Y}(1) = (Y_{11}(1), \ldots, Y_{Sn_S}(1))^\top$ 
and 
$\bs{Y}(0) = (Y_{11}(0), \ldots, Y_{Sn_S}(0))^\top$ as the treatment and control potential outcome vectors for all units, 
$\bs{\tau} = (\tau_{11}, \ldots, \tau_{Sn_S})^\top$ as the individual treatment effect vector, 
$\bs{Z} = (Z_{11}, \ldots, Z_{Sn_S})^\top$ as the treatment assignment vector, 
and 
$\bs{Y} = 
(Y_{11}, \ldots, Y_{Sn_S})$ as the observed outcome vector. 
Analogously, for each stratum $s$, 
we define $\bs{Y}_s(1), \bs{Y}_s(0), \bs{\tau}_s, \bs{Z}_s$ and $\bs{Y}_s$ to denote vectors of potential outcomes, individual effects, treatment assignments and observed outcomes. 
We further introduce $\tau_{(1)} \le \tau_{(2)} \le \ldots \le \tau_{(N)}$ to denote the individual effects $\tau_{si}$'s 
sorted in an increasing way, 
and 
$n(c) = \sum_{s=1}^S\sum_{i=1}^{n_s} \I(\tau_{si} > c)$ to denote the number of units with treatment effects greater than $c$ for $c\in \mathbb{R}$. 
We will focus on inferring the quantiles of individual effects $\tau_{(k)}$'s 
and the number (or equivalently proportion) of units with effects greater (or smaller) than any threshold.  

Throughout the paper, 
we will conduct finite population inference, also called design-based or randomization-based inference. 
Specifically, all the potential outcomes are viewed as fixed constants or equivalently being conditioned on. 
Therefore, the randomness in the observed data $(\bs{Y}, \bs{Z})$ comes solely from the random treatment assignment $\bs{Z}$, 
whose distribution is often called the treatment assignment mechanism and governs the statistical inference. 
We will focus on a special class of treatment assignment mechanisms, formally defined as follows. 

\begin{definition}\label{def:ERBE}
The experiment is called 
a stratified exchangeable randomized experiment (SERE)
if its treatment assignment mechanism satisfies that 
(i) the assignments are mutually independent across all strata, 
i.e., 
$(\bs{Z}_1, \ldots,\bs{Z}_S)$ are mutually independent; 
(ii) the assignments for units within the same stratum are exchangeable, 
i.e., 
$(Z_{s1}, \ldots, Z_{sn_s}) \sim 
(Z_{s\psi(1)}, \ldots, Z_{s\psi(n_s)})$ 
for any $1\le s \le S$ and any fixed permutation $\psi$ of $\{1,2, \ldots, n_s\}$.  
\end{definition}

If the assignments are mutually independent across all strata, 
and the assignments within each stratum are from either a completely randomized experiment {\rev (CRE), where fixed numbers of units are randomly assigned to treatment and control,}
or a Bernoulli randomized experiment, {\rev where assignments are i.i.d. Bernoulli distributed}, 
then the corresponding stratified experiment is a SERE. 
In particular, we call a stratified experiment as a stratified completely randomized experiment (\textsc{SCRE}) if the assignments are mutually independent across all strata and are from a CRE within each stratum. 

\subsection{Sharp null hypothesis and Fisher randomization test}\label{sec:sharp_null}

To conduct randomization inference for quantiles of individual effects, we will utilize the Fisher randomization test (\textsc{FRT}) for sharp null hypotheses \citep{Fisher35a}. 
The sharp null hypothesis refers to a hypothesis that stipulates all individual treatment effects, 
and has the following general form:
\begin{align}\label{eq:H_delta}
    H_{\bs{\delta}}: \tau_{si} = \delta_{si}, \quad 1\le i \le n_s, \ 1\le s \le S, 
    \ \ \text{or equivalently} \ \ 
    \bs{\tau} = \bs{\delta}, 
\end{align}
where $\bs{\delta} = (\delta_{11}, \ldots, \delta_{Sn_S})^\top \in \mathbb{R}^N$ is a predetermined constant vector. 
When $\bs{\delta}=0$, 
$H_{\bs{\delta}}$ 
is often called Fisher's null of no effect. 
Below we briefly describe the procedure of FRT. 

Under $H_{\bs{\delta}}$ in \eqref{eq:H_delta}, 
we can impute the potential outcomes from the observed data: 
$
\bs{Y}_{\bs{Z}, \bs{\delta}}(1) = \bs{Y} + (\bs{1}-\bs{Z}) \circ \bs{\delta}
$
and 
$
\bs{Y}_{\bs{Z}, \bs{\delta}}(0) = \bs{Y} - \bs{Z} \circ \bs{\delta}, 
$
where $\circ$ denotes element-wise multiplication. 
These \textit{imputed} 
potential outcomes 
are the same as the \textit{true} ones 
if and only if $H_{\bs{\delta}}$ holds. 
Following \citet{Rosenbaum02a}, 
we consider test statistic of the form $t(\bs{Z}, \bs{Y}_{\bs{Z}, \bs{\delta}}(0))$, 
where $t(\bs{z}, \bs{y})$ is a generic function of the treatment assignment vector $\bs{z} \in \{0,1\}^N$ and outcome vector $\bs{y} \in \mathbb{R}^N$. 
When the null $H_{\bs{\delta}}$ holds, the imputed potential outcome $\bs{Y}_{\bs{Z}, \bs{\delta}}(0)$ becomes the same as $\bs{Y}(0)$, 
no longer 
depending on the random treatment assignment. 
Thus, under $H_{\bs{\delta}}$, the randomization distribution of the test statistic $t(\bs{Z}, \bs{Y}_{\bs{Z}, \bs{\delta}}(0))$ is 
\begin{align}\label{eq:tail_prob_tilde}
G_{\bs{Z}, \bs{\delta}} (c) \equiv 
\Pr\left\{
t(\bs{A}, \bs{Y}_{\bs{Z}, \bs{\delta}}(0) )
\ge c
\right\}
= \sum_{\bs{a} \in \{0,1\}^N} \Pr(\bs{A} = \bs{a}) {\rev \I\left \{
t(\bs{a}, \bs{Y}_{\bs{Z},\bs{\delta}}(0))
\ge 
c
\right \},}
\end{align}
where $\bs{A}\sim Z$ is a generic random vector. 
The corresponding randomization $p$-value is 
{\rev \begin{equation}\label{eq:p_val_imp_control}
	p_{\bs{Z}, \bs{\delta}} 
	\equiv  G_{\bs{Z}, \bs{\delta}} \left( t(\bs{Z},\bs{Y}_{\bs{Z}, \bs{\delta}}(0))\right)
	= 
    \sum_{\bs{a}}
    \Pr(\bs{A} = \bs{a}) 
    \I\left \{
	t(\bs{a}, \bs{Y}_{\bs{Z}, \bs{\delta}}(0))
	\ge 
	t(\bs{Z}, \bs{Y}_{\bs{Z}, \bs{\delta}}(0))
	\right \}. 
\end{equation} }


%


\subsection{Stratified rank sum statistic}\label{sec:strat_rank}

The FRT has the advantage of imposing no distributional assumption on the potential outcomes, and uses only the random treatment assignment as the ``reasoned basis'' \citep{Fisher35a}. 
However, FRT can be limited since the sharp null hypothesis stipulates all individual effects and is generally false \textit{a priori} in practice \citep{Hill2002, Gelman13a}. 
Below we introduce a special class of test statistics that can help us 
test 
more general composite null hypotheses.

Specifically, 
we will utilize general stratified rank sum statistics of the following form  \citep[see, e.g.,][]{van1960combination, lehmann1975statistical}:
\begin{align}\label{eq:strat_rank_sum}
    t (\bs{z}, \bs{y}) = \sum_{s=1}^S t_s(\bs{z}_s, \bs{y}_s ) = 
    \sum_{s=1}^S \sum_{i=1}^{n_s} z_{si} \phi_s(\rank_i(\bs{y}_s)), 
\end{align}
where 
$\bs{z}_s$ and $\bs{y}_s$ are subvectors of $\bs{z}$ and $\bs{y}$ corresponding to stratum $s$, 
$\phi_s$ denotes a transformation for the ranks, 
and $\rank_i(\bs{y}_s)$ denotes the rank of the $i$th coordinate of $\bs{y}_s$. We assume index ordering is used to break ties, 
a matter discussed further in \S \ref{sec:ordering}.
When $\phi_s$'s are identity functions, \eqref{eq:strat_rank_sum} reduces to 
the 
stratified Wilcoxon rank sum statistic. 
For descriptive convenience, 
we formally introduce the following class of stratified rank score statistics. 
\begin{definition}\label{def:strat_rank_score}
A statistic $t(\cdot, \cdot)$ is said to be a stratified rank score statistic 
if it has the form in \eqref{eq:strat_rank_sum} with 
increasing 
$\phi_s$'s and uses index ordering for ties assuming 
the order of units has been randomly permuted. 
\end{definition}

Importantly, the stratified rank score statistic is distribution free under the SERE, a key property that will be utilized later.
Moreover, 
using the {\rev \textit{random}} method rather than the common {\rev \textit{average}} method for ties is crucial for this property. 

\begin{proposition}\label{prop:dist_free}
Under the SERE, the stratified rank score statistic $t(\cdot, \cdot)$ is distribution free, in the sense that 
$t(\bs{Z}, \bs{y}) \sim 
t(\bs{Z}, \bs{y}')$ for any $\bs{y}, \bs{y}'\in \mathbb{R}^N$. 
\end{proposition}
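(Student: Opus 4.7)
The plan is to reduce to a per-stratum argument, use the fact that the ranks (under the ``random''/``first'' tie-breaking rule) form a permutation of $\{1,\ldots,n_s\}$ that can be decoupled from the treatment assignments, and then invoke within-stratum exchangeability to show that the distribution of each $t_s(\bs{Z}_s, \bs{y}_s)$ does not depend on $\bs{y}_s$. Independence across strata in the SERE then finishes the argument.

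First I would fix an arbitrary $\bs{y}\in \mathbb{R}^N$ and, stratum by stratum, encode the ranks as a permutation. By Definition~\ref{def:strat_rank_score}, the ``first'' method means we have an auxiliary random permutation $\sigma_s$ of $\{1,\ldots,n_s\}$, drawn independently of $\bs{Z}$, that determines the ordering of tied coordinates; together with $\bs{y}_s$ this defines a (random, but $\bs{Z}$-independent) permutation $\pi_s$ on $\{1,\ldots,n_s\}$ such that $\rank_i(\bs{y}_s) = \pi_s(i)$. Substituting into \eqref{eq:strat_rank_sum} yields
\begin{align*}
t_s(\bs{Z}_s, \bs{y}_s)
= \sum_{i=1}^{n_s} Z_{si}\, \phi_s(\pi_s(i))
= \sum_{j=1}^{n_s} Z_{s,\pi_s^{-1}(j)}\, \phi_s(j).
\end{align*}

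The key step is now to apply within-stratum exchangeability. Condition on $\sigma_s$ (hence on $\pi_s$), which is independent of $\bs{Z}_s$. By part (ii) of Definition~\ref{def:ERBE}, $(Z_{s,\pi_s^{-1}(1)}, \ldots, Z_{s,\pi_s^{-1}(n_s)})$ has the same joint distribution as $(Z_{s1}, \ldots, Z_{sn_s})$, so
\begin{align*}
t_s(\bs{Z}_s, \bs{y}_s) \;\sim\; \sum_{j=1}^{n_s} Z_{sj}\, \phi_s(j),
\end{align*}
and the right-hand side does not depend on $\bs{y}_s$ at all. Hence for any $\bs{y}_s, \bs{y}'_s \in \mathbb{R}^{n_s}$ we have $t_s(\bs{Z}_s,\bs{y}_s) \sim t_s(\bs{Z}_s,\bs{y}'_s)$. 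Finally, since $(\bs{Z}_1,\ldots,\bs{Z}_S)$ are mutually independent by part (i) of Definition~\ref{def:ERBE} (and the auxiliary permutations $\sigma_s$ are also independent across strata and of $\bs{Z}$), the summands $t_s(\bs{Z}_s, \bs{y}_s)$ are mutually independent, so equality in distribution propagates to $t(\bs{Z}, \bs{y}) = \sum_s t_s(\bs{Z}_s, \bs{y}_s)$.

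The only subtle point, and the main thing I would be careful about, is the treatment of ties: if one used the ``average'' method, the rank vector would typically not be a permutation of $\{1,\ldots,n_s\}$ and its value would genuinely depend on the pattern of ties in $\bs{y}_s$, breaking the decoupling step. This is exactly why Definition~\ref{def:strat_rank_score} restricts to the ``random''/``first'' rule, and the proof should explicitly remark on this to motivate the subsequent use of distribution-freeness in forming valid $p$-values.
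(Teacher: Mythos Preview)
Your proof is correct and follows essentially the same approach as the paper: reduce to a single stratum via independence, rewrite $t_s$ by reindexing according to the rank permutation (which exists precisely because the ``first''/``random'' tie rule forces ranks to be a genuine permutation), and apply within-stratum exchangeability of $\bs{Z}_s$. The paper compares $t_s(\bs{Z}_s,\bs{y}_s)$ and $t_s(\bs{Z}_s,\bs{y}'_s)$ directly via two permutations $\pi,\pi'$ rather than passing through the canonical $\sum_j Z_{sj}\phi_s(j)$, and it treats the random ordering as already fixed rather than conditioning on an auxiliary $\sigma_s$, but these are cosmetic differences.
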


Below 
we introduce a special class of statistics satisfying Definition \ref{def:strat_rank_score}, the stratified Stephenson rank sum statistics \citep{Stephenson85a}, 
which 
is of form 
$(\ref{eq:strat_rank_sum})$ with 
\begin{align}\label{eq:stephen_transform}
    \phi_s(r) = 
    \binom{r-1}{h_s-1}
    \text{  if }r > h_s-1, \text{  and }  \phi_s(r)=0 \text{  otherwise, }
    \ (1\le s\le S; 1\le r \le n_s)
\end{align}
for some fixed integers $h_1, \ldots, h_S \ge 2$.  
When $h_1 =  \ldots = h_S = 2$, the statistic is almost equivalent to the stratified Wilcoxon rank sum statistic. 
When $h_s$'s increase, the Stephenson ranks place greater weights on responses with larger ranks, 
making the rank sum statistic more sensitive to larger outcomes as well as larger 
individual treatment effects.
\citet{Conover1988} showed that, under the alternative where the treatment has large effect on a small fraction of the population, 
the Stephenson ranks can lead to the locally most powerful test. 
Such an alternative can be especially relevant in our context for inferring quantiles of individual effects, 
under which we will deliberately remove a certain amount of large effects from the units. 
In the supplementary material, we conduct a simulation study to illustrate the superior performance of the Stephenson rank statistics compared to the Wilcoxon rank statistic.
Relatedly, 
\citet{Rosenbaum2007dramatic} and \citet[][Chapter 16]{Rosenbaum:2010} have  used the Stephenson ranks to weight matched pairs to better detect uncommon but dramatic responses to treatment in observational studies, 
and 
\citet{Rosenbaum2011U, Rosenbaum2014} studied more general rank scores in matched observational studies with multiple controls to improve  the power of a sensitivity analysis. 
 
In this paper, we focus mainly on monotone rank scores as in Definition \ref{def:strat_rank_score}. It is worth noting that these scores do not encompass all the general rank scores discussed in \citet{Rosenbaum2011U, Rosenbaum2014}. 
It will be interesting to further extend the discussion to non-monotone rank scores, which may present additional computation challenges when conducting tests on quantiles of individual effects. We defer this to future research.

\section{Randomization Tests for Quantiles of Individual Treatment Effects}\label{sec:rand_test_quantiles}

\subsection{Composite null hypotheses on quantiles of individual treatment effects}\label{sec:comp_null_quant}
We will focus on the one-sided hypothesis testing with 
alternative hypotheses favoring larger effects, 
and defer the other one-sided and two-sided testing to the end of the paper.
Specifically, we consider the following class of null hypotheses on quantiles of individual effects:
\begin{align}\label{eq:H_Nkc}
    H_{k,c}: \tau_{(k)} \le c 
    \ \Longleftrightarrow \ 
    n(c) \le N-k 
    \ \Longleftrightarrow \ 
    \bs{\tau} \in \mathcal{H}_{k,c}, 
\end{align}
where $1\le k \le N$, $c\in \mathbb{R}$, and $\mathcal{H}_{k,c}$ is defined as 
\begin{align}\label{eq:set_H_Nkc}
    \mathcal{H}_{k,c}
    & \equiv
    \big\{\bs{\delta} \in \mathbb{R}^N: \delta_{(k)} \le c\big\} = 
    \Big\{
    \bs{\delta} \in \mathbb{R}^N: 
    \sum_{i=1}^N \I(\delta_i > c) \le N-k
    \Big\} \subset \mathbb{R}^N,
\end{align}
with $\delta_{(k)}$ denoting the coordinate of $\bs{\delta}$ at rank $k$. 
The equivalence in \eqref{eq:H_Nkc} and 
\eqref{eq:set_H_Nkc} follows 
from definition. 
We further 
define $\delta_{(0)} = -\infty$. Then 
\eqref{eq:H_Nkc} and 
\eqref{eq:set_H_Nkc} still hold 
when $k=0$. 

In contrast to sharp null hypotheses discussed in \S \ref{sec:sharp_null}, 
the hypotheses 
of form \eqref{eq:H_Nkc} are composite, 
i.e., 
some potential outcomes or individual effects may still be unknown 
under 
$H_{k,c}$. 
Thus, the FRT is no longer applicable. 
We can still obtain a valid $p$-value for testing $H_{k,c}$ by maximizing the randomization $p$-value $p_{\bs{Z}, \bs{\delta}}$ in \eqref{eq:H_Nkc} over all $\bs{\delta} \in \mathcal{H}_{k,c}$.
However, 
this is generally not feasible, 
since the optimization 
can be
even NP hard. 
Following \citet{li2020quantile}, we can 
ease 
the optimization by using distribution free test statistics 
in 
Definition \ref{def:strat_rank_score}, as illustrated below.

From Proposition \ref{prop:dist_free}, 
under the SERE and 
for the stratified rank score statistic, 
the imputed randomization distribution in \eqref{eq:tail_prob_tilde} reduces to a distribution that does not depend on the observed treatment assignment $\bs{Z}$ or the hypothesized treatment effect $\bs{\delta}$, i.e., 
\begin{align}\label{eq:G}
    G_{\bs{Z}, \bs{\delta}} (c) \equiv 
    \Pr\left\{
    t(\bs{A}, \bs{Y}_{\bs{Z}, \bs{\delta}}(0) )
    \ge c
    \right\}
    = 
    \Pr\left\{
    t(\bs{A}, \bs{y} )
    \ge c
    \right\}
    \equiv G(c), 
\end{align}
where $\bs{y}$ can be any vector in $\mathbb{R}^N$.
Consequently, 
the valid $p$-value 
for testing $H_{k,c}$ in \eqref{eq:H_Nkc} is 
\begin{align}\label{eq:pNkc}
    p_{k,c} \equiv \sup_{\bs{\delta} \in \mathcal{H}_{k,c}} p_{\bs{Z}, \bs{\delta}}
    & = 
    {\rev\sup_{\bs{\delta} \in \mathcal{H}_{k,c}} G \left( t(\bs{Z},\bs{Y} - \bs{Z}\circ \bs{\delta} )\right)}
    =
     G \big( \inf_{\bs{\delta} \in \mathcal{H}_{k,c}} t(\bs{Z},\bs{Y} - \bs{Z} \circ \bs{\delta}) \big),
\end{align}
where the last equality holds because $G$ is decreasing in $c$ and 
$t(\bs{Z},\bs{Y} - \bs{Z} \circ \bs{\delta})$ can achieve its infimum over $\bs{\delta} \in \mathcal{H}_{k,c}$. 
We summarize the results in the following theorem. 

\begin{theorem}\label{thm:pNkc}
Under the SERE and using a stratified rank score test statistic, 
the $p$-value $p_{k,c} 
$ with equivalent forms in \eqref{eq:pNkc} is valid for testing the 
null hypothesis $H_{k,c}$ in \eqref{eq:H_Nkc}. 
\end{theorem}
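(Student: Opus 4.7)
The plan is to prove $\Pr(p_{k,c}\le \alpha)\le \alpha$ for every $\alpha\in[0,1]$ whenever $H_{k,c}$ holds. The basic idea is to lower-bound $p_{k,c}$ by the randomization $p$-value one would compute at the \emph{true} individual effect vector $\bs{\tau}$, for which the imputation is correct, and then invoke the standard validity of the FRT under the resulting sharp null.

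First I would verify that under $H_{k,c}$ the true effect vector is feasible in the maximization defining $p_{k,c}$. Since $\tau_{(k)}\le c$, the characterization (\ref{eq:set_H_Nkc}) gives $\bs{\tau}\in \mathcal{H}_{k,c}$, and therefore, from the defining expression of $p_{k,c}$ in (\ref{eq:pNkc}),
\begin{equation*}
    p_{k,c} \;=\; \sup_{\bs{\delta}\in \mathcal{H}_{k,c}} p_{\bs{Z}, \bs{\delta}} \;\ge\; p_{\bs{Z}, \bs{\tau}}.
\end{equation*}
At $\bs{\delta}=\bs{\tau}$ the imputed control potential outcomes coincide with the true ones, $\bs{Y}-\bs{Z}\circ \bs{\tau} = \bs{Y}(0)$, and Proposition \ref{prop:dist_free} together with (\ref{eq:G}) collapses $p_{\bs{Z}, \bs{\tau}}$ to $G\bigl(t(\bs{Z}, \bs{Y}(0))\bigr)$; here the only random object, under the finite-population viewpoint, is the assignment $\bs{Z}$.

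Next I would invoke the standard distributional property that $G\bigl(t(\bs{Z}, \bs{y})\bigr)$ is stochastically no smaller than $U[0,1]$ for any fixed $\bs{y}\in \mathbb{R}^N$. Writing $T=t(\bs{Z},\bs{y})$ and noting that $T$ has the same law as $t(\bs{A},\bs{y})$ (since $\bs{A}\sim \bs{Z}$), $G$ is precisely the survival function of $T$, which takes only finitely many values because the stratified rank score statistic has discrete support. Hence for any $\alpha\in[0,1]$ either $G(T)>\alpha$ holds almost surely or $c_\alpha \equiv \min\{c:G(c)\le \alpha\}$ lies in the support of $T$, in which case $\{G(T)\le \alpha\}=\{T\ge c_\alpha\}$ and $\Pr(G(T)\le \alpha)=G(c_\alpha)\le \alpha$. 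Applying this with $\bs{y}=\bs{Y}(0)$ yields $\Pr(p_{\bs{Z},\bs{\tau}}\le \alpha)\le \alpha$, and combining with the lower bound on $p_{k,c}$ closes the argument. The main conceptual hurdle is really just the opening observation---that even though $H_{k,c}$ is composite, maximizing $p_{\bs{Z},\bs{\delta}}$ over a feasible set containing the true $\bs{\tau}$ inherits the sharp-null validity---after which the distribution-free reduction in Proposition \ref{prop:dist_free} makes everything else mechanical.
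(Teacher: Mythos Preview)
Your validity argument is correct and amounts to a self-contained proof of the textbook result the paper simply cites (\citet[][Theorem 8.3.27]{casella2002statistical}). However, the theorem also asserts that the three expressions in \eqref{eq:pNkc} coincide, in particular that
\[
\sup_{\bs{\delta}\in\mathcal{H}_{k,c}} G\bigl(t(\bs{Z},\bs{Y}-\bs{Z}\circ\bs{\delta})\bigr)
\;=\;G\Bigl(\inf_{\bs{\delta}\in\mathcal{H}_{k,c}} t(\bs{Z},\bs{Y}-\bs{Z}\circ\bs{\delta})\Bigr),
\]
and the paper's own proof is devoted almost entirely to this point. Because $G$ is a step function (left-continuous with jumps), this identity requires that the infimum be \emph{attained}; the paper establishes attainment by exhibiting an explicit minimizer via the stratum-wise decomposition in \S\ref{sec:simp_opt} together with \citet[][Theorem 3]{li2020quantile}. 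You do not address attainment at all.

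The gap is easily fillable---the stratified rank score statistic takes only finitely many values as $\bs{\delta}$ ranges over $\mathbb{R}^N$, since within each stratum the ranks are always a permutation of $\{1,\dots,n_s\}$, so the infimum over the nonempty set $\mathcal{H}_{k,c}$ is automatically a minimum---but as written your argument verifies validity only for the first expression in \eqref{eq:pNkc}, not the equivalence that makes the computationally useful last expression equal to it.
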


Theorem \ref{thm:pNkc} provides a valid $p$-value for testing $H_{k,c}$, whose computation involves only minimization of $t(\bs{Z},\bs{Y} - \bs{Z} \circ \bs{\delta})$ over $\bs{\delta} \in \mathcal{H}_{k,c}$. 
Although much simpler than 
that with a general test statistic, 
the optimization 
in \eqref{eq:pNkc} 
can still be challenging, especially
when the strata number $S$ and sizes $n_s$'s are large, 
as discussed shortly. 

\subsection{Simplifying the optimization for the valid $p$-value}\label{sec:simp_opt}

Below we simplify the optimization in \eqref{eq:pNkc} for calculating the valid $p$-value. 
We first introduce some notation. 
Analogous to \eqref{eq:set_H_Nkc}, 
for $1\le s\le S$, $0\le k_s \le n_s$ and $c\in \mathbb{R}$, 
define
\begin{align*}
    \mathcal{H}_{k_s, c}^s = 
    \big\{\bs{\delta} \in \mathbb{R}^{n_s}: \delta_{(k_s)} \le c \big\} = 
    \Big\{
    \bs{\delta} \in \mathbb{R}^{n_s}: 
    \sum_{i=1}^{n_s} \I(\delta_i > c) \le n_s-k_s
    \Big\} \subset \mathbb{R}^{n_s}, 
\end{align*}
whose elements have at most $n_s-k_s$ coordinates greater than $c$. 
Let $\bs{n} = (n_1, n_2, \ldots, n_S)$ and 
$\mathbb{Z}$ be the set of all integers. 
For $0\le k \le N$, we further define 
\begin{align}\label{eq:setKn}
    \mathcal{K}_{\bs{n}}(k) = 
    \Big\{ (k_1, k_2, \ldots, k_S) \in \mathbb{Z}^S: \ \sum_{s=1}^S k_s = k, \text{ and }\  0\le k_s \le n_s, 1\le s \le S
    \Big\}.
\end{align}
We can then decompose the null set $\mathcal{H}_{k_s,c}$ in \eqref{eq:set_H_Nkc} in the following way:
\begin{align*}
    \mathcal{H}_{k,c} 
    & = \bigcup_{ (k_1, \ldots, k_S) \in \mathcal{K}_{\bs{n}}(k)} 
    \big\{
    \bs{\delta}\in \mathbb{R}^N: 
    \bs{\delta}_s \in \mathcal{H}_{k_s, c}^s, \forall s
    \big\}
    = 
    \bigcup_{ (l_1, \ldots, l_S) \in \mathcal{K}_{\bs{n}}(N-k)} 
    \big\{
    \bs{\delta}: 
    \bs{\delta}_s \in \mathcal{H}_{n_s-l_s, c}^s, \forall s
    \big\}. 
\end{align*}
Recall that $\bs{Z}_s, \bs{Y}_s$ and $\bs{\delta}_s$ are the subvectors of $\bs{Z}, \bs{Y}$ and $\bs{\delta}$ corresponding to stratum $s$, and 
$t_s(\cdot, \cdot)$ is the rank sum statistic for stratum $s$. 
Let 
\begin{align}\label{eq:t_s,c}
    t_{s, c}(l_s)
    & \equiv \inf_{\bs{\delta}_s \in \mathcal{H}_{n_s - l_s, c }^s }
    t_s(\bs{Z}_s, \bs{Y}_s - \bs{Z}_s \circ \bs{\delta}_s)
    \quad
    (1\le s\le S, 0 \le l_s \le n_s, c\in \mathbb{R})
\end{align}
denote the infimum of the rank sum statistic for stratum $s$ when at most $l_s$ units within the stratum can have individual effects greater than $c$. 
We can verify that the optimization for the stratified rank score statistic in \eqref{eq:pNkc}
has the following equivalent form: 
\begin{align}\label{eq:equiv_min_test_stat}
    \inf_{\bs{\delta} \in \mathcal{H}_{k,c}} t(\bs{Z},\bs{Y} - \bs{Z}\circ \bs{\delta}) 
    & 
    = 
    \min_{(l_1, \ldots, l_S) \in \mathcal{K}_{\bs{n}}(N-k)} 
    \sum_{s=1}^S 
    t_{s, c}(l_s). 
\end{align}

Importantly, the computation for $t_{s, c}(l_s)$ in \eqref{eq:t_s,c} has a closed-form solution as demonstrated in \citet{li2020quantile}. 
Here we give some intuitive explanation. 
Under $\mathcal{H}_{n_s - l_s, c}^s$, 
among the $n_s$ units in stratum $s$, 
at most $l_s$ of them can have individual effects greater than $c$. 
To minimize $t_s(\bs{Z}_s, \bs{Y}_s - \bs{Z}_s \circ \bs{\delta}_s)$, 
the $l_s$ treated units with the largest observed outcomes (or all treated units if $l_s > \sum_{i=1}^{n_s}Z_{si}$) 
are hypothesized to have positive infinite individual effects, and the remaining units are hypothesized to have effects of size $c$. 
Specifically, 
$t_{s, c}(l_s) = t_s(\bs{Z}_s, \bs{Y}_s - \bs{Z}_s \circ \bs{\xi}_{s,c}(l_s))$, 
where $\bs{\xi}_{s,c}(l_s) = (\xi_{s,c,1}(l_s), \ldots, \xi_{s,c,n_s}(l_s))^\top$, 
and $\xi_{s, c, i}(l_s)$ equals $\infty$
if $Z_{si}=1$ and $\rank_i(\bs{Y}_s)$ is among the $l_s$ largest elements  of  $\{\rank_j(\bs{Y}_s): Z_{sj} = 1, 1\le j \le n_s\}$, 
and $c$ otherwise. 
Here we define $\infty \cdot 0 = 0$. 
In practice, we can replace $\infty$ by any constant greater than the difference between the maximum observed treated 
and minimum observed control outcomes. 


\subsection{Integer linear and linear programmings for computing the valid $p$-value}\label{sec:ILP_LP}

We can verify that, 
with the equivalent form in \eqref{eq:equiv_min_test_stat}, 
the optimization for the valid $p$-value $p_{k,c}$ in \eqref{eq:pNkc} simplifies to an integer linear programming problem: 
\begin{align}\label{eq:integer_program}
    \min 
    \sum_{s=1}^S \sum_{l=0}^{n_{s}} x_{sl} t_{s, c}(l)   
    \text{ subject to } & \sum_{l=0}^{n_{s}} x_{sl} = 1, 
    \sum_{s=1}^S \sum_{l=0}^{n_{s}} x_{sl} l  = N-k, 
    x_{sl} \in \{0, 1\}, 
     \forall l, s
\end{align}
In \eqref{eq:integer_program}, we are essentially using 
the  dummy variables 
$(x_{s1}, \ldots, x_{sn_s})$ 
to represent the choice of $l_s$ for stratum $s$, 
where $l_s$ corresponds to the only index $l$ such that $x_{sl} = 1$. 
We denote the minimum value of the objective function in \eqref{eq:integer_program} by $t_{k,c}$. 
By the definition in \eqref{eq:pNkc}, 
$p_{k,c} = G(t_{k,c})$. 

It is not difficult to see that 
the solution to the integer programming in (13) is identical to the one with the equality constraint $\sum_{s=1}^S \sum_{l=0}^{n_{s}} x_{sl} l = N-k$ replaced by the inequality constraint $\sum_{s=1}^S \sum_{l=0}^{n_{s}} x_{sl} l \le N-k$, 
since $t_{s,c}(l)$ is 
decreasing 
in $l$ for any $s$ and $c$.
The integer programming in \eqref{eq:integer_program} 
is 
therefore 
essentially 
an instance of the
multiple-choice knapsack problem \citep[see, e.g.,][]{kellerer2004multiple}. 
In particular, we can view $w_{sl} \equiv l$ as the weight and $p_{sl} = -t_{s,c}(l)$ as the profit, and we want to choose exactly one item from each stratum such that the total profit is maximized without exceeding the capacity $N-k$ in the corresponding total weight. 
The general multiple-choice knapsack problem is NP-hard \citep{kellerer2004multiple}. 
However, since the weight capacity in our context is always bounded by $N$, the problem in \eqref{eq:integer_program} can be solved in polynomial time; see the next subsection for more details.


It is straightforward to ease the computation in \eqref{eq:integer_program} by relaxing the integer constraints, which then transforms the optimization into the following linear programming problem: 
\begin{align}\label{eq:linear_program}
    \min 
    \sum_{s=1}^S \sum_{l=0}^{n_{sk}} x_{sl}  t_{s, c}(l)  
    \ 
    \text{subject to } & \sum_{l=0}^{n_{sk}} x_{sl} = 1, 
    \sum_{s=1}^S \sum_{l=0}^{n_{sk}} x_{sl} l  = N-k, 
    0 \le x_{sl} \le 1, 
    \forall l, s
\end{align}
where $n_{sk} = \min\{n_s, N-k\}$. 
We denote the minimum value of the objective function in \eqref{eq:linear_program} by $t_{k,c}^{\LP}$, and  
define 
$p^{\LP}_{k,c} \equiv G(t_{k,c}^{\LP})$. 
Because the feasible region for $x_{sl}$'s in \eqref{eq:linear_program} covers that in \eqref{eq:integer_program} and $G(\cdot)$ is a 
decreasing 
function,  
we must have 
$t_{k,c}^{\LP} \le t_{k,c}$ and $p^{\LP}_{k,c} \ge p_{k,c}$. 
Consequently, 
$p^{\LP}_{k,c}$ is also a valid $p$-value for testing 
$H_{k,c}$ in \eqref{eq:H_Nkc}. 
In sum, the $p$-value $p^{\LP}_{k,c}$ eases the computation from integer programming 
to linear programming, 
while still remaining valid for testing 
$H_{k,c}$. 


\subsection{Algorithms for calculating valid $p$-values}\label{sec:algorithms}

Below we discuss several algorithms to solve the integer and linear programming problems in \eqref{eq:integer_program} and  \eqref{eq:linear_program}, both of which can lead to valid $p$-values for testing the null hypothesis $H_{k,c}$ in \eqref{eq:H_Nkc}. 

We first consider the integer programming in \eqref{eq:integer_program}. 
We can use standard software, such as Gurobi optimizer \citep{gurobi}, to solve it. 
Alternatively, we can reformulate it as a piecewise-linear optimization problem, which can also be solved by Gurobi. 
Due to its connection with the multiple-choice knapsack problem, 
we can also use the dynamic programming algorithm proposed by \citet{DUDZINSKI19873}, 
whose computational complexity is at most of order $N(N-k) \le N^2$ for our specific problem. 

We then consider the linear programming in \eqref{eq:linear_program}. 
We can again use the Gurobi optimizer to solve it. 
From the extensive research on multiple-choice knapsack problems, 
we also use the greedy algorithm developed independently by \citet{dyer1984n} and \citet{zemel1984n}, whose computational complexity is $O(N)$. 
Moreover, 
for stratified rank score statistics with concave rank transformations, which include the stratified Wilcoxon rank sum statistic as a special case, the greedy algorithm solves exactly the integer programming in \eqref{eq:integer_program}. 
However, the rank transformations in stratified Stephenson rank sum statistics are generally convex, under which the greedy algorithm can provide only a conservative solution for \eqref{eq:integer_program}. 

For conciseness, we relegate all the details regarding the above algorithms to the supplementary material. 
We have also implemented these algorithms in our R package, 
and conduct simulation studies to compare their computation time, as detailed in the supplementary material. 
We briefly summarize our findings below. 
The greedy algorithm can be orders of magnitude faster than other algorithms and often provides only slightly conservative solutions; it is thus particularly favored for large sample size and concave rank transformations (for which it is exact for \eqref{eq:integer_program}). 
In cases where we are particularly interested in the number of units with effects passing a given threshold $c$, 
the dynamic programming can be preferred since a single run of it can provide multiple $p$-values, which can further lead to confidence sets for $n(c)$ discussed later in \S \ref{sec:inv_test_conf_set}. 
In general cases,  the Gurobi optimization for the integer programming in \eqref{eq:integer_program} has a superior performance compared to other algorithms for \eqref{eq:integer_program}. 

\subsection{The dependence of the $p$-value on the random ranking of ties}\label{sec:ordering}

As discussed in Definition \ref{def:strat_rank_score}, we invoke the {\rev \textit{random}} method to deal with ties, i.e., units with equal outcomes 
are ranked in a random order, in contrast to the usual {\rev\textit{average}} method.  
For example, we can use the {\rev \textit{first}} method 
under which units with equal outcomes 
are ranked based on their indices, assuming the order of units has been randomly permuted independently of the treatment assignment;
see the R function \textit{rank} for 
details
of these ranking methods \citep{Rsoftware}.
Specifically, with the {\rev \textit{first}} method, 
$\rank_i(\bs{y}_s) < \rank_j(\bs{y}_s)$ 
if and only if (a) $y_{si} < y_{sj}$ or (b) $y_{si} = y_{sj}$ and $i < j$. 
Importantly, 
our approach relies crucially on the {\rev \textit{random}} method 
{\rev or equivalently the {\rev \textit{first}} method under a random ordering of units} 
for ranking ties,
in order to ensure the distribution free property of the rank-based test statistics. 
This can be unsatisfactory in practice compared to other deterministic methods for dealing with ties, 
such as the average statistics and scores discussed in \citet[][Pages 132--133]{HAJEK1999rank}, 
since the resulting inference may depend on the ordering of units. 
Fortunately, as we demonstrate below,
the inference will not be too sensitive to the ordering of units, and moreover, 
we can
conveniently calculate both the maximum and minimum of the $p$-value over all possible ordering of units. 

Let $\overline{\rank}(\cdot)$ denote the rank function that rank ties based on the treatment assignments, i.e., $\overline{\rank}_i(y_s)< \overline{\rank}_j(y_s)$ 
for any $s, i,j$ such that $y_{si}=y_{sj}$, $z_{si}=0$ and $z_{sj}=1$. 
Note that the value of the stratified rank sum statistic 
does not depend on how we rank ties within treated or control groups. 
Analogously, we define $\underline{\rank}(\cdot)$ as the rank function that rank ties based on minus treatment assignments, i.e., within ties control units have greater ranks than treated units. 
Define $\underline{p}_{k,c}$ the same as in \eqref{eq:pNkc} but using the rank function $\overline{\rank}(\cdot)$, 
and analogously $\overline{p}_{k,c}$ using the rank function $\underline{\rank}(\cdot)$. 
The theorem below establishes the relation among these $p$-values.  

\begin{theorem}\label{thm:bound_pval}
The $p$-values $p_{k,c}$, $\underline{p}_{k,c}$ and $\overline{p}_{k,c}$ using different methods for ranking ties (or equivalently the {\rev first} method but under different orderings of units) satisfy that:
\begin{enumerate}[label = (\roman*)]
    \item $\underline{p}_{k,c} \le p_{k,c} \le \overline{p}_{k,c}$ for all $0\le k \le N$ and $c\in \mathbb{R}$; 
    \item $\overline{p}_{k,c}\le \underline{p}_{k,c'}$ for all $0 \le k\le N$ and any $c<c'$. 
\end{enumerate}
\end{theorem}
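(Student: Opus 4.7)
Write $t^{(\bar r)}(z,y)$, $t^{(\underline r)}(z,y)$, and $t^{(r)}(z,y)$ for the stratified rank score statistic with ties broken by $\overline{\rank}$, $\underline{\rank}$, and a fixed realization of the random ordering, respectively. By \eqref{eq:pNkc} and \eqref{eq:G}, each of the three $p$-values equals $G$ applied to an infimum of the corresponding statistic over $\bs{\delta} \in \mathcal{H}_{k,c}$ (or $\mathcal{H}_{k,c'}$), and $G$ is non-increasing. Thus every $p$-value inequality in the theorem reduces, upon reversal via $G$, to an opposite inequality between infima of test statistics.

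\textbf{Part (i).} The key pointwise fact is that $\overline{\rank}_i(y_s)$ and $\underline{\rank}_i(y_s)$ literally maximize and minimize treated unit $i$'s rank over all tie-breaking rules, so for any realization of the random ordering, $\underline{\rank}_i(y_s) \le \rank_i(y_s) \le \overline{\rank}_i(y_s)$ within every stratum. Since each $\phi_s$ is non-decreasing, summing $\phi_s(\rank_i(y_s))$ over treated indices yields $t^{(\underline r)}(z,y) \le t^{(r)}(z,y) \le t^{(\bar r)}(z,y)$ for every $z$ and $y$. Applying this with $y = \bs{Y} - \bs{Z}\circ\bs{\delta}$, taking infima over $\bs{\delta} \in \mathcal{H}_{k,c}$ (which preserves the ordering), and then applying $G$ (which reverses it) gives $\underline p_{k,c} \le p_{k,c} \le \overline p_{k,c}$.

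\textbf{Part (ii).} It suffices to show
\[
\inf_{\bs{\delta} \in \mathcal{H}_{k,c}} t^{(\underline r)}(\bs{Z},\bs{Y}-\bs{Z}\circ\bs{\delta}) \;\ge\; \inf_{\bs{\delta} \in \mathcal{H}_{k,c'}} t^{(\bar r)}(\bs{Z},\bs{Y}-\bs{Z}\circ\bs{\delta}).
\]
Let $\bs{\delta}^*$ attain the left-hand infimum; by the closed-form description of $t_{s,c}(l_s)$ given after \eqref{eq:t_s,c}, in each stratum $s$ it sets the $l_s^*$ treated units with the largest observed outcomes to $\infty$ and all remaining coordinates to $c$, with $\sum_s l_s^* = N-k$. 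Construct $\bs{\delta}^{**}$ identically except with $c$ replaced by $c'$; then $\bs{\delta}^{**} \in \mathcal{H}_{k,c'}$ because its $N-k$ infinite coordinates are still the only ones exceeding $c'$. It therefore remains to verify the stratum-wise inequality $t_s^{(\bar r)}(\bs{Z}_s,\bs{Y}_s-\bs{Z}_s\circ\bs{\delta}^{**}_s) \le t_s^{(\underline r)}(\bs{Z}_s,\bs{Y}_s-\bs{Z}_s\circ\bs{\delta}^*_s)$. The $l_s^*$ top-treated units occupy ranks $\{1,\ldots,l_s^*\}$ in both setups, and the relative ordering among remaining treated units depends only on $Y_{si}$ and is identical. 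For any remaining treated unit $i$, a control $j$ is ranked below $i$ iff $Y_{sj} < Y_{si} - c$ under $(c,\underline{\rank})$ (strict, since ties break for controls) versus $Y_{sj} \le Y_{si} - c'$ under $(c',\overline{\rank})$ (weak, since ties break for treated); because $c < c'$, the second condition implies the first, so (weakly) fewer controls sit below $i$ on the right. Each treated rank is therefore weakly smaller on the right, and non-decreasing $\phi_s$ delivers the stratum-wise inequality. Summing over $s$ and applying $G$ yields $\overline p_{k,c} \le \underline p_{k,c'}$.

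\textbf{Main obstacle.} The delicate step is the rank bookkeeping in part (ii): the change $\underline{\rank} \to \overline{\rank}$ by itself elevates treated ranks, while the shift $c \to c'>c$ by itself depresses them via the imputed control values, and only the pairing specified by the theorem makes the strict-versus-weak asymmetry in the tie-resolution rule line up with the gap $c'-c$ so that the net effect is a weak decrease. Arguing rigorously that the set of controls ranked below a remaining treated unit shrinks weakly is essentially the only nontrivial piece of the proof.
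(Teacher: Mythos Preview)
Your Part (i) matches the paper's argument exactly.

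For Part (ii) your proof is correct but takes a more specialized route than the paper. The paper never appeals to the closed-form minimizer: instead it proves a clean pointwise lemma, valid for \emph{any} vector $y$ and any $a>0$, that $\overline d(z,y)\le \underline d(z,y+az)$ (by directly computing $\underline{\rank}_j(y+az)-\overline{\rank}_j(y)=\sum_i(1-z_i)\{\I(y_i<y_j+a)-\I(y_i\le y_j)\}\ge 0$ for treated $j$). It then applies this with the uniform shift $\bs\delta'=\bs\delta+(c'-c)\bs 1_N$, which lands any $\bs\delta\in\mathcal H_{k,c}$ in $\mathcal H_{k,c'}$, and takes infima. Your argument is essentially the specialization of this lemma to the particular $\bs\delta^*$ with the $\infty/c$ structure; your rank-counting (``$Y_{sj}\le Y_{si}-c'\Rightarrow Y_{sj}<Y_{si}-c$'') is exactly the same strict-versus-weak bookkeeping the paper's lemma encodes. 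What the paper's approach buys is independence from the minimizer's form (so no need to check that the closed form after \eqref{eq:t_s,c} still applies under $\underline{\rank}$, which it does but you use without comment), and it transfers verbatim to the LP relaxation in Remark~\ref{rmk:bound_p_lp} because it yields stratum-wise inequalities $\overline t_{s,c'}(l)\le \underline t_{s,c}(l)$ for every $l$. Your route gets the same conclusion, but leans on the structure of the optimizer rather than on a statement that holds uniformly over feasible $\bs\delta$.
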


Theorem \ref{thm:bound_pval} has several implications. 
First, $\underline{p}_{k,c}$ and $\overline{p}_{k,c}$ give deterministic lower and upper bounds of the $p$-value $p_{k,c}$ that depends on the random ordering of units, and these bounds are sharp in the sense that they can be achieved when it happens all control units are ordered before or after treated units. 
Besides, $\overline{p}_{k,c}$ must also be a valid $p$-value for testing $H_{k,c}$, which no longer depends on the ordering of units. 
Second, using the upper bound $\overline{p}_{k,c}$ instead of the original $p_{k,c}$ will lead to little power loss. 
Specifically, 
as discussed later in Remark \ref{rmk:ci_order}, test inversion using either of these three $p$-values will lead to almost the same confidence sets for quantiles of individual effects. 
Finally, in practice, we can also use Theorem \ref{thm:bound_pval} to check sensitivity of our inference results to the ordering of units. 

\begin{remark}\label{rmk:bound_p_lp}
Theorem \ref{thm:bound_pval} also holds when we consider the $p$-value $p_{k,c}^{\LP}$ from the relaxed linear programming in \eqref{eq:linear_program}. 
\end{remark}


\section{Simultaneous Confidence Sets for All Quantiles of Individual Effects}\label{sec:inv_test_conf_set}


From Theorem \ref{thm:pNkc}, $p_{k,c}$ in \eqref{eq:pNkc} is a valid $p$-value for testing $H_{k,c}$ in \eqref{eq:H_Nkc}, which states that the individual effect at rank $k$ is at most $c$ or equivalently the number of units with effects greater than $c$ is at most $N-k$. 
By inverting the tests 
over $c$ and $k$, we can then obtain confidence sets for $\tau_{(k)}$'s and $n(c)$'s. 
Because $p_{k,c}$ is monotone in $k$ and $c$, 
these confidence sets 
have simpler forms. 
More importantly, these confidence sets for $\tau_{(k)}$'s with $1\le k \le N$ and $n(c)$'s with $c\in \mathbb{R}$ are simultaneously valid, 
in the sense that no adjustment due to multiple analyses is needed. 
We summarize the results in the following theorem. 

\begin{theorem}\label{thm:conf_set}
Under the SERE and using a stratified rank score test statistic, 
the 
randomization 
$p$-value 
$p_{k,c}$ 
is 
increasing in $c$ and decreasing in $k$. 
Moreover, for any $\alpha\in (0, 1)$, 
\begin{enumerate}[label=(\roman*)]
    \item $\mathcal{I}_{\alpha}(k) \equiv \{c: p_{k,c} > \alpha, c \in \mathbb{R}\}$ is a $1-\alpha$ confidence set for $\tau_{(k)}$, and it must be an interval of the form $(\underline{c}, \infty)$ or $[\underline{c}, \infty)$, with $\underline{c} = \inf\{c: p_{k,c} > \alpha \}$; 
    \item $\mathcal{S}_{\alpha}(c) \equiv \{N-k: p_{k,c} > \alpha, 0 \le k \le N \}$ is a $1-\alpha$ confidence set for $n(c)$, and it must have the form of $\{j: N-\overline{k} \le j \le N\}$, 
    with 
    $\overline{k} = \sup\{k: p_{k,c} > \alpha \}$;
    \item the intersection of all $1-\alpha$ confidence sets for $\tau_{(k)}$'s, viewed as a confidence set for $\bs{\tau}$, is the same as that 
    for $n(c)$'s, 
    and they have the following equivalent forms:
    \begin{align*}
        \big\{ \bs{\delta}: \delta_{(k)} \in \mathcal{I}_{\alpha}(k), 1\le k \le N \big\}
        & = 
        \big\{
        \bs{\delta}: 
        \sum_{i=1}^N \I(\delta_i > c) \in \mathcal{S}_{\alpha}(c), c \in \mathbb{R}
        \big\}
        =  
        \bigcap_{k, c: p_{k,c} \le \alpha} 
        \mathcal{H}_{k,c}^{\complement},
    \end{align*}
    where $\mathcal{H}_{k,c}^{\complement}$ denotes the complement of $\mathcal{H}_{k,c}$.
    Moreover, the resulting confidence set covers the true individual treatment effect vector $\bs{\tau}$ with probability at least $1-\alpha$, i.e., 
    $
        \Pr(
        \bs{\tau} \in \bigcap_{k, c: p_{k,c} \le \alpha} 
        \mathcal{H}_{k,c}^{\c}
        ) 
        \ge 1-\alpha. 
    $
\end{enumerate}
\end{theorem}



\begin{remark}\label{rmk:conf_set_plus}
Theorem \ref{thm:conf_set} also holds for the easier-to-calculate $p$-value $p_{k,c}^{\LP}$. 
\end{remark}

We can invert the randomization tests for all possible sharp null hypotheses to get confidence sets for the individual treatment effect vector $\tau$. 
However, this is generally computationally infeasible, and the resulting confidence set can be impractical, since it is $N$-dimensional \citep{Rosenbaum:2001}; see, e.g., \citet{Rigdon:2015aa} and \citet{Li:2016tw} for exceptions in the case of binary outcomes.  
To overcome the difficulty, 
\citet{Rosenbaum:2001, Rosenbaum2002attribute} proposed tests that have identical $p$-values for sharp null hypotheses that share the same ``attributable effect'', e.g., total effects on treated units with binary outcomes and displacement effects on treated units with ordered outcomes. Inverting tests can then provide prediction sets for the attributable effect.

Theorem \ref{thm:conf_set}(iii) also provides an $N$-dimensional confidence set for the individual treatment effect vector $\tau$. 
Importantly, it can be efficiently computed and easily visualized. 
In practice, we can plot $k$ against the lower confidence limit for the individual effect at rank $k$ for $1\le k \le N$. 
By construction, for any $c\in \mathbb{R}$, 
we can then count the number of confidence intervals for $\tau_{(k)}$'s that do not cover $c$ to get 
the lower confidence limit for $n(c)$. 
More importantly, 
these confidence sets for $\tau_{(k)}$'s and $n(c)$'s are simultaneously valid.

 Our approach is similar to that in \citet{Rosenbaum:2001, Rosenbaum2002attribute}, in the sense that they both invert randomization tests with carefully designed test statistics to obtain confidence sets for $\tau$. 
Specifically, \citet{Rosenbaum:2001, Rosenbaum2002attribute} used test statistics that can lead to the same $p$-value for a class of sharp null hypotheses, 
while we use the rank statistics to efficiently compute the supremum of the $p$-value over a class of sharp null hypotheses, 
such as those satisfying \eqref{eq:H_Nkc}. 

\begin{remark}\label{rmk:ci_order}
 
From Theorem \ref{thm:bound_pval}, the confidence interval $\mathcal{I}_{\alpha}(k)$ for $\tau_{(k)}$ using either one of $p_{k,c}$, $\overline{p}_{k,c}$ and $\underline{p}_{k,c}$ will be the same, except for inclusion 
of the lower boundary.
This also holds when we consider the $p$-value $p_{k,c}^{\LP}$ from the relaxed linear programming.
\end{remark}

\begin{remark}\label{rmk:conven_rand}
When $k = N$, the $p$-values for testing the null hypothesis of $\tau_{(N)}\le c$ on the maximum individual effect satisfy that 
$p_{N,c} = p_{N,c}^{\LP} = p_{\bs{Z}, c\bs{1}_N}$, which reduces to the usual $p$-value for testing the null hypothesis of constant treatment effect $c$. 
Thus, our inference for the maximum individual effect coincides with the conventional 
randomization 
inference assuming constant treatment effects. 
From Theorem \ref{thm:conf_set} and Remark \ref{rmk:conf_set_plus}, 
our inference on all quantiles of individual effects is actually a costless addition to the conventional randomization inference. 
\end{remark}



\begin{remark}\label{rmk:switch}
The confidence sets 
in Theorems \ref{thm:conf_set} and Remark \ref{rmk:conf_set_plus} are generally less informative when treated group has a smaller size than the control group. 
This asymmetry between treated and control group sizes is actually not surprising, 
since the $p$-value in \eqref{eq:p_val_imp_control} uses only the imputed control potential outcomes. 
In practice, when the control group is expected to have a larger size,  
we prefer to use the imputed treatment potential outcomes for FRT and use the corresponding $p$-values for testing null hypotheses on effect quantiles. 
This can be easily achieved by 
switching treatment labels and changing signs of observed outcomes. 
Moreover, 
we can perform the 
label switching and sign change
for each stratum separately.  
\end{remark}



\section{Sensitivity Analysis for Quantiles of Individual Treatment Effects}\label{sec:sen_ana} 

\subsection{Matched observational studies and sensitivity analysis}

In 
\S 
\ref{sec:rand_test_quantiles}--\ref{sec:inv_test_conf_set} we studied randomization tests for null hypotheses on quantiles of individual effects $\tau_{(k)}$'s as well as numbers of units $n(c)$'s with effects greater than any threshold.  
In this section, we will focus on matched observational studies, which have \textit{unknown} treatment assignment mechanisms due to 
unmeasured confounding. 
Specifically, 
in a matched observational study, 
units within the same stratum may have different {\rev and unknown} propensity scores.  
This obviously violates Definition \ref{def:ERBE} and renders the randomization inference in 
\S 
\ref{sec:rand_test_quantiles}--\ref{sec:inv_test_conf_set} not directly applicable. 




In the following, 
we will invoke \citet{Rosenbaum02a}'s sensitivity analysis framework to investigate to what extent our inference on $\tau_{(k)}$'s and $n(c)$'s is robust to 
unmeasured confounding. 
Specifically, 
we will conduct tests and construct confidence sets for $\tau_{(k)}$'s and $n(c)$'s under a sensitivity model that allows certain amounts of biases in the treatment assignment.  
Below we first introduce Rosenbaum's sensitivity model, 
and then demonstrate how we can utilize the stratified rank score statistic to conduct 
sensitivity analysis for 
quantiles of individual treatment effects. 

We consider a matched observational study with $S$ matched sets, 
where each set $s$ contains 1 treated unit and $n_s-1$ control units; 
see Remark \ref{rmk:switch_sen} for extension to multiple treated or control units from, say, full matching \citep{Rosenbaumfullmatch1991}.
We adopt the same notation in \S \ref{sec:potential_outcome}
to denote the potential outcomes, individual effects, 
observed outcomes and treatment assignments for the $N=\sum_{s=1}^S n_s$ units, where each matched set can be 
viewed as a stratum. 
When matching is exact and takes into account all the confounding,  
units within the same matched set have the same propensity score, 
the corresponding treatment assignment mechanism reduces to a SCRE,  
and thus the randomization inference in \S \ref{sec:rand_test_quantiles}--\ref{sec:inv_test_conf_set} provides valid inference for the treatment effect vector $\bs{\tau}$. 
However, due to inexact matching and more importantly unmeasured confounding, 
the propensity scores for units within the same matched set are generally different. 
Following \citet[][]{Rosenbaum02a}, 
we assume that the odds ratio of the propensity scores for any two units within the same matched set is bounded between $1/\Gamma$ and $\Gamma$ for some constant $\Gamma\ge 1$, i.e., 
\begin{align}\label{eq:Gamma}
    \frac{1}{\Gamma} \le \frac{\pi_{si}/(1-\pi_{si})}{\pi_{sj}/(1-\pi_{sj})} \le \Gamma, 
    \quad (1\le i,j \le n_s, 1 \le s \le S)
\end{align}
where $\pi_{si}$ and $\pi_{sj}$ denote the propensity scores of units $i$ and $j$ in matched set $s$. 
When $\Gamma = 1$, all units within the same matched set have the same propensity score; 
when $\Gamma >  1$, the units can have different propensity scores but their difference is constrained by $\Gamma$ 
as in \eqref{eq:Gamma}. 
The sensitivity analysis investigates how our inference changes as $\Gamma$ changes, and to what extent in terms of 
$\Gamma$ our treatment effect of interest is still significant. 

Let 
$\mathcal{Z}=\{\bs{z}: \sum_{i=1}^{n_s} z_{si} = 1, 1\le s\le S\}$ be the set of all assignments such that there is exactly one 
treated
unit 
within each matched set. 
Then under the constraint \eqref{eq:Gamma} and conditional on $\bs{Z} \in \mathcal{Z}$, 
the assignment mechanism has the 
form in \eqref{eq:dist_Z_gamma}, 
as shown in \citet{Rosenbaum02a}. 
For convenience, we call it a sensitivity model with bias at most $\Gamma$. 
\begin{definition}[Sensitivity model]\label{def:sen_model}
The treatment assignment mechanism is said to follow a sensitivity model with bias at most $\Gamma$, 
if 
it has the following form: 
\begin{align}\label{eq:dist_Z_gamma}
    \Pr_{\bs{u}, \Gamma}(\bs{Z} = \bs{z}) =
    \prod_{s=1}^S
    \frac{\exp(\gamma \sum_{i=1}^{n_s} z_{si} u_{si})}{
    \sum_{i=1}^{n_s} \exp(\gamma u_{si})} 
    \cdot 
    \I(\bs{z} \in \mathcal{Z})
\end{align}
for $\gamma = \log(\Gamma)$ and some (unknown) $\{u_{si}: 1\le i \le n_s, 1\le s\le S\} \in \mathcal{U} \equiv [0, 1]^N$. 
\end{definition}

Obviously, 
the sensitivity model with bias at most $\Gamma=1$ reduces to a SCRE, under which the randomization inference in \S \ref{sec:rand_test_quantiles}--\ref{sec:inv_test_conf_set} is valid. 
For general sensitivity models with $\Gamma> 1$, 
units within each matched set can have different propensity scores, 
and the difference comes from the 
difference in $u_{si}$'s, which can be viewed 
hidden confounding. 
Below we 
study how to 
conduct valid test for 
$H_{k,c}$ in \eqref{eq:H_Nkc} under the sensitivity model. 

\subsection{Sensitivity analysis for quantiles of individual treatment effects}\label{sec:sen_quantile}

We first consider testing sharp null hypotheses using the stratified rank score statistic 
under a sensitivity model. 
Under 
$H_{\bs{\delta}}$ in \eqref{eq:H_delta}, 
the imputed control potential outcome is the same as the true one, 
and 
the tail probability of the randomization distribution of the test statistic under 
a sensitivity model with bias at most $\Gamma$ 
has the following 
forms: 
\begin{align}\label{eq:tail_prob_sen}
G_{\bs{Z}, \bs{\delta}, \bs{u}, \Gamma} (c) 
= 
\sum_{\bs{a} \in \mathcal{Z}} 
\prod_{s=1}^S
    \frac{\exp(\gamma \sum_{i=1}^{n_s} a_{si} u_{si})}{
    \sum_{i=1}^{n_s} \exp(\gamma u_{si})}
\I\left\{
t(\bs{a}, \bs{Y}_{\bs{Z},\bs{\delta}}(0))
\ge 
c
\right\}, 
\end{align}
where 
$\gamma = \log(\Gamma)$ and $\bs{u} = \{u_{si}: 1\le i \le n_s, 1\le s\le S\} \in \mathcal{U}$.  
The resulting $p$-value is then
\begin{align}\label{eq:p_Z_delta_u}
    p_{\bs{Z}, \bs{\delta}, \bs{u}, \Gamma} \equiv G_{\bs{Z}, \bs{\delta}, \bs{u}, \Gamma} \left( t(\bs{Z}, \bs{Y}_{\bs{Z}, \bs{\delta}}(0)) \right)  
\end{align}
Since $\bs{u}$ is unknown, 
this $p$-value 
is not calculable. 
\citet{Rosenbaum02a} proposed to take supremum of the $p$-value over 
$\bs{u}\in \mathcal{U}$ 
to ensure 
its validity for testing
the sharp null $H_{\bs{\delta}}$.

We then 
consider testing for the composite null $H_{k,c}$ in \eqref{eq:H_Nkc} under a sensitivity model.  
To ensure the validity of the test, 
we take the supremum of the $p$-value in \eqref{eq:p_Z_delta_u} 
over both $\bs{\delta} \in \mathcal{H}_{k,c}$ and $\bs{u} \in \mathcal{U}$. 
Unlike the SCRE, 
for general sensitivity models with $\Gamma > 1$, 
the stratified rank score statistic is no longer distribution free, and the imputed tail probability in \eqref{eq:tail_prob_sen} 
generally 
depends on both the imputed potential outcome $\bs{Y}_{\bs{Z},\bs{\delta}}(0)$ and the hidden confounding $\bs{u}$. 
Fortunately, 
as demonstrated in the supplementary material, 
with the stratified rank score statistic, 
the supremum of the tail probability in \eqref{eq:tail_prob_sen} over $\bs{u} \in \mathcal{U}$ does not depend on the imputed potential outcome $Y_{\bs{Z}, \bs{\delta}}(0)$, 
and it can be achieved at some $\bs{u} \in \mathcal{U}$. 
Intuitively, 
the distribution free property holds when we consider the worst-case scenario. 
We then define
\begin{equation}\label{eq:G_Gamma}
    G_{\Gamma}(c) 
    = 
    \max_{\bs{u} \in \mathcal{U}} 
    G_{\bs{Z}, \bs{\delta}, \bs{u}, \Gamma} (c) 
    = 
    \max_{\bs{u} \in \mathcal{U}} 
    \sum_{\bs{a} \in \mathcal{Z}} 
    \prod_{s=1}^S
    \frac{\exp(\gamma \sum_{i=1}^{n_s} a_{si} u_{si})}{
    \sum_{i=1}^{n_s} \exp(\gamma u_{si})}
    \I\left\{
    t( \bs{a}, \bs{y} )
    \ge 
    c
    \right\}, 
\end{equation}
where $\bs{y}\in \mathbb{R}^N$ can be any constant vector.
We can then 
simplify the supremum of the $p$-value
in \eqref{eq:p_Z_delta_u}
over 
all 
sensitivity models with bias at most $\Gamma$ 
and 
all $\bs{\delta}\in \mathcal{H}_{k,c}$. 

\begin{theorem}\label{thm:sen_ana}
Under the sensitivity model with bias at most $\Gamma$ 
as in Definition \ref{def:sen_model}, 
\begin{equation*}
    p_{k,c, \Gamma}  \equiv \sup_{\bs{\delta} \in \mathcal{H}_{k,c}, \bs{u} \in \mathcal{U}} p_{\bs{Z}, \bs{\delta}, \bs{u}, \Gamma} 
    = G_\Gamma \Big( \inf_{\bs{\delta} \in \mathcal{H}_{k,c}} t(\bs{Z}, \bs{Y}_{\bs{Z}, \bs{\delta}}(0)) \Big)
\end{equation*}
is a valid $p$-value 
for testing $H_{k,c}$ in \eqref{eq:H_Nkc}, 
where $G_{\Gamma}(\cdot)$ is defined in \eqref{eq:G_Gamma}. 
\end{theorem}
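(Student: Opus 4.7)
The plan is to decompose the theorem into two parts: first I would establish the identity relating the double supremum to $G_\Gamma$ evaluated at the infimum, and second I would verify that the resulting quantity is indeed a valid $p$-value under $H_{k,c}$. Both pieces closely mirror the proof of Theorem \ref{thm:pNkc} for the SERE case, with the distribution-free tail function $G$ there replaced by its worst-case counterpart $G_\Gamma$.

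For the identity, I would first fix $\bs{\delta}$ and take the supremum over $\bs{u}$. The key ingredient, cited from the supplementary material just before the theorem, is that for a stratified rank score statistic $\sup_{\bs{u} \in \mathcal{U}} G_{\bs{Z}, \bs{\delta}, \bs{u}, \Gamma}(c) = G_\Gamma(c)$ depends only on $c$. This removes any dependence on $\bs{Y}_{\bs{Z},\bs{\delta}}(0)$ inside the tail probability and gives
\[
\sup_{\bs{u} \in \mathcal{U}} p_{\bs{Z}, \bs{\delta}, \bs{u}, \Gamma} = G_\Gamma\bigl(t(\bs{Z}, \bs{Y}_{\bs{Z}, \bs{\delta}}(0))\bigr).
\]
Then, taking the supremum over $\bs{\delta} \in \mathcal{H}_{k,c}$ and using that $G_\Gamma$ is non-increasing (as a tail probability) together with the fact that the infimum of $t(\bs{Z}, \bs{Y} - \bs{Z}\circ\bs{\delta})$ is attained on $\mathcal{H}_{k,c}$ (established in \S \ref{sec:simp_opt}), the outer sup passes through $G_\Gamma$ to become an inf inside the argument, which is the claimed right-hand side. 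This is the exact analogue of the derivation of \eqref{eq:pNkc}.

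For validity, let $\bs{\tau}$ denote the true individual-effect vector and $\bs{u}^{\ast} \in \mathcal{U}$ the unknown nuisance parameter that actually governs the assignment distribution $\Pr_{\bs{u}^{\ast}, \Gamma}$. Under $H_{k,c}$, $\bs{\tau} \in \mathcal{H}_{k,c}$, so $(\bs{\tau}, \bs{u}^{\ast})$ is feasible in the double supremum and hence $p_{k,c,\Gamma} \ge p_{\bs{Z}, \bs{\tau}, \bs{u}^{\ast}, \Gamma}$. The latter is an exact FRT $p$-value for the sharp null $H_{\bs{\tau}}$: under $H_{\bs{\tau}}$ the imputed vector $\bs{Y}_{\bs{Z},\bs{\tau}}(0)$ equals the fixed $\bs{Y}(0)$, so $G_{\bs{Z},\bs{\tau},\bs{u}^{\ast},\Gamma}(\cdot)$ is the true survival function of $t(\bs{Z},\bs{Y}(0))$ under $\Pr_{\bs{u}^{\ast},\Gamma}$, and the standard argument that a survival function evaluated at its own random argument is stochastically at least uniform yields $\Pr(p_{\bs{Z},\bs{\tau},\bs{u}^{\ast},\Gamma} \le \alpha) \le \alpha$. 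Chaining gives $\Pr(p_{k,c,\Gamma} \le \alpha) \le \alpha$, proving validity.

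The main obstacle is the worst-case distribution-free lemma $\sup_{\bs{u}} G_{\bs{Z},\bs{\delta},\bs{u},\Gamma}(c) = G_\Gamma(c)$: once the supremum over $\bs{u}$ is taken, the imputed outcome vector must drop out entirely. The natural route, which is where essentially all the real work lies, is to show that the maximizing $\bs{u}$ can be taken to place mass $1$ on the units with the largest within-stratum ranks of $\bs{Y}_{\bs{Z},\bs{\delta}}(0)$ and $0$ on the others, so that the maximized probability depends on $\bs{Y}_{\bs{Z},\bs{\delta}}(0)$ only through within-stratum ranks and hence (by the same argument underlying Proposition \ref{prop:dist_free}) not at all. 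Granting this lemma from the supplementary material, everything else is a reassembly of ingredients already established for the SERE case.
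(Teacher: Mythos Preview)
Your proposal is correct and follows essentially the same route as the paper: take the supremum over $\bs{u}$ first using the worst-case distribution-free property \eqref{eq:G_Gamma}, then pass the supremum over $\bs{\delta}$ through the non-increasing $G_\Gamma$ using attainment of the infimum (from \S\ref{sec:simp_opt}), and finally verify validity by bounding $p_{k,c,\Gamma}$ below by the oracle FRT $p$-value $p_{\bs{Z},\bs{\tau},\bs{u}^\ast,\Gamma}$. The only discrepancy is your sketch of how \eqref{eq:G_Gamma} is established: the paper does not argue that the maximizing $\bs{u}$ concentrates on the largest-rank units, but instead uses a permutation argument to rewrite $G_{\bs{Z},\bs{\delta},\bs{u},\Gamma}(c)$ against a fixed reference vector $\bs{y}$ with a permuted $\tilde{\bs{u}}$, so that the supremum over $\mathcal{U}$ is automatically independent of $\bs{Y}_{\bs{Z},\bs{\delta}}(0)$; since you explicitly grant this lemma, the discrepancy does not affect the correctness of your proof.
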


From \S\ref{sec:rand_test_quantiles}, 
we can find 
the minimum value of the stratified rank score statistic, 
$\inf_{\bs{\delta} \in \mathcal{H}_{k,c}} t(\bs{Z}, \bs{Y}_{\bs{Z}, \bs{\delta}}(0))$, 
by solving the integer programming in \eqref{eq:integer_program}, 
and can find its lower bound by solving the linear programming in \eqref{eq:linear_program}\footnote{With one treated unit per matched set,
the minimization of the test statistic is simple and 
the solution from the 
linear programming
will be exact. 
However, 
this is generally not true after we perform label switching; 
see Remark \ref{rmk:switch_sen} for details. 
}. 
However, for a general matched observational study, 
achieving the exact upper bound $G_\Gamma(\cdot)$ in \eqref{eq:G_Gamma} is challenging both analytically and computationally, except when matched sets are all pairs. 
In the supplementary material, we give the form of $G_\Gamma(\cdot)$ for matched pair studies and construct a finite-sample conservative upper bound of $G_\Gamma(\cdot)$ for general matched studies. 
Below 
we focus on the large-sample approximation of $G_\Gamma(\cdot)$.

\subsection{Large-sample sensitivity analysis}\label{sec:large_sen}

In this subsection we construct a large-sample approximation of $G_{\Gamma}(\cdot)$ 
that can provide asymptotically valid $p$-value 
for testing the null hypothesis $H_{k,c}$ in \eqref{eq:H_Nkc} under the sensitivity model in Definition \ref{def:sen_model}. 
The construction relies crucially on the large-sample Gaussian approximation for the stratified rank score statistic. 
Under the sensitivity model with bias at most $\Gamma$ and unmeasured confounding $\bs{u}$, 
for any $\bs{y} \in \mathbb{R}^N$, 
the 
rank score statistic $t_s(\bs{Z}_s, \bs{y}_s)$ in \eqref{eq:strat_rank_sum} for each matched set $s$ 
has mean and variance: 
\begin{align*}
    \mu_{s, \Gamma, \bs{u}_s}(\bs{y}_s)
    = 
    \frac{
    \sum_{i=1}^{n_s} 
    e^{\gamma u_{si}}
    \phi_s(\rank_i(\bs{y}_s))
    }{
    \sum_{i=1}^{n_s} 
    e^{\gamma u_{si}}
    }, 
    \  
    v_{s, \Gamma, \bs{u}_s}^2(\bs{y}_s)
    = 
    \frac{
    \sum_{i=1}^{n_s} 
    e^{\gamma u_{si}}
    \phi_s^2(\rank_i(\bs{y}_s))
    }{
    \sum_{i=1}^{n_s} 
    e^{\gamma u_{si}}
    } - \mu_{s, \Gamma, \bs{u}_s}^2(\bs{y}_s), 
\end{align*}
where $\bs{u}_s = (u_{s1}, \ldots, u_{sn_s})$. 
By the mutual independence of treatment assignments across all matched sets, 
the stratified rank score statistic $t(\bs{Z}, \bs{y})$ has mean 
$\mu_{\Gamma, \bs{u}}(\bs{y}) = \sum_{s=1}^S \mu_{s, \Gamma, \bs{u}_s}(\bs{y}_s)$ and variance $\sigma^2_{\Gamma, \bs{u}}(\bs{y}) = \sum_{s=1}^S v_{s, \Gamma, \bs{u}_s}^2(\bs{y}_s)$. 
Moreover, as demonstrated in the supplementary material, 
$t(\bs{Z}, \bs{y})$ is asymptotically Gaussian with mean $\mu_{\Gamma, \bs{u}}(\bs{y})$ and variance $v^2_{\Gamma, \bs{u}}(\bs{y})$ under the following regularity condition 
as $S\rightarrow \infty$. 
Define $R_s = \phi_s(n_s) - \phi_s(1)$ for all $s$.  

\begin{condition}\label{cond:match_set_clt}
As $S\rightarrow \infty$, 
$
\max_{1\le s\le S} R_s^2/( \sum_{s=1}^S R_s^2/n_s^3 ) \rightarrow 0. 
$
\end{condition}

Below we give some intuition for Condition \ref{cond:match_set_clt}. 
We consider the case in which all matched sets have bounded sizes, 
i.e., $\max_{1\le s\le S} n_s\le \overline{n}$ for some finite constant $\overline{n}$. 
Suppose the transformation functions $\phi_s$'s are chosen such that, for all $s$, 
$\underline{c} \le R_s 
\le \overline{c}$ for some positive constants $\underline{c}$ and $\overline{c}$. 
For example, 
$\underline{c} = 1$ and $\overline{c} = \overline{n}-1$ when $\phi_s$'s are identity functions. 
In this case, 
$
\max_{1\le s\le S} R_s^2/\sum_{s=1}^S R_s^2/n_s^3 \le (\overline{c}/\underline{c})^2 \cdot \overline{n}^3/S, 
$
and thus 
Condition \ref{cond:match_set_clt} must hold.

Intuitively, 
to maximize the tail probability of the Gaussian approximation of the stratified rank score statistic at values 
no less than 
its maximum possible mean, 
we want to maximize both the mean $\mu_{\Gamma, \bs{u}}(\bs{y})$ and variance $\sigma_{\Gamma, \bs{u}}^2(\bs{y})$. 
Moreover, we want to first maximize the mean and then maximize the variance given that the mean is maximized. 
This is because
the mean is usually much larger than the standard deviation in magnitude. Roughly speaking, 
as $S\rightarrow \infty$, 
if the mean and variance of the rank score statistic for each set is of constant order, then 
$\mu_{\Gamma, \bs{u}}(\bs{y})$ is of order $S$, while $\sigma_{\Gamma, \bs{u}}(\bs{y})$ is of order $\sqrt{S}$. 
From \citet{Rosenbaum2000sep}, 
such an optimization for $\mu_{\Gamma, \bs{u}}(\bs{y})$ and $\sigma_{\Gamma, \bs{u}}^2(\bs{y})$ can be separated into optimization for $\mu_{s, \Gamma, \bs{u}_s}(\bs{y}_s)$ and $v_{s, \Gamma, \bs{u}_s}(\bs{y}_s)$ within each matched set, which can be solved efficiently. 
Furthermore, 
the maximized mean and variance of the stratified rank score statistic will no longer depend on the value of $\bs{y}$. That is, we are able to define 
\begin{align}\label{eq:mu_v2_Gamma}
    \tilde{\mu}_{\Gamma} \equiv \max_{\bs{u}\in \mathcal{U}} \mu_{\Gamma, \bs{u}}(\bs{y}) = \sum_{s=1}^S \tilde{\mu}_{s, \Gamma}
    \quad \text{and} \quad
    \tilde{\sigma}_{\Gamma}^2 \equiv \max_{\bs{u}\in \mathcal{U}: \mu_{\Gamma, \bs{u}}(\bs{y}) = \tilde{\mu}_{\Gamma}} \sigma^2_{\Gamma, \bs{u}}(\bs{y})
    = \sum_{s=1}^S \tilde{v}^2_{s, \Gamma}, 
\end{align}
where $\tilde{\mu}_{s, \Gamma}$ is the maximum mean for rank score statistic in set $s$ and $\tilde{v}^2_{s, \Gamma}$ is the corresponding maximum variance,  
and they can be efficiently computed 
by 
\begin{align}\label{eq:mu_s_Gamma}
   \tilde{\mu}_{s, \Gamma} & \equiv 
   \max_{\bs{u}_s \in [0,1]^{n_s}}\mu_{s, \Gamma, \bs{u}_s}(\bs{y}_s)
   = 
   \max_{1\le j\le n_s}
   \frac{\sum_{i=1}^j \phi_s(i) + \Gamma \sum_{i=j+1}^{n_s} \phi_s(i)}{
   j + \Gamma (n_s - j)
   }, 
\\
\label{eq:v_s_Gamma}
    \tilde{v}_{s, \Gamma}^2 & \equiv \max_{\bs{u}_s:\  \mu_{s, \Gamma, \bs{u}_s}(\bs{y}_s) = \tilde{\mu}_{s, \Gamma}}
    v_{s, \Gamma, \bs{u}_s}^2(\bs{y}_s)
    = 
    \max_{j \in \mathcal{M}_{s, \Gamma} } 
   \frac{\sum_{i=1}^j \phi_s^2(i) + \Gamma \sum_{i=j+1}^{n_s} \phi_s^2(i)}{
   j + \Gamma (n_s - j)
   }
    - \tilde{\mu}_{s, \Gamma}^2, 
\end{align}
with $\mathcal{M}_{s, \Gamma} \subset \{1, 2, \ldots, n_s\}$ being the set of $j$ that can achieve the maximum in \eqref{eq:mu_s_Gamma}.

Let $\tilde{G}_{\Gamma}(\cdot)$ be the tail probability of the Gaussian distribution with mean $\tilde{\mu}_{\Gamma}$ and variance $\tilde{\sigma}^2_{\Gamma}$. 
To ensure the asymptotic validity of the resulting $p$-values, we need $\tilde{G}_{\Gamma}(\cdot)$ to have asymptotically heavier tail than the true distribution of the stratified rank score statistic $t(\bs{Z}, \bs{Y}(0))$ using the true control potential outcomes. 
Let $\mu_s = \E\{ t_s(\bs{Z}_s, \bs{Y}_s(0)) \}$ and $v_s^2 = \Var\{t_s(\bs{Z}_s, \bs{Y}_s(0))\}$ be the true mean and variance of the rank score statistic for each matched set $s$. 
We emphasize that both $\mu_s$'s and $v_s^2$'s are unknown, since they depend on the unknown potential outcomes and unknown treatment assignment mechanism. 
Suppose that the true treatment assignment mechanism satisfies the sensitivity model with bias at most $\Gamma$. 
By the construction in \eqref{eq:mu_s_Gamma} and \eqref{eq:v_s_Gamma}, $\mu_s \le \tilde{\mu}_{s, \Gamma}$, and moreover, if $\mu_s = \tilde{\mu}_{s, \Gamma}$, then $v_s^2 \le \tilde{v}_{s, \Gamma}^2$. 
Let $\mathcal{A} = \{s: v_s^2 > \tilde{v}_{s, \Gamma}^2, 1 \le s \le S \}$. 
Then we must have $\mu_s < \tilde{\mu}_{s, \Gamma}$ for $s\in \mathcal{A}$. 
Define  
$\Delta_{\Gamma}(\mu) = |\mathcal{A}|^{-1} \sum_{s\in \mathcal{A}}(\tilde{\mu}_{s, \Gamma} - \mu_s)$ and 
$\Delta_{\Gamma}(v^2) = |\mathcal{A}|^{-1} \sum_{s\in \mathcal{A}}(v_s^2 - \tilde{v}_{s, \Gamma}^2)$ as the average differences between $\tilde{\mu}_{s, \Gamma}$ and $\mu_s$ and between $v_s^2$ and $\tilde{v}_{s, \Gamma}^2$ for set $s$ in $\mathcal{A}$, respectively. 
Obviously, 
both $\Delta_{\Gamma}(\mu)$ and $\Delta_{\Gamma}(v^2)$ 
are positive when $\mathcal{A} \ne \emptyset$. 
For descriptive convenience, we define $\Delta_{\Gamma}(\mu)/\Delta_{\Gamma}(v^2)$ to be 
$\infty$
when $\mathcal{A} = \emptyset$.  
We invoke the following regularity condition on the relative magnitude of $\Delta_{\Gamma}(\mu)$ and $\Delta_{\Gamma}(v^2)$ as $S\rightarrow \infty$. 

\begin{condition}\label{cond:sen_asymp_conservative}
As $S\rightarrow \infty$, 
$\sqrt{\sum_{s=1}^S R_s^2/n_s^3} \cdot \Delta_{\Gamma}(\mu)/\Delta_{\Gamma}(v^2) \rightarrow \infty$. 
\end{condition}

From the discussion after Condition \ref{cond:match_set_clt}, when all matched sets have bounded sizes and the ranges of the transformed ranks are bounded between two positive constants, 
$\sum_{s=1}^S R_s^2/n_s^3$ is of order $S$, and Condition \ref{cond:sen_asymp_conservative} 
reduces 
to $\sqrt{S} \cdot \Delta_{\Gamma}(\mu)/\Delta_{\Gamma}(v^2) \rightarrow \infty$.  
This intuitively requires the ratio between the average differences in mean and variance 
to be much larger than $1/\sqrt{S}$. 
Under Conditions \ref{cond:match_set_clt} and \ref{cond:sen_asymp_conservative}, we are able to conduct 
asymptotic
sensitivity analysis.  

\begin{theorem}\label{thm:sen_ana_large_sample}
Under the sensitivity model with bias at most $\Gamma$ as in Definition \ref{def:sen_model}, 
if 
Conditions \ref{cond:match_set_clt} and \ref{cond:sen_asymp_conservative} hold, 
then the following two $p$-values
\begin{align}\label{eq:p_tilde}
    \tilde{p}_{k,c, \Gamma} \equiv \tilde{G}_{\Gamma}(t_{k,c})
    \ \le \
    \tilde{p}_{k,c, \Gamma}^{\LP} \equiv \tilde{G}_{\Gamma}(t^{\LP}_{k,c})
\end{align}
are both asymptotically valid for testing the null hypothesis $H_{k,c}$ in \eqref{eq:H_Nkc} at significance level $\alpha\in (0, 0.5]$, 
where $\tilde{G}_{\Gamma}(\cdot)$ is the tail probability of the Gaussian distribution with mean $\tilde{\mu}_{\Gamma}$ and variance $\tilde{\sigma}^2_{\Gamma}$ defined as in \eqref{eq:mu_v2_Gamma}. 
That is, if the null hypothesis $H_{k,c}$ holds and the bias in the treatment assignment is bounded by $\Gamma$, then for any $0<\alpha \le 0.5$, 
\begin{align*}
    \limsup_{S\rightarrow \infty}\Pr(\tilde{p}_{k,c, \Gamma}^{\LP} \le \alpha) 
    \le 
    \limsup_{S\rightarrow \infty} \Pr(\tilde{p}_{k,c, \Gamma} \le \alpha)
    \le \alpha. 
\end{align*}
\end{theorem}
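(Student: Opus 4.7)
\textbf{Proof plan for Theorem \ref{thm:sen_ana_large_sample}.}

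The first inequality $\tilde{p}_{k,c,\Gamma}\le \tilde{p}_{k,c,\Gamma}^{\LP}$ is immediate: the relaxation in \eqref{eq:linear_program} gives $t_{k,c}^{\LP}\le t_{k,c}$, and $\tilde{G}_\Gamma$ is a Gaussian tail and hence non-increasing. So the entire work is to verify $\limsup_S \Pr(\tilde{p}_{k,c,\Gamma}\le \alpha)\le \alpha$ under $H_{k,c}$. The starting observation is that, because the true effect vector $\bm{\tau}$ lies in $\mathcal{H}_{k,c}$ under $H_{k,c}$, the infimum $t_{k,c}=\inf_{\bm\delta\in\mathcal{H}_{k,c}}t(\bm Z,\bm Y-\bm Z\circ\bm\delta)$ is dominated by $T\equiv t(\bm Z,\bm Y(0))$. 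Monotonicity of $\tilde{G}_\Gamma$ then gives $\tilde{p}_{k,c,\Gamma}\ge \tilde{G}_\Gamma(T)$, so it suffices to prove $\limsup_S\Pr(\tilde{G}_\Gamma(T)\le\alpha)\le\alpha$.

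The next step is to pin down the law of $T$. Since $T=\sum_{s=1}^{S}t_s(\bm Z_s,\bm Y_s(0))$ is a sum of independent bounded random variables with mean $\mu=\sum_s\mu_s$ and variance $\sigma^2=\sum_s v_s^2$, I would verify the Lindeberg condition using the bounds $|t_s(\bm Z_s,\bm Y_s(0))-\mu_s|\le R_s$ and a lower bound of the form $v_s^2\gtrsim R_s^2/n_s^3$ (by evaluating the variance under the worst-case sensitivity model, which attains its minimum at an extreme $\bm u_s$). Condition \ref{cond:match_set_clt} then yields $(T-\mu)/\sigma\Rightarrow \mathcal{N}(0,1)$. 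Writing $z_\alpha$ for the upper $\alpha$-quantile of $\mathcal{N}(0,1)$, the event $\tilde{G}_\Gamma(T)\le\alpha$ is exactly $T\ge \tilde{\mu}_\Gamma+\tilde{\sigma}_\Gamma z_\alpha$, and the CLT reduces our task to showing
\[
\liminf_{S\to\infty}\frac{\tilde{\mu}_\Gamma-\mu+\tilde{\sigma}_\Gamma z_\alpha}{\sigma}\ge z_\alpha.
\]

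For this I would decompose by $\mathcal{A}=\{s:v_s^2>\tilde{v}_{s,\Gamma}^2\}$. Off $\mathcal{A}$, $v_s^2\le \tilde{v}_{s,\Gamma}^2$; on $\mathcal{A}$, the construction \eqref{eq:mu_s_Gamma}--\eqref{eq:v_s_Gamma} forces $\mu_s<\tilde{\mu}_{s,\Gamma}$ strictly (otherwise $s$ would not be in $\mathcal{A}$). Thus $\tilde{\mu}_\Gamma-\mu\ge |\mathcal{A}|\Delta_\Gamma(\mu)$ and $\sigma^2-\tilde{\sigma}_\Gamma^2\le |\mathcal{A}|\Delta_\Gamma(v^2)$; factoring $\sigma-\tilde{\sigma}_\Gamma=(\sigma^2-\tilde{\sigma}_\Gamma^2)/(\sigma+\tilde{\sigma}_\Gamma)$, this gives $\sigma-\tilde{\sigma}_\Gamma\le |\mathcal{A}|\Delta_\Gamma(v^2)/\sigma$ whenever $\sigma>\tilde{\sigma}_\Gamma$ (and it is non-positive otherwise, in which case the inequality above is automatic since $\alpha\le 0.5$ gives $z_\alpha\ge 0$). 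The desired bound thus reduces to
\[
|\mathcal{A}|\cdot\Delta_\Gamma(\mu)\ \ge\ z_\alpha\cdot \frac{|\mathcal{A}|\,\Delta_\Gamma(v^2)}{\sigma},
\]
or equivalently $\sigma\cdot\Delta_\Gamma(\mu)/\Delta_\Gamma(v^2)\ge z_\alpha$ asymptotically. Using the lower bound $\sigma^2\gtrsim \sum_{s=1}^S R_s^2/n_s^3$ established for the CLT step, this follows from Condition \ref{cond:sen_asymp_conservative}.

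The main obstacle is the variance control: the separable approximation in \eqref{eq:mu_v2_Gamma} maximizes the mean first and only then the variance, which can leave $\sigma^2>\tilde{\sigma}_\Gamma^2$ on $\mathcal{A}$. The proof has to show that the \emph{mean} slack $\tilde{\mu}_\Gamma-\mu$ is large enough, after rescaling by $\sigma$, to absorb the resulting \emph{standard-deviation} slack $\sigma-\tilde{\sigma}_\Gamma$ at any fixed level $z_\alpha$. This is exactly the content of Condition \ref{cond:sen_asymp_conservative}, and the care is in combining it cleanly with the Lindeberg CLT so that the Gaussian approximation error (which is $o(1)$ uniformly in $\alpha$) does not contaminate the one-sided bound.
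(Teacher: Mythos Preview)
Your proposal is correct and follows essentially the same route as the paper: reduce to $\tilde{G}_\Gamma(T)$ via $t_{k,c}\le T$ under $H_{k,c}$, establish a Lindeberg CLT for $T$ using the bound $v_s^2\gtrsim R_s^2/n_s^3$ together with Condition~\ref{cond:match_set_clt}, and then show $\tilde\mu_\Gamma+z_\alpha\tilde\sigma_\Gamma\ge \mu+z_\alpha\sigma$ for large $S$ by the $\mathcal{A}$-decomposition and Condition~\ref{cond:sen_asymp_conservative}. The only cosmetic difference is that the paper bounds $\sigma-\tilde\sigma_\Gamma\le |\mathcal{A}|\Delta_\Gamma(v^2)/(2\tilde\sigma_\Gamma)$ (via $\sqrt a-\sqrt b\le (a-b)/(2\sqrt b)$) and then invokes the lower bound on $\tilde\sigma_\Gamma$, whereas you bound it by $|\mathcal{A}|\Delta_\Gamma(v^2)/\sigma$ and invoke the same lower bound on $\sigma$; both are valid since the per-stratum variance bound $v_s^2\ge 2R_s^2/(n_s^3\Gamma^3)$ applies to every $\bm u_s\in[0,1]^{n_s}$ and hence to both $\sigma^2$ and $\tilde\sigma_\Gamma^2$.
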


\subsection{Simultaneous sensitivity analysis for all quantiles of individual treatment effects}\label{sec:simu_sen}


From Theorem
\ref{thm:sen_ana_large_sample}, 
we are able to test null hypotheses of form \eqref{eq:H_Nkc} under 
a sensitivity model with bias at most $\Gamma$. 
By the same logic as Theorem \ref{thm:conf_set},
we can 
then invert the tests to get confidence sets for $\tau_{(k)}$'s and $n(c)$'s. 
Moreover, 
these confidence sets for $\tau_{(k)}$'s and $n(c)$'s 
will be 
simultaneously valid. 
We summarize the results in the following theorem.  


\begin{theorem}\label{thm:sen_inv}
Under the sensitivity model with bias at most $\Gamma$, for any $\alpha\in (0,0.5]$, 
    if Conditions \ref{cond:match_set_clt} and \ref{cond:sen_asymp_conservative} hold,
    then 
    the conclusions in Theorem \ref{thm:conf_set} hold for the $p$-values  $\tilde{p}_{k,c,\Gamma}$ and $\tilde{p}^\LP_{k,c,\Gamma}$  in \eqref{eq:p_tilde}, 
    but with the confidence sets being asymptotically valid. 
\end{theorem}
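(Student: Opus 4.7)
The plan is to adapt the proof of Theorem \ref{thm:conf_set} to the sensitivity-analysis setting, replacing exact validity of individual $p$-values with the asymptotic validity from Theorem \ref{thm:sen_ana_large_sample}. The combinatorial skeleton---monotonicity in $(k,c)$, the forms of the one-dimensional confidence sets $\mathcal{I}_\alpha(k)$ and $\mathcal{S}_\alpha(c)$, and the three equivalent descriptions of the intersection in part (iii)---carries over verbatim; only the simultaneous coverage in (iii) requires probabilistic work.

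Monotonicity is immediate: enlarging $c$ or shrinking $k$ enlarges $\mathcal{H}_{k,c}$, so the infima $t_{k,c}$ and $t^{\LP}_{k,c}$ are non-increasing in $c$ and non-decreasing in $k$; composing with the non-increasing Gaussian survival $\tilde{G}_\Gamma$ delivers the monotonicities of $\tilde{p}_{k,c,\Gamma}$ and $\tilde{p}^{\LP}_{k,c,\Gamma}$. From here the forms of $\mathcal{I}_\alpha(k)$ and $\mathcal{S}_\alpha(c)$ and the equality of the three set representations of the intersection are purely deterministic consequences, identical to those in Theorem \ref{thm:conf_set}.

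The heart of the proof is the asymptotic simultaneous coverage
\[
\liminf_{S\to\infty}\,\Pr\Big(\bs{\tau}\in \bigcap_{k,c:\,\tilde{p}_{k,c,\Gamma}\le\alpha}\mathcal{H}_{k,c}^{\c}\Big) \ge 1-\alpha,
\]
and I plan to collapse the union over $(k,c)$ to a single event. Whenever $\bs{\tau}\in\mathcal{H}_{k,c}$, the vector $\bs{\tau}$ is itself feasible in the infimum defining $t_{k,c}$, so
\[
t_{k,c}\;\le\;t(\bs{Z},\bs{Y}-\bs{Z}\circ\bs{\tau})\;=\;t(\bs{Z},\bs{Y}(0)),
\]
and monotonicity of $\tilde{G}_\Gamma$ gives $\tilde{p}_{k,c,\Gamma}\ge\tilde{G}_\Gamma(t(\bs{Z},\bs{Y}(0)))$. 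Consequently
\[
\bigcup_{k,c:\,\bs{\tau}\in\mathcal{H}_{k,c}}\{\tilde{p}_{k,c,\Gamma}\le\alpha\}
\;\subseteq\;
\{\tilde{G}_\Gamma(t(\bs{Z},\bs{Y}(0)))\le\alpha\},
\]
with no union bound incurred over the infinitely many $(k,c)$; the same inclusion holds for $\tilde{p}^{\LP}_{k,c,\Gamma}$ since $t^{\LP}_{k,c}\le t_{k,c}$.

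The remaining and main obstacle is $\limsup_{S\to\infty}\Pr(\tilde{G}_\Gamma(t(\bs{Z},\bs{Y}(0)))\le\alpha)\le\alpha$. My plan is to invoke the asymptotic stochastic dominance $\limsup_{S\to\infty}\Pr(t(\bs{Z},\bs{Y}(0))\ge c)\le\tilde{G}_\Gamma(c)$ for every $c$, which is the key intermediate step behind Theorem \ref{thm:sen_ana_large_sample}, obtained by combining the asymptotic Gaussianity guaranteed by Condition \ref{cond:match_set_clt}, the worst-case mean--variance maximizations \eqref{eq:mu_s_Gamma}--\eqref{eq:v_s_Gamma}, and the separation-of-mean-from-variance argument enabled by Condition \ref{cond:sen_asymp_conservative}. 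Applying this dominance at $c=\tilde{G}_\Gamma^{-1}(\alpha)$---well-defined because $\tilde{G}_\Gamma$ is a continuous strictly decreasing Gaussian survival, and because $\alpha\in(0,0.5]$ places $\tilde{G}_\Gamma^{-1}(\alpha)\ge\tilde{\mu}_\Gamma$ where the dominance is cleanest---yields
\[
\limsup_{S\to\infty}\Pr\big(\tilde{G}_\Gamma(t(\bs{Z},\bs{Y}(0)))\le\alpha\big) = \limsup_{S\to\infty}\Pr\big(t(\bs{Z},\bs{Y}(0))\ge \tilde{G}_\Gamma^{-1}(\alpha)\big) \le \tilde{G}_\Gamma\big(\tilde{G}_\Gamma^{-1}(\alpha)\big)=\alpha,
\]
completing the argument; the restriction $\alpha\in(0,0.5]$ is inherited directly from Theorem \ref{thm:sen_ana_large_sample}.
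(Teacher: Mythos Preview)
Your proposal is correct and follows essentially the same route as the paper: the deterministic monotonicity and set-equality arguments are identical to Theorem \ref{thm:conf_set}, and for the simultaneous coverage you collapse the union over $(k,c)$ to the single event $\{\tilde{G}_\Gamma(t(\bs{Z},\bs{Y}(0)))\le\alpha\}$ via $t_{k,c}\le t(\bs{Z},\bs{Y}(0))$ whenever $\bs{\tau}\in\mathcal{H}_{k,c}$, then bound that event using the CLT (Condition \ref{cond:match_set_clt}) together with the mean--variance comparison from Condition \ref{cond:sen_asymp_conservative}, exactly as the paper does through Lemmas \ref{lemma:match_set_clt} and \ref{lemma:mu_sigma_tilde_greater}. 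One small wording point: your phrase ``$\limsup_{S\to\infty}\Pr(t(\bs{Z},\bs{Y}(0))\ge c)\le\tilde{G}_\Gamma(c)$ for every $c$'' is slightly loose since $\tilde{G}_\Gamma$ itself varies with $S$; the precise statement you use (and that the paper establishes) is $\limsup_{S\to\infty}\Pr\{(T-\tilde{\mu}_\Gamma)/\tilde{\sigma}_\Gamma\ge z\}\le 1-\Phi(z)$ for each fixed $z\ge 0$, which is exactly what your application at $c=\tilde{G}_\Gamma^{-1}(\alpha)=\tilde{\mu}_\Gamma+\Phi^{-1}(1-\alpha)\tilde{\sigma}_\Gamma$ needs.
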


Similar to the discussion after Theorem \ref{thm:conf_set}, 
under each $\Gamma\ge 1$ and the sensitivity model with bias at most $\Gamma$, 
we can visualize the confidence sets in Theorem \ref{thm:sen_inv} by plotting $k$ against the lower confidence limit for $\tau_{(k)}$ for 
$1\le k \le N$, which 
will be 
simultaneously valid.  
Furthermore, 
we 
will 
investigate how these confidence intervals change as $\Gamma$ varies. 
This can tell us how robust our inference for quantiles of individual effects is to hidden confounding. 
As $\Gamma$ increases, the class of sensitivity models becomes larger and the resulting confidence intervals for $\tau_{(k)}$'s will become wider. 
Under a sensitivity analysis, we 
are especially interested in the cutoff value of $\Gamma$ such that the treatment effect of interest becomes insignificant, which measures the degree of its insensitivity to hidden confounding. 
In practice, we can report these cutoff values for quantiles of individual effects; see the application in \S \ref{sec:application} for more details.  
 
\begin{remark}
By the same logic as Theorem \ref{thm:bound_pval}, we can get sharp bounds for  $\tilde{p}_{k,c,\Gamma}$ simply by ordering all control units before or after treated units. 
Moreover, similar to Remark \ref{rmk:ci_order}, 
the confidence intervals for quantiles of individual effects using either $\tilde{p}_{k,c,\Gamma}$ or its bounds will be the same, except for the inclusion of the lower boundaries.
These also hold when we consider the $p$-value $\tilde{p}_{k,c,\Gamma}^{\LP}$ from the relaxed linear programming.
\end{remark}

\begin{remark}
Similar to 
Remark \ref{rmk:conven_rand}, 
our sensitivity analysis for the maximum individual effect $\tau_{(N)}$ reduces to 
the 
conventional sensitivity analysis assuming 
constant treatment effects. 
Thus, 
our sensitivity analysis on quantiles of individual effects is a costless addition to 
the 
conventional sensitivity analysis
{\rev due to the simultaneous validity in Theorem \ref{thm:sen_inv}}, 
and it provides more robust inference for treatment effects 
{\rev since quantiles are generally more robust than the maximum}. 
Recently, 
\citet{Fogarty2020, fogarty2019testing} extended the conventional sensitivity analysis to infer average 
treatment effects, while we extend it to infer all quantiles of individual treatment effects. 
\end{remark}


\begin{remark}\label{rmk:switch_sen}
We focused on matched observational studies with exactly one treated unit within each matched set. 
By the same logic as \citet[][Pages 161--162]{Rosenbaum02a}, 
we can extend our approach to observational studies 
with one treated or one control unit within each matched set. 
This is particularly relevant for our sensitivity analysis on quantiles of individual treatment effects. 
Similar to Remark \ref{rmk:switch}, 
the relative sizes of treated and control groups matter
for the power of our analysis, 
and generally we prefer larger treated group. 
Thus, in practice, we suggest to make each set contain only one control unit, through switching treatment labels and changing outcome signs. 
However, with multiple treated units rather than one within each set, the minimization of the stratified rank score statistic will become more challenging.
As demonstrated by simulation, 
switching can greatly improve the power of our sensitivity analysis for quantiles of individual treatment effects. 
We 
relegate the details 
to the supplementary material. 
\end{remark}

\section{Effect of Smoking on the Blood Cadmium Level}\label{sec:application}

We apply the proposed methods to study the effect of smoking on the blood cadmium level ($\mu g/l$) using data from the 2005-2006 National Health and Nutrition Examination Survey, which are also available in \citet{bigmatch2020}. 
We perform the optimal matching in \citet{YuRosenbaum2019}, taking into account gender, age, race, education level, household income level and body mass index. 
The matched data contain $512$ matched sets, each of which contain 1 daily smoker and 2 matched nonsmokers, 
with in total $1536$ units. 


\begin{figure}[htb]
    \centering
    \begin{subfigure}[htbp]{0.5\textwidth}
    \centering
        \includegraphics[width=.7\textwidth]{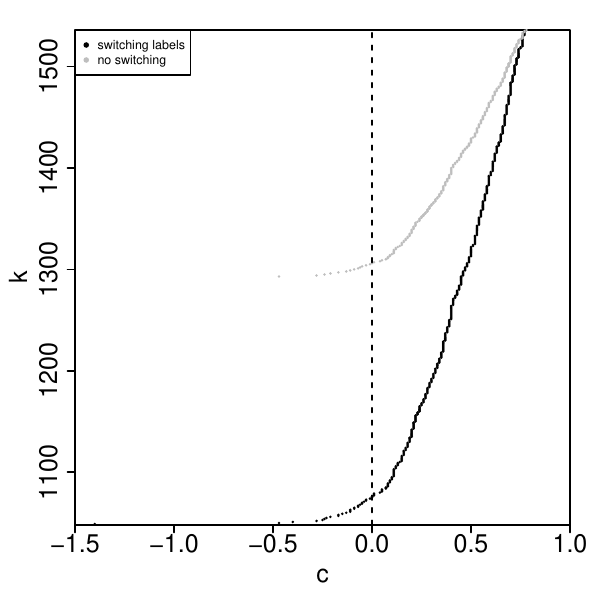}
        \caption{Randomization inference}
    \end{subfigure}%
    \begin{subfigure}[htbp]{0.5\textwidth}
    \centering
        \includegraphics[width=.7\textwidth]{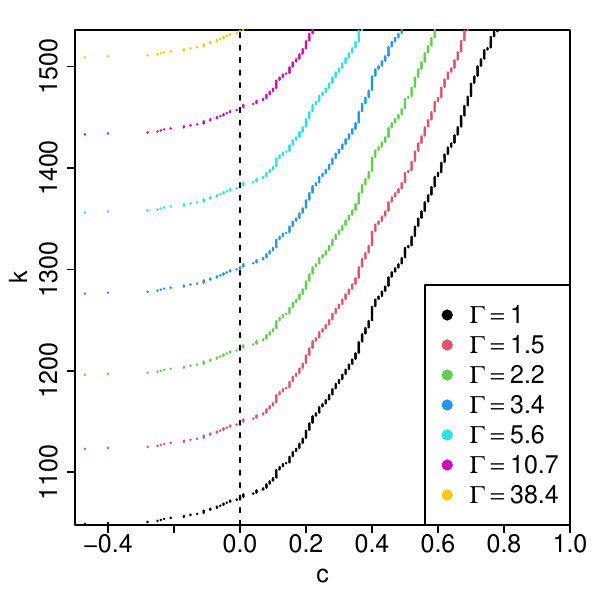}
        \caption{Sensitivity analysis}
    \end{subfigure}
    \caption{$90\%$ lower confidence limits for quantiles of individual effects 
    The left figure shows the lower confidence limits for quantiles of effects assuming there is no hidden confounding, {\rev i.e., the study reduces a SCRE}, using either the original matched data or the one with switched treatment labels and changed outcome signs. 
    The right figure shows the lower confidence limits for quantiles of individual effects under various sensitivity models indexed by different $\Gamma$'s using the data with switched treatment labels and changed outcome signs. 
    }
    \label{fig:CI_nh}
\end{figure}

We first assume that matching has taken into account all confounding, under which the units within each matched set have the same probability to smoke. 
In this case, the matched observational study essentially reduces to a SCRE, for which we apply the randomization inference in \S \ref{sec:rand_test_quantiles}--\ref{sec:inv_test_conf_set}
with the stratified Wilcoxon rank sum statistic. 
Figure \ref{fig:CI_nh}(a) shows the $90\%$ lower confidence limits for all quantiles of individual effects, using both the original dataset and the one with switched treatment labels and changed outcome signs. 
Specifically, for each 
point 
in \ref{fig:CI_nh}(a), its $x$-axis value denotes the lower confidence limit for the individual effect at rank its $y$-axis value.
We 
omit those noninformative minus infinite lower confidence limits for $\tau_{(k)}$'s with small $k$. 
From Figure \ref{fig:CI_nh}(a), it is obvious that treatment label switching helps provide more informative confidence intervals for quantiles of individual effects. 
Moreover, these intervals are simultaneously valid, as implied by Remark
\ref{rmk:conf_set_plus}. 
With label switching, the $90\%$ lower confidence limit for the individual effect at rank $1076$ is 0.01. 
This implies that, with $90\%$ confidence level, at least $30\%$ of the units have positive effects, or equivalently smoking would cause higher blood cadmium level for at least $30\%$ of the units in the study.

We then consider sensitivity analysis, since the previous randomization inference pretending the study is a SCRE may be invalid due to the existence of unmeasured confounding. 
We apply the sensitivity analysis for quantiles of individual effects in \S \ref{sec:sen_ana} to the observed data with switched treatment labels and changed outcome signs.
Figure \ref{fig:CI_nh}(b) shows the $90\%$ lower confidence limits for all quantiles of individual effects under various sensitivity models with biases ranging from $1.0$ to $38.4$. 
These values of $\Gamma$, 
$1.0$, $1.5$, $2.2$, $3.4$, $5.6$, $10.7$ and $38.4$, are actually the largest values of $\Gamma$ such that the resulting confidence intervals for the $70\%$, $75\%$, $80\%$, $85\%$, $90\%$, $95\%$ and $100\%$ quantiles of individual effects do not cover zero. 
For example, 
 when the bias in the treatment assignment is at most 2.2,
the individual effect at rank 1229, {\rev i.e., $80\%$ quantile,} is positive at significance level $0.1$, 
which implies that smoking would cause higher blood cadmium level for at least $20\%$ of the units in the study. 
Equivalently, we say the $80\%$ quantile of individual effect is significant for biases up to 2.2. 
{\rev From \citet[][Chapter 9]{rosenbaum2017book}, intuitively, 
a bias of magnitude 2 or 5 corresponds to an unobserved covariate that could produce a 3-fold or 9 fold increase in the odds of smoking, and 5-fold or 11-fold increase in the odds of a positive pair difference in cadmium levels}. 
Since such a strong confounding is unlikely to exist, especially after 
taking
into account those observed covariates in matching, 
we believe that smoking causes higher blood cadmium level for a non-negligible proportion of units, and the causal conclusion is robust to unmeasured confounding.

\section{Conclusion}\label{sec:discuss}

We studied nonparametric randomization-based inference and sensitivity analysis for quantiles of individual effects. 
We focused on alternative hypotheses that favor larger treatment effects. 
We can straightforwardly extend it to deal with alternative hypotheses that favor smaller treatment effects, by switching treatment labels or changing outcome signs. 
We can also conduct two-sided testing by combining the two one-sided testings using Bonferroni's method \citep{cox1977}.

Our inference utilizes rank-based test statistics, 
and is applicable to both stratified randomized experiments and matched observational studies. 
Our inference is actually a costless addition to the conventional randomization inference and sensitivity analysis assuming constant treatment 
effects, and thus should always be encouraged in practice. 
The inference results can be conveniently visualized and interpreted, and
a 
publicly available
R package 
has also been developed. 


It will be interesting to further incorporate pretreatment covariates to either improve the inference efficiency \citep{Rosenbaum2002cov, lin2013} or consider conditional quantiles of individual treatment effects \citep{Fan2010}. We leave them for future research.  

\section{Acknowledgement}
We are grateful to the editor and reviewers for their invaluable comments and suggestions, which have significantly improved the quality of this work. 

\section{Supplementary material}\label{SM}
Supplementary material available at \textit{Biometrika} online includes simulation studies, proofs of all theorems and additional technical details. 
An R package implementing the proposed methods is available at \texttt{https://github.com/Yongchang-Su/QIoT/}. 

\bibliographystyle{plainnat}
\bibliography{quantile_rbd}

\pagebreak

\begin{center}
\textbf{\large Supplementary Material to ``Treatment Effect Quantiles 
in Stratified Randomized Experiments and Matched Observational Studies''}
\end{center}
\setcounter{equation}{0}
\setcounter{section}{0}
\setcounter{figure}{0}
\setcounter{example}{0}
\setcounter{proposition}{0}
\setcounter{corollary}{0}
\setcounter{theorem}{0}
\setcounter{lemma}{0}
\setcounter{table}{0}
\setcounter{condition}{0}
\setcounter{remark}{0}
\setcounter{page}{1}

\renewcommand {\theproposition} {A\arabic{proposition}}
\renewcommand {\theexample} {A\arabic{example}}
\renewcommand {\thefigure} {A\arabic{figure}}
\renewcommand {\thetable} {A\arabic{table}}
\renewcommand {\theequation} {A\arabic{equation}}
\renewcommand {\thelemma} {A\arabic{lemma}}
\renewcommand {\thesection} {A\arabic{section}}
\renewcommand {\thetheorem} {A\arabic{theorem}}
\renewcommand {\thecorollary} {A\arabic{corollary}}
\renewcommand {\thecondition} {A\arabic{condition}}
\renewcommand {\theremark} {A\arabic{remark}}
\renewcommand {\thepage} {A\arabic{page}}

\S \ref{sec:sen_finite} conducts finite-sample sensitivity analysis, which is exact for matched pair studies but generally conservative. 
{\lxr \S \ref{sec:eff_alg}
gives the 
details for the algorithms used to calculate valid $p$-values.} 
\S \ref{sec:simu_study} conducts simulation studies, comparing various optimization algorithms and illustrating the power gain from Stephenson rank statistics and treatment label switching. 
\S \ref{sec:proof_prop_thm} gives the proofs of all propositions and theorems, as well as some technical remarks. 
\S \ref{sec:add_opi} gives additional technical details for optimizations.

\section{Finite-sample sensitive analysis}\label{sec:sen_finite}


Here we consider performing finite-sample valid sensitivity analysis. 
We first construct a finite-sample upper bound of $G_{\Gamma}(\cdot)$ that has a simple analytical expression and is easy to approximate by Monte Carlo. 
For each set $s$, 
let $\xi_{s1} < \ldots < \xi_{s m_s}$ be the unique values of 
the transformed ranks
$\{\phi_s(1), \ldots, \phi_s(n_s)\}$,
and
$g_{si} = \sum_{j=1}^{n_s} \I\{\phi_s(j)\ge \xi_{si}\}$.
For $\xi_{s, i-1}< c \le \xi_{si}$ and any constant $\bs{y}_s\in \mathbb{R}^{n_s}$, 
we can verify that, under the sensitivity model with bias at most $\Gamma$, 
$t_s(\bs{Z}_s, \bs{y}_s)$ has at most probability $g_{si} \Gamma/\{(n_s-g_{si}) + g_{si}\Gamma\}$ to be
no less than 
$c$. 
This motivates us to define mutually independent random variables 
$(\overline{T}_1, \ldots, \overline{T}_S)$
with probability mass functions: 
\begin{align}\label{eq:T_overline_set}
    \Pr\left( \overline{T}_s = \xi_{si} \right)
    & = 
    \frac{g_{si}\Gamma}{ (n_s - g_{si}) + g_{si}\Gamma }
    - 
    \frac{g_{s,i+1}\Gamma}{ (n_s-g_{s,i+1}) + g_{s,i+1}\Gamma }
    , 
    \quad 
    (i = 1, 2, \ldots, m_s), 
\end{align}
where $g_{s, m_s+1}$ is defined to be zero. 
By the mutual independence of treatment assignments across matched sets, 
the stratified rank score statistic $t(\bs{Z}, \bs{y})$ 
must be 
stochastically smaller than or equal to $\sum_{s=1}^S \overline{T}_s$. 
Therefore, the tail probability of 
$\sum_{s=1}^S \overline{T}_s$, 
denoted by $\overline{G}_{\Gamma}(\cdot)$, 
must be an upper bound of $G_{\Gamma}(\cdot)$ in \eqref{eq:G_Gamma}. 
This further helps us construct finite-sample valid $p$-values for testing null hypotheses on quantiles of individual treatment effects under sensitivity models. 

\begin{theorem}\label{thm:sen_conservative_finte_sample}
Under the sensitivity model with bias at most $\Gamma$ as in Definition \ref{def:sen_model}, 
the following two $p$-values 
\begin{align}\label{eq:p_overline}
    \overline{p}_{k,c, \Gamma} \equiv \overline{G}_{\Gamma}(t_{k,c}) 
    \ \le \ 
    \overline{p}_{k,c, \Gamma}^{\LP} \equiv \overline{G}_{\Gamma}(t^{\LP}_{k,c})
\end{align}
are both valid for testing the null hypothesis $H_{k,c}$ in \eqref{eq:H_Nkc}, 
where $\overline{G}_\Gamma(\cdot)$ is the tail probability of 
$\sum_{s=1}^S \overline{T}_s$ with mutually independent $\overline{T}_s$'s defined in \eqref{eq:T_overline_set}. 
That is, if the null hypothesis $H_{k,c}$ holds and the bias in the treatment assignment is bounded by $\Gamma$, then, for any $\alpha\in (0,1)$, 
\begin{align*}
    \Pr(\overline{p}_{k,c, \Gamma}^{\LP} \le \alpha)
    \le 
    \Pr(\overline{p}_{k,c, \Gamma} \le \alpha) \le \alpha. 
\end{align*}
\end{theorem}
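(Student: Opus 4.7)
The plan is to establish that $\overline{G}_\Gamma(\cdot)$ is a pointwise upper bound of $G_\Gamma(\cdot)$ everywhere on $\mathbb{R}$, and then combine this with Theorem \ref{thm:sen_ana} together with the relaxation inequality $t^{\LP}_{k,c}\le t_{k,c}$ to conclude the validity of both $\overline{p}_{k,c,\Gamma}$ and $\overline{p}_{k,c,\Gamma}^{\LP}$. Throughout, I would exploit that each matched set has exactly one treated unit, so the per-stratum rank statistic $t_s(\bs{Z}_s,\bs{y}_s)$ equals $\phi_s(\rank_{j^{*}(s)}(\bs{y}_s))$, where $j^{*}(s)$ is the treated index; this discrete structure is what makes the thresholds $\xi_{si}$ together with the counts $g_{si}$ the natural ingredients.

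The first substantive step is the per-stratum stochastic dominance: for every fixed $\bs{y}_s\in\mathbb{R}^{n_s}$ and every $\bs{u}_s\in[0,1]^{n_s}$, I would show $t_s(\bs{Z}_s,\bs{y}_s)\le_{\mathrm{st}}\overline{T}_s$. Fixing a threshold $\xi_{si}$, the set $A_{si}$ of units in stratum $s$ whose transformed rank is at least $\xi_{si}$ has exactly $g_{si}$ elements, so the assignment law \eqref{eq:dist_Z_gamma} gives $\Pr_{\bs{u},\Gamma}(t_s(\bs{Z}_s,\bs{y}_s)\ge \xi_{si})=\sum_{j\in A_{si}}e^{\gamma u_{sj}}/\sum_{k=1}^{n_s}e^{\gamma u_{sk}}$. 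Maximizing this over $\bs{u}_s\in[0,1]^{n_s}$ is an elementary monotone exercise: the supremum is attained by setting $u_{sj}=1$ for $j\in A_{si}$ and $u_{sk}=0$ otherwise, yielding the value $g_{si}\Gamma/\{(n_s-g_{si})+g_{si}\Gamma\}$. A telescoping computation on \eqref{eq:T_overline_set} then shows this equals $\Pr(\overline{T}_s\ge \xi_{si})$, and since $t_s(\bs{Z}_s,\bs{y}_s)$ only takes values in $\{\xi_{s1},\dots,\xi_{sm_s}\}$, stochastic dominance at these points suffices for the whole real line.

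The second step is aggregation across strata. Both the $t_s(\bs{Z}_s,\bs{y}_s)$'s and the $\overline{T}_s$'s are mutually independent across $s$ (the former by the independence of assignments across matched sets built into the sensitivity model, the latter by construction). Stochastic dominance is preserved under independent sums via a coupling argument, so $t(\bs{Z},\bs{y})=\sum_s t_s(\bs{Z}_s,\bs{y}_s)\le_{\mathrm{st}}\sum_s\overline{T}_s$ for every $\bs{y}\in\mathbb{R}^N$ and every $\bs{u}\in\mathcal{U}$. Taking tail probabilities and then the supremum over $\bs{u}\in\mathcal{U}$ in the definition \eqref{eq:G_Gamma} of $G_\Gamma$ yields $G_\Gamma(c)\le \overline{G}_\Gamma(c)$ for all $c\in\mathbb{R}$.

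Finally, I would deduce the two claims of the theorem. By Theorem \ref{thm:sen_ana}, $p_{k,c,\Gamma}=G_\Gamma(t_{k,c})$ is a valid level-$\alpha$ $p$-value under $H_{k,c}$ for every sensitivity model with bias at most $\Gamma$; combining this with $G_\Gamma\le \overline{G}_\Gamma$ gives $\overline{p}_{k,c,\Gamma}=\overline{G}_\Gamma(t_{k,c})\ge p_{k,c,\Gamma}$, hence $\Pr(\overline{p}_{k,c,\Gamma}\le\alpha)\le\Pr(p_{k,c,\Gamma}\le\alpha)\le\alpha$. Since $\overline{G}_\Gamma$ is non-increasing and $t^{\LP}_{k,c}\le t_{k,c}$ (as the LP in \eqref{eq:linear_program} relaxes the integer program in \eqref{eq:integer_program}), we have $\overline{p}_{k,c,\Gamma}^{\LP}=\overline{G}_\Gamma(t^{\LP}_{k,c})\ge \overline{G}_\Gamma(t_{k,c})=\overline{p}_{k,c,\Gamma}$, completing both inequalities in the stated result. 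The one genuine obstacle is the per-stratum worst-case computation in step one — in particular, verifying that the extremal $\bs{u}_s$ is $\{0,1\}$-valued and precisely indicates the top-$g_{si}$ units; everything else is bookkeeping built on top of Theorem \ref{thm:sen_ana}.
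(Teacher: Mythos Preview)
Your proposal is correct and follows essentially the same route as the paper: establish the per-stratum stochastic dominance $t_s(\bs{Z}_s,\bs{y}_s)\le_{\mathrm{st}}\overline{T}_s$ by bounding the tail probability at each threshold $\xi_{si}$ by $g_{si}\Gamma/\{(n_s-g_{si})+g_{si}\Gamma\}$, lift to $G_\Gamma\le\overline{G}_\Gamma$ via independence across strata, and then invoke Theorem~\ref{thm:sen_ana} together with $t_{k,c}^{\LP}\le t_{k,c}$. The only cosmetic difference is that the paper obtains the bound $g_{si}\Gamma/\{(n_s-g_{si})+g_{si}\Gamma\}$ by direct inequalities on the ratio $\sum_{j\in A_{si}}e^{\gamma u_{sj}}/\sum_k e^{\gamma u_{sk}}$ (using $e^{\gamma u_{sk}}\ge 1$ in the denominator and $e^{\gamma u_{sj}}\le\Gamma$ in the numerator) rather than by identifying the extremal $\{0,1\}$-valued $\bs{u}_s$; the conclusions are identical.
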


As commented in \S \ref{sec:match_pair_Gamma}, 
in a matched pair study with $n_1 = \ldots = n_S = 2$,
$\overline{G}_{\Gamma}(\cdot)$ becomes the same as $G_{\Gamma}(\cdot)$, i.e., the optimization in \eqref{eq:G_Gamma} has a closed-form analytical solution. 
However, for general matched studies with $n_s\ge 2$, $\overline{G}_{\Gamma}(\cdot)$ is generally different from $G_{\Gamma}(\cdot)$ and may 
be too conservative. 
In \S \ref{sec:large_sen}, we construct an asymptotic approximation for $G_{\Gamma}(\cdot)$ that is sharper than $\overline{G}_{\Gamma}(\cdot)$ and can thus lead to more powerful sensitivity analysis in large samples. 

Analogous to Theorems \ref{thm:conf_set} and \ref{thm:sen_inv}, 
we can then invert the tests to construct simultaneously valid confidence sets for the quantiles of individual effects $\tau_{(k)}$'s and numbers of units with effects greater than any threshold $n(c)$'s. 
We summarize the results in the following theorem. 

\begin{theorem}\label{thm:sen_inv_finite}
Under the sensitivity model with bias at most $\Gamma$,
the 
$p$-value 
$\overline{p}_{k,c,\Gamma}$ 
is increasing in $c$ and decreasing in $k$. 
Moreover, for any $\alpha\in (0, 1)$, 
\begin{enumerate}[label=(\roman*)]
    \item $\overline{\mathcal{I}}_{\alpha}(k) \equiv \{c: \overline{p}_{k,c,\Gamma} > \alpha, c \in \mathbb{R}\}$ is a $1-\alpha$ confidence set for $\tau_{(k)}$, and it must be an interval of the form $(\underline{c}, \infty)$ or $[\underline{c}, \infty)$, with $\underline{c} = \inf\{c: \overline{p}_{k,c,\Gamma} > \alpha\}$; 
    \item $\overline{\mathcal{S}}_{\alpha}(c) \equiv \{N-k: \overline{p}_{k,c, \Gamma} > \alpha, 0 \le k \le N \}$ is a $1-\alpha$ confidence set for $n(c)$, and it must have the form of $\{j: N-\overline{k} \le j \le N\}$, 
    with 
    $\overline{k} = \sup\{k: \overline{p}_{k,c, \Gamma} > \alpha\}$;
    \item the intersection of all $1-\alpha$ confidence sets for $\tau_{(k)}$'s, viewed as a confidence set for $\bs{\tau}$, is the same as that for $n(c)$'s, 
    and they have the following equivalent forms:
    \begin{align*}
        \big\{ \bs{\delta}: \delta_{(k)} \in \overline{\mathcal{I}}_{\alpha}(k), 1\le k \le N \big\}
        & = 
        \Big\{
        \bs{\delta}: 
        \sum_{i=1}^N \I(\delta_i > c) \in \overline{\mathcal{S}}_{\alpha}(c), c \in \mathbb{R}
        \Big\}
        =  \bigcap_{k, c: \overline{p}_{k,c, \Gamma} \le \alpha} 
        \mathcal{H}_{k,c}^{\c}.
    \end{align*}
    Moreover, the resulting confidence set covers the true individual treatment effect vector $\bs{\tau}$ with probability at least $1-\alpha$, i.e., 
    $
        \Pr(
        \bs{\tau} \in \bigcap_{k, c: \overline{p}_{k,c, \Gamma} \le \alpha} 
        \mathcal{H}_{k,c}^{\c}
        ) 
        \ge 1-\alpha. 
    $
\end{enumerate}
\end{theorem}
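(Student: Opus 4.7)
The plan is to parallel the proof of Theorem~\ref{thm:conf_set}, substituting the sensitivity $p$-value $\overline{p}_{k,c,\Gamma}=\overline{G}_\Gamma(t_{k,c})$ for the randomization $p$-value $p_{k,c}$ and invoking Theorem~\ref{thm:sen_conservative_finte_sample} in place of Theorem~\ref{thm:pNkc}. First I would record the monotonicity of $\overline{p}_{k,c,\Gamma}$. Because $\mathcal{H}_{k,c}$ becomes stricter as $k$ grows and looser as $c$ grows, the infimum $t_{k,c}=\inf_{\bs{\delta}\in \mathcal{H}_{k,c}} t(\bs{Z},\bs{Y}-\bs{Z}\circ\bs{\delta})$ is non-decreasing in $k$ and non-increasing in $c$; combined with the non-increasing tail function $\overline{G}_\Gamma(\cdot)$, this yields the claimed monotonicity of $\overline{p}_{k,c,\Gamma}$.

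Parts (i) and (ii) then follow by the standard test-inversion argument. Since $H_{k,\tau_{(k)}}$ trivially holds for the true $\bs{\tau}$, Theorem~\ref{thm:sen_conservative_finte_sample} gives $\Pr(\overline{p}_{k,\tau_{(k)},\Gamma}>\alpha)\ge 1-\alpha$, i.e., $\tau_{(k)}\in\overline{\mathcal{I}}_\alpha(k)$ with probability at least $1-\alpha$. The shape $(\underline{c},\infty)$ or $[\underline{c},\infty)$ is immediate from monotonicity in $c$. Part (ii) is analogous via the duality $\tau_{(k)}\le c\Leftrightarrow n(c)\le N-k$, with the form $\{j: N-\overline{k}\le j\le N\}$ arising from monotonicity in $k$.

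Part (iii) is the main content. The equivalence of the three descriptions of the joint confidence set is a set-theoretic unwinding: asking $\delta_{(k)}\in \overline{\mathcal{I}}_\alpha(k)$ for every $k$ (respectively $\sum_{i=1}^N \I(\delta_i>c)\in \overline{\mathcal{S}}_\alpha(c)$ for every $c$) is, by the definitions of $\overline{\mathcal{I}}_\alpha(k)$ and $\overline{\mathcal{S}}_\alpha(c)$, precisely the statement that $\bs{\delta}\notin \mathcal{H}_{k,c}$ for every $(k,c)$ with $\overline{p}_{k,c,\Gamma}\le \alpha$. The simultaneous coverage is where the argument must do real work, and its trick is to reduce the apparent multi-testing to a single tail event: whenever $\bs{\tau}\in \mathcal{H}_{k,c}$, plugging $\bs{\delta}=\bs{\tau}$ into the infimum defining $t_{k,c}$ gives $t_{k,c}\le t(\bs{Z},\bs{Y}(0))$, after which monotonicity of $\overline{G}_\Gamma$ forces $\overline{G}_\Gamma(t(\bs{Z},\bs{Y}(0)))\le \overline{p}_{k,c,\Gamma}$, so the miscoverage event $\{\bs{\tau}\notin \bigcap_{(k,c):\,\overline{p}_{k,c,\Gamma}\le\alpha}\mathcal{H}_{k,c}^\c\}$ is contained in $\{\overline{G}_\Gamma(t(\bs{Z},\bs{Y}(0)))\le \alpha\}$.

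Finally, I would bound this single-event probability using the stochastic dominance underlying Theorem~\ref{thm:sen_conservative_finte_sample}: under any $\bs{u}\in \mathcal{U}$ consistent with the sensitivity model, $t(\bs{Z},\bs{Y}(0))$ is stochastically dominated by $\sum_{s=1}^S \overline{T}_s$, so $\overline{G}_\Gamma(\cdot)$ upper bounds its true tail probability, whence the standard probability-integral-transform inequality gives $\Pr(\overline{G}_\Gamma(t(\bs{Z},\bs{Y}(0)))\le \alpha)\le \alpha$. The main obstacle is recognizing this single-event reduction; once the choice $\bs{\delta}=\bs{\tau}$ is used inside the infimum to collapse the multi-hypothesis family to one tail bound, the remainder proceeds exactly as in Theorem~\ref{thm:conf_set}.
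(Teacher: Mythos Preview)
Your proposal is correct and follows essentially the same approach as the paper, which simply invokes Lemma~\ref{lemma:monotone_t} for the monotonicity, Theorem~\ref{thm:sen_conservative_finte_sample} for parts (i)--(ii), and defers part (iii) to ``the same logic as Theorem~\ref{thm:conf_set}(iii).'' Your reduction of the simultaneous coverage to the single tail event $\{\overline{G}_\Gamma(t(\bs{Z},\bs{Y}(0)))\le\alpha\}$ via the stochastic dominance $t(\bs{Z},\bs{Y}(0))\preceq\sum_s \overline{T}_s$ is in fact a slightly cleaner route than the paper's Theorem~\ref{thm:conf_set}(iii) template, which would instead bound $\overline{p}_{k,c,\Gamma}\ge p_{k,c,\Gamma}\ge p_{\bs{Z},\bs{\tau},\bs{u}^*,\Gamma}$ and appeal to the validity of the latter; both arrive at the same conclusion.
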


\begin{remark}\label{rmk:sen_inv_finite_LP}
Theorem \ref{thm:sen_inv_finite} also holds for the easier-to-calculate $p$-value $\overline{p}_{k,c, \Gamma}^{\LP}$. 
\end{remark}
\section{Details for Algorithms Used to  Calculate Valid $p$-values}\label{sec:eff_alg}

In this section we introduce the details for the algorithms used to solve the integer programming in \eqref{eq:integer_program} and the relaxed linear programming in \eqref{eq:linear_program}, which can then provide valid $p$-values for testing null hypotheses on quantiles of individual treatment effects. 
We first introduce the piecewise-linear optimization for the integer programming in \eqref{eq:integer_program}, which can be solved by the Gurobi optimizer. 
We then introduce two algorithms adopted from the literature on the multiple-choice knapsack problem. 
The first is a greedy algorithm that can efficiently solve the relaxed linear programming in \eqref{eq:linear_program} in linear time, 
and the second is the dynamic programming algorithm that can solve the integer programming in \eqref{eq:integer_program} in polynomial time. 

\subsection{Piecewise-linear optimization}\label{sec:pwl}

The integer programming problem in \eqref{eq:integer_program} can be equivalently formulated as a piecewise-linear optimization problem. 
Specifically, 
for each $1\le s\le S$ and $0\le x \le n_s$, define 
\begin{align*}
    f_{s,c}(x) = \I(x=0) t_{s,c}(0) + \sum_{j=0}^{n_s-1} \I(j< x \le j+1) 
    [t_{s,c}(j)+\{t_{s,c}(j+1) - t_{s,c}(j)\} (x-j) ]. 
\end{align*}
The minimum objective value from \eqref{eq:integer_program} is then equivalently 
\begin{align}\label{eq:pwl}
    \min_{l\in \mathcal{K}_{\bs{n}}(N-k)} \sum_{s=1}^S f_{s,c}(l_s) 
    = 
    \min_{l\in \mathcal{R}_{\bs{n}}(N-k)} \sum_{s=1}^S f_{s,c}(l_s), 
\end{align}
where $\mathcal{K}_{\bs{n}}(N-k)$ is defined as in \eqref{eq:setKn}, 
and
$\mathcal{R}_{\bs{n}}(N-k)$ is defined the same as $\mathcal{K}_{\bs{n}}(N-k)$ but without the integer constraints. 
Besides, the objective function in \eqref{eq:pwl} can be non-convex. 
In \S \ref{sec:proof_pwl}, 
we prove the equality in \eqref{eq:pwl} and give a numerical example showing the non-convexity of the objective function. 
We also investigate the performance of Gurobi optimizer for this piecewise-linear optimization in \S \ref{sec:comp_cost}.



\subsection{Greedy Algorithm for solving the relaxed linear programming problem}\label{sec:greedy}


We first introduce some notation to give an equivalent form for the optimization in \eqref{eq:equiv_min_test_stat}. 
By definition, 
we can verify that, 
for each $s$, 
$t_{s, c}(0) \ge t_{s, c}(1) \ge \ldots \ge t_{s, c}(n_s)$. 
Define 
$
\Delta_{s, c}(j) \equiv t_{s, c}(j-1) - t_{s, c}(j) \ge 0 
$
for all $s, j$. 
Then 
the optimization in \eqref{eq:equiv_min_test_stat} reduces to
\begin{align}\label{eq:test_stat_inf_equiv}
    \inf_{\bs{\delta} \in \mathcal{H}_{k,c}} t(\bs{Z},\bs{Y} - \bs{Z}\circ \bs{\delta}) 
    & = 
    \sum_{s=1}^S t_{s, c} (0) - 
    \max_{ (l_1, \ldots, l_S) \in \mathcal{K}_{\bs{n}}(N-k)}  
    \sum_{s=1}^S \sum_{j=1}^{l_s} \Delta_{s, c}(j). 
\end{align}
Consequently, to get the valid $p$-value $p_{k,c}$, 
it suffices to maximize 
$\sum_{s=1}^S \sum_{j=1}^{l_s} \Delta_{s, c}(j)$ 
over $(l_1, \ldots, l_S) \in \mathcal{K}_{\bs{n}}(N-k)$.  
The naive greedy algorithm solves this maximization by making locally optimal choice at each stage when $N-k$ increases from $0$ to $N$. 
However, this may lead to sub-optimal solutions; see 
{\lxr \S \ref{sec:numerical_example}}
for a numerical example. More importantly, it may invalidate the resulting $p$-value.

The naive greedy algorithm fails mainly because $\Delta_{s,c}(j)$ can increase as $j$ increases, 
and thus 
searching only the local optimum will miss the correct solution. 
To overcome 
this drawback, 
we propose to transform the sequence of $\Delta_{s,c}(j)$'s for each $s$ such that the resulting sequence is monotone decreasing and its cumulative sums dominate that of the original sequence. 
We formally define such a transformation below. 

\begin{definition}\label{def:mono_transform}
A function $\Psi(\cdot)$ that maps a vector to a vector of the same length is called a
monotone dominating transformation, 
if for any $m\ge 1$ and any vector $\bs{a} = (a_1, \ldots, a_m)^\top \in \mathbb{R}^m$, 
the transformation 
$\Psi(\bs{a}) = (\Psi_1(\bs{a}), \ldots, \Psi_m(\bs{a}))^\top \in \mathbb{R}^m$
satisfies that 
(i) $\Psi_1(\bs{a}) \ge \Psi_2(\bs{a}) \ge \ldots \ge  \Psi_m(\bs{a})$ 
and (ii) 
$\sum_{i=1}^j \Psi_i(\bs{a}) \ge \sum_{i=1}^j a_i$ for $1\le j \le m$. 
\end{definition}

Let 
$\bs{\Delta}_{s,c}(\overline{j}) = (\Delta_{s,c}(1), \ldots, \Delta_{s,c}(j))$ for $1\le s \le S$, $1\le j \le n_s$ and $c\in \mathbb{R}$. 
From the two properties in Definition \ref{def:mono_transform}, 
for any monotone dominating transformation $\Psi(\cdot)$, 
\begin{align}\label{eq:Psi_monotone}
    \Psi_1\left( \bs{\Delta}_{s,c}
    \left( \overline{n_{sk}} \right) \right) 
    \ge 
    \Psi_2\left( \bs{\Delta}_{s,c}\left(\overline{n_{sk}} \right) \right) 
    \ge 
    \ldots
    \ge 
    \Psi_{n_{sk}}\left( \bs{\Delta}_{s,c}\left(\overline{n_{sk}} \right) \right),  
    \quad (1\le s\le S)
\end{align}
and 
for any $(l_1, \ldots, l_S) \in \mathcal{K}_{\bs{n}}(N-k)$, 
\begin{align}\label{eq:Psi_dominate}
    \sum_{s=1}^S \sum_{j=1}^{l_s} \Delta_{s, c}(\overline{j})
    \le 
    \sum_{s=1}^S \sum_{j=1}^{l_s} 
    \Psi_j\left( \bs{\Delta}_{s,c}\left( \overline{n_{sk}} \right) \right). 
\end{align}
Importantly, 
with the transformed $\Delta_{s,c}(j)$'s, 
\eqref{eq:Psi_monotone} guarantees that the naive greedy algorithm can achieve the global maximum of the objective function on the right hand side of \eqref{eq:Psi_dominate} over $
l \in \mathcal{K}_{\bs{n}}(N-k)$, 
and \eqref{eq:Psi_dominate} guarantees that the achieved global maximum must be 
no less than
the target maximum of the objective function on the left hand side of \eqref{eq:Psi_dominate}. 
This further helps provide conservative but still valid $p$-value for testing $H_{k,c}$. 

The remaining question is how to optimally construct a monotone dominating transformation.
It turns out the optimal transformation has a simple form and is easy to compute. 
Define $\tilde{\Psi}: \bs{a} = (a_1, \ldots, a_m)^\top \rightarrow 
\tilde{\Psi}(\bs{a}) = (\tilde{\Psi}_1(\bs{a}), \ldots, \tilde{\Psi}_m(\bs{a}))$ recursively as:
\begin{align}\label{eq:Psi_optimal}
    \tilde{\Psi}_i(\bs{a})
    & = 
    \begin{cases}
    \max_{1\le j \le m}j^{-1}\sum_{t=1}^j a_t, 
    & \text{for } i = 1, \\
    \max_{i \le j \le m} (j-i+1)^{-1} {\rev \left\{ \sum_{t=1}^j a_t - \sum_{t=1}^{i-1} \tilde{\Psi}_t(\bs{a}) \right\}}, 
    & 
    \text{for } i=2, \ldots, m. 
    \end{cases}
\end{align}
The following proposition establishes the optimality of $\tilde{\Psi}(\cdot)$.  

\begin{proposition}\label{prop:Psi_opt}
The mapping $\tilde{\Psi}(\cdot)$ defined in \eqref{eq:Psi_optimal} is a monotone dominating transformation as in Definition \ref{def:mono_transform}. 
Moreover, for any $m\ge 1$, $\bs{a}\in \mathbb{R}^m$ and any monotone dominating transformation $\Psi(\cdot)$, 
$\sum_{i=1}^j \tilde{\Psi}_i(\bs{a}) \le \sum_{i=1}^j \Psi_i(\bs{a})$ for all $1\le j \le m$. 
\end{proposition}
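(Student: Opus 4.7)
The plan is to verify both claims directly from the recursive definition in \eqref{eq:Psi_optimal}, with a brief geometric reinterpretation for intuition but a purely algebraic argument for the proof. Introducing the partial sums $S_j = \sum_{t=1}^j a_t$ (with $S_0 = 0$) and $T_i = \sum_{t=1}^i \tilde{\Psi}_t(\bs{a})$ (with $T_0 = 0$), the recursion reads
\[
\tilde{\Psi}_i(\bs{a}) = \max_{i \le j \le m} \frac{S_j - T_{i-1}}{j - i + 1}.
\]
Geometrically, $T_i$ traces out the least concave majorant of $\{(j, S_j)\}_{j=0}^m$ (pinned at the origin), and $\tilde{\Psi}_i$ are its successive slopes.

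For the first half of the proposition I would verify the two defining properties of Definition \ref{def:mono_transform} separately. Partial-sum domination $T_j \ge S_j$ is immediate by plugging $j = i$ into the recursion, which forces $\tilde{\Psi}_i \ge S_i - T_{i-1}$. Monotonicity $\tilde{\Psi}_i \ge \tilde{\Psi}_{i+1}$ follows by observing that for any $j \ge i+1$, the defining maximum gives $(S_j - T_{i-1})/(j - i + 1) \le \tilde{\Psi}_i$, which rearranges to $(S_j - T_i)/(j - i) \le \tilde{\Psi}_i$; maximizing the left side over $j \ge i+1$ yields the claim.

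For the optimality claim, let $U_i = \sum_{t=1}^i \Psi_t(\bs{a})$ for an arbitrary monotone dominating transformation $\Psi$, and prove $T_i \le U_i$ by induction on $i$, with $T_0 = U_0 = 0$ as the base. In the inductive step, I would first use the non-increasing property of $\Psi$ together with $U_j \ge S_j$ to conclude that, for every $j \ge i$,
\[
\Psi_i \ge \frac{U_j - U_{i-1}}{j - i + 1} \ge \frac{S_j - U_{i-1}}{j - i + 1}.
\]
The key algebraic identity is
\[
\frac{S_j - T_{i-1}}{j - i + 1} - (U_{i-1} - T_{i-1}) = \frac{S_j - U_{i-1}}{j - i + 1} - (U_{i-1} - T_{i-1}) \cdot \frac{j - i}{j - i + 1},
\]
whose right-hand side is at most $(S_j - U_{i-1})/(j - i + 1)$ since $U_{i-1} - T_{i-1} \ge 0$ (by induction) and $(j - i)/(j - i + 1) \ge 0$. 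Taking the maximum over $j \ge i$ produces $\tilde{\Psi}_i - (U_{i-1} - T_{i-1}) \le \Psi_i$, and adding $T_{i-1} \le U_{i-1}$ closes the induction.

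The main obstacle, I anticipate, is spotting the algebraic identity in the inductive step; once written down, the argument reduces to elementary sign considerations and the induction hypothesis. Everything else—the monotonicity of $\tilde{\Psi}$, the dominance of its partial sums, and the transfer of the ``averaged slope'' bound on $\Psi_i$ into a bound on $\tilde{\Psi}_i - (U_{i-1} - T_{i-1})$—is essentially a rereading of the definition.
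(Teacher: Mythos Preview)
Your proposal is correct and follows essentially the same inductive strategy as the paper's proof. Both arguments establish monotonicity and partial-sum domination directly from the recursion, and both prove optimality by induction using the bound $\Psi_i \ge (S_j - U_{i-1})/(j-i+1)$ together with the convex-combination structure of the fractions; your ``key algebraic identity'' is exactly the paper's regrouping of $U_{i-1} + (S_j - U_{i-1})/(j-i+1)$ into $S_j/(j-i+1) + (j-i)U_{i-1}/(j-i+1)$, written in subtracted form.
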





We are now ready to describe the greedy algorithm with the optimal monotone dominating transformation $\tilde{\Psi}(\cdot)$ in \eqref{eq:Psi_optimal}. 
Surprisingly at the first glance and as discussed in detail later, the greedy algorithm actually solves the linear programming problem in \eqref{eq:linear_program}.
We summarize the algorithm below.

\begin{algo}\label{alg:greedy}
Greedy algorithm for the linear programming problem in \eqref{eq:linear_program}.
\begin{tabbing}
\text{Input}: observed data $(Y_s, Z_s)$ for all $s$, the null hypothesis of interest  $H_{k,c}$, and values of $t_{s,c}(j)$
\\ 
\qquad \ \ \ 
and $\Delta_{s,c}(j)$ for stratum $1\le s\le S$ and $1\le j \le n_{sk}$; 
\\
\text{For} each stratum $s$, perform the transformation $\tilde{\Delta}_{sk,c}(\overline{n_{sk}}) \equiv \tilde{\Psi} (\Delta_{s,c}(\overline{n_{sk}}))$;\\
\text{Pool} all the transformed elements into $\mathcal{T}_{k, c} \equiv \{\tilde{\Delta}_{sk,c}(j): 1\le j \le n_{sk}, 1\le s\le S \}$;\\
\text{Output}: 
$
    t_{k,c}^{\LP} 
    = 
    \sum_{s=1}^S t_{s, c} (0) -  
    \text{sum of the largest $N-k$ elements of } \mathcal{T}_{k, c} .
$
\end{tabbing}
\end{algo}

Below we give several remarks regarding Algorithm \ref{alg:greedy}.
First, 
the final solution has a simple form, involving only sorting and summing elements of $\mathcal{T}_{k, c}$. This is due to the property (i) in Definition \ref{def:mono_transform} for the monotone dominating transformation.
As commented in \S \ref{sec:comp_greedy}, 
the computational complexity of the greedy algorithm 
is $O(N\log N+N\max_s n_s)$, 
while that for solving a general linear programming problem involving $N$ variables is at least $O(N^{2+\epsilon})$ with the latest improvement of $\epsilon=1/18$ \citep{2020arXiv200407470J}.

Second, 
the greedy algorithm here actually solves the linear programming problem in \eqref{eq:linear_program}; we give a proof in 
{\lxr \S \ref{sec:proof_greedy}.}
As demonstrated in 
\S \ref{sec:proof_greedy}, 
for each stratum $s$, 
$\{(j, \sum_{i=1}^j \tilde{\Delta}_{sk,c}(i) ): 0 \le j \le n_{sk}\}$ from the optimal transformation in \eqref{eq:Psi_optimal}
actually forms the upper convex hull of $\{(j, \sum_{i=1}^j \Delta_{s,c}(i) ): 0 \le j \le n_{sk}\}$. 
Consequently, 
the greedy algorithm we introduce here is essentially equivalent to the greedy algorithm for the linear programming relaxation of multiple-choice knapsack problems \citep[][Page 320]{kellerer2004multiple}.
Moreover, the computational complexity of the greedy algorithm can be further reduced to $O(N)$ by employing algorithms in \citet{dyer1984n} and  \citet{zemel1984n}. 
Our R package implements the $O(N)$ algorithm.

Third, 
when the rank transformations $\phi_s(\cdot)$'s in \eqref{eq:strat_rank_sum} is a concave function, the sequence $\{\Delta_{s,c}(j)\}$ will itself be decreasing in $j$, and it will be invariant under the optimal transformation $\tilde{\Psi}(\cdot)$. 
Thus, 
for stratified rank score statistics with concave transformations, 
which include the stratified Wilcoxon rank sum statistic as a special case, 
the greedy algorithm solves exactly the integer linear programming in \eqref{eq:integer_program}. 
However, the rank transformations in stratified Stephenson rank sum statistics are generally not concave. Instead, they are always convex, under which 
the monotone dominating transformation 
is 
necessary to ensure the validity of the resulting $p$-values.  
We relegate the technical details to \S \ref{sec:greedy_concave}. 

\subsection{Dynamic programming for solving the integer linear programming problem}\label{sec:dynamic}

Here we adopt the dynamic programming algorithm that can solve the multiple-choice knapsack problem in pseudopolynomial time as shown by \citet{DUDZINSKI19873}; see also \citet[][Page 329--331]{kellerer2004multiple}. 
Similar to \S \ref{sec:greedy}, we consider the equivalent form in \eqref{eq:test_stat_inf_equiv} and aim to maximize $\sum_{s=1}^S \sum_{j=1}^{l_s} \Delta_{s, c}(j)$ over $(l_1, \ldots, l_S) \in \mathcal{K}_{\bs{n}}(N-k)$. 
For $1 \le \strata \le S$ and nonnegative integer $d$, 
define $\dypr_\strata(d)$ as the maximum value of the objective function but restricted to the first $\strata$ strata and with the constraint that the sum of $l_s$'s is bounded by $d$, i.e., 
\begin{align}\label{eq:Dp}
    \dypr_\strata(d) = \max
    \Big\{\sum_{s=1}^\strata \sum_{j =1}^{l_s}\Delta_{s,c}(j): 
    (l_1, \ldots, l_{\strata})\in \mathbb{Z}^{\strata}, 0\le l_s \le n_s \text{ for all } s, 
    \sum_{s=1}^\strata l_s \le d
    \Big\}. 
\end{align} 
Equivalently, $\sum_{s=1}^\strata t_{s,c}(0) - \dypr_\strata(d)$ is the minimum value of the rank sum test statistic for the first $\strata$ strata under the constraint that there are at most $d$ units with individual effects greater than $c$. 
We further define $\dypr_{0}(d) = 0$ for all nonnegative integer $d$. 
We can verify that $\dypr_\strata(d)$'s satisfy the following recursive formula:
for $1\le \strata \le S$ and nonnegative integer $d$, 
\begin{align}\label{eq:Dp_recur}
    \dypr_\strata(d) = \max
    \big\{\dypr_{\strata-1}(d -i)+\sum_{j=1}^i\Delta_{\strata,c}(j): 
    i=0,1, \ldots, \min\{n_b, d\} \big\}. 
\end{align} 
Importantly, $\dypr_{S}(N-k)$ gives the maximum value of the objective function on the right hand side of \eqref{eq:test_stat_inf_equiv}, which equivalently provides the optimal objective value $t_{k,c}$ for the integer linear programming in \eqref{eq:integer_program}. 
We summarize the algorithm below. 

\begin{algo}\label{alg:dynamic}
Dynamic programming for the integer linear programming problem in \eqref{eq:integer_program}. 
\vspace*{-6pt}
\begin{tabbing}
\text{Input}: observed data $(Y_s, Z_s)$ for all $s$, the null hypothesis of interest  $H_{k,c}$, and values of $t_{s,c}(j)$ 
\\ 
\qquad \ \ \ 
and $\Delta_{s,c}(j)$ for stratum $1\le s\le S$ and $1\le j \le n_{sk}$;
\\
Initialize $\dypr_{0}(0) = \dypr_{0}(1) = \ldots = \dypr_{0}(N-k) = 0$;
\\
\text{For} $s=1$ to $s=S$\\
\qquad Calculate $\dypr_{s}(0), \dypr_{s}(1), \ldots, \dypr_{s}(N-k)$ using the recursive formula in \eqref{eq:Dp_recur};
\\
\text{Output}: 
$
    t_{k,c} 
    = 
    \sum_{s=1}^S t_{s, c} (0) -  \dypr_S(N-k).
$
\end{tabbing}
\end{algo}

In Algorithm \ref{alg:dynamic}, 
at each iteration and for calculating each $m_s(d)$, we calculate at most $n_s+1$ cumulative sums of $\Delta_{s,c}(j)$ and perform at most $n_s+1$ summations to complete the recursion in \eqref{eq:Dp_recur}. 
Therefore, the computational complexity of the dynamic programming is at most of order 
$(N-k)\sum_{s=1}^S (n_s+1) \le 2 (N-k) N \le 2N^2$. 
This indicates that we can solve exactly the integer programming in \eqref{eq:integer_program} in polynomial time, with exponent of the sample size being at most $2$. 
Note that a general multiple-choice knapsack problem is NP-hard. 
Here we have a polynomial-time algorithm because the total cost in our problem in \eqref{eq:integer_program} is always bounded by $N$. 




It is also worth noting that the dynamic programming with $k=0$ calculates simultaneously $t_{j,c}$ for all $0\le j \le N$, 
which then leads to $p$-values $p_{j, c}$ for all $0\le j \le N$. 
As discussed in Theorem \ref{thm:conf_set}(ii), these $p$-values immediately provide confidence sets for $n(c)$, i.e., the number of units with treatment effects greater than $c$.



\section{Simulation studies}\label{sec:simu_study}
\subsection{Computation cost for getting the valid $p$-values}\label{sec:comp_cost}

We consider a SCRE with $S$ ($=100, 500, 1000, 2000, 3000$) strata of equal size $n$ ($=50, 100, 200$), where each stratum has half of its units assigned to treatment. 
We use the stratified Stephenson rank sum statistic with 
$h_s = 6$ for all $s$.
We simulate the potential outcomes $Y_{si}(1)=Y_{si}(0)$ for all $s$ and $i$ as i.i.d.\ samples from the standard Gaussian distribution. 
We consider testing the null hypothesis on the $90\%$ quantile of individual treatment effects, 
or more precisely, 
$H_{k,0}: \tau_{(k)} \le 0$ with $N=nS$ and $k=0.9N$.
{
We consider the following algorithms for achieving the minimum test statistic value $t_{k,0}$ or $t_{k,0}^{\LP}$ from the integer linear programming (ILP) in \eqref{eq:integer_program} or the linear programming (LP) in \eqref{eq:linear_program}: 
\begin{enumerate}[label=(\arabic*)]
    \item the greedy algorithm for the LP in \eqref{eq:linear_program} (denoted by LP-GT) as described in \S \ref{sec:greedy}, 
    
    \item the dynamic programming for the ILP in \eqref{eq:integer_program} (denoted by ILP-DP) as described in \S \ref{sec:dynamic},  
    
    \item Gurobi optimizer for the LP in \eqref{eq:linear_program} (denoted by LP-Gurobi),
    
    \item Gurobi optimizer for the ILP in \eqref{eq:integer_program} (denoted by ILP-Gurobi),
    
    \item Gurobi optimizer for the piecewise linear optimization with integer constraints  (denoted by IPWL-Gurobi) as described in \S \ref{sec:pwl},
    
    \item Gurobi optimizer for the piecewise linear optimization without the integer constraints (denoted by PWL-Gurobi) as described in \S \ref{sec:pwl}.  
\end{enumerate}
{ 
For the integer programming, 
our Gurobi programming involves 
{\lxr $N+S$}
binary integer variables and  $S+1$ additional linear constraints. 
For the piecewise linear optimization without the integer constraints, 
our Gurobi programming involves $S$ variables with given lower and upper bounds, 
$1$ additional linear constraint, 
and $S$ piecewise linear constraints. 
}

Table \ref{tab:comp_cost} shows the run time of the six algorithms 
under different choices of $(n,S)$, taking median over 100 simulated datasets, where 
we exclude the time for getting $t_{s,0}(l)$'s for all $s$ and $l$.
{
It should be noted that the Gurobi optimizer incorporates an error tolerance mechanism, which implies that the solutions provided by Gurobi are accurate within certain predetermined numeric tolerance.
In contrast, both the greedy algorithm and dynamic programming solve the corresponding linear and integer programming problems exactly, subject to the level of numerical precision.  From our simulation, the average absolute values of the relative differences of LP-GT, LP-Gurobi, ILP-Gurobi, IPWL-Gurobi and PWL-Gurobi from ILP-DP, scaled by $10^4$, are, respectively, $3.94$, $3.94$, $3.28$, $3.50$ and $3.32$. These show that the 
solutions 
from the relaxed linear programming are comparable to that from Gurobi optimizations for the integer programming.

Table \ref{tab:comp_cost} shows the run time of the previously listed six algorithms. 
First, LP-GT is much faster than the other five algorithms. Second, LP-Gurobi and ILP-Gurobi take about the same time, and both of them are  much faster than IPWL-Gurobi and PWL-Gurobi. Third, 
the dynamic programming performs well with moderate sample sizes, but its computation time increases significantly as the sample size grows.
However, it is worth mentioning that a single run of the dynamic programming can efficiently calculate the $p$-values $p_{0, c}, p_{1, c}, \ldots,$ and $p_{N,c}$ at the same time, which will then provide confidence sets for $n(c)$. 
Therefore, it can be preferred when we are particularly interested in the number, {\rev or equivalently the proportion}, of units with effects greater than a given threshold; see also the discussion at the end of \S \ref{sec:dynamic}. 
}

}
\begin{table}[htb]
    \centering
    \caption{Run time {\rev in seconds} of the six algorithms, 
    linear programming (LP) solved by greedy algorithm with the optimal transformation (GT) and Gurobi,
    integer linear programming (ILP) solved by dynamic programming (DP) and Gurobi,  and piecewise-linear (PWL) optimization with and without integer constraints solved by Gurobi, for testing null hypothesis 
    on the $90\%$ quantile of individual effects under various values of $(n,S)$, 
    taking median over 100 simulated datasets.
    The simulation is conducted using AMD Ryzen 5 5600X 3.70 GHz processor.}\label{tab:comp_cost}
    \resizebox{0.9\columnwidth}{!}{%
    \begin{tabular}{llrrrrrr}
    \toprule 
    Stratum size & Strata number & LP-GT&LP-Gurobi& ILP-DP& ILP-Gurobi&IPWL-Gurobi&PWL-Gurobi\\
    \midrule
       & $S=100$&0.00& 0.01&  0.01& 0.03&  0.25&  0.24\\
       & $S=500$&0.00& 0.06&  0.28& 0.11&  0.73&  0.58\\
      $n=50$ & $S=1000$&0.01& 0.17&  1.14& 0.22&  1.78&  1.23\\
       & $S=2000$&0.02& 0.44&  4.70& 0.51&  5.85&  3.00\\
       & $S=3000$&0.03& 0.88& 10.58& 0.96& 11.49&  5.66\\
       \midrule
       & $S=100$&0.00& 0.02&  0.03& 0.04&  0.25&  0.24\\
       & $S=500$&0.01& 0.12&  0.72& 0.20&  1.66&  1.44\\
      $n=100$ & $S=1000$&0.02& 0.31&  3.00& 0.41&  3.03&  2.68\\
       & $S=2000$&0.05& 1.01& 12.23& 1.05&  8.69&  6.75\\
       & $S=3000$&0.08& 2.08& 27.60& 2.02& 39.18& 47.03\\
       \midrule
       & $S=100$&0.01& 0.03&  0.08& 0.08&  0.73&  0.68\\
       & $S=500$&0.03& 0.22&  2.21& 0.37&  3.88&  5.58\\
      $n=200$ & $S=1000$&0.07& 0.61&  9.06& 0.93& 11.38&  7.88\\
       & $S=2000$&0.14& 1.93& 36.59& 2.30& 43.44& 50.54\\
       & $S=3000$&0.22& 3.84& 82.78& 4.39& 41.93& 43.13\\
       \bottomrule
    \end{tabular}%
    }
\end{table}
{
We further consider a large dataset from \citet{Lalive2006} on studying the change in unemployment benefits in Austria. 
Similar to \citet{yu2022graded}, we focus on men who were not temporarily laid off, 
pooling the three groups with increased benefits.  
We consider the duration of unemployment as the outcome of interest, and conduct 
$1:6$ 
matching \citep{Ho2011} based on the following covariates: age, wage in prior job, an indicator of at least 3 years of work in the past 5 years, whether the job was an apprenticeship, married or not, divorced or not, education in three levels, whether or not the previous job was a blue collar job, a seasonal job, a manufacturing job. 
{
We allow control units to be matched with multiple treated units. 
The in total 55619 units are then divided into 22111 matched sets, each of which contains 1 treated unit and 6 control units. 
We then construct $80\%$ confidence intervals for the $95\%$, $90\%$, $85\%$, $80\%$ quantiles of individual treatment effects using our approach in \S \ref{sec:inv_test_conf_set} with the Stephenson rank sum statistic and $h_s = 5$ for all matched sets, 
pretending that the matched observational study is a SCRE; we can also conduct sensitivity analysis as in \S \ref{sec:sen_ana}. }
We compare all the algorithms in Table \ref{tab:comp_cost}. 
Table \ref{tab:comp_cost_large} shows the computation time of the algorithms for each of these quantiles, excluding the time for approximating the null distribution of the rank statistic, {\rev which needs to be done only once and can be shared for all quantiles, takes $14.19$ minutes here using $10^5$ Monte Carlo draws, and can also be approximated using a normal approximation as in \S \ref{sec:large_sen}.} 
{From Table \ref{tab:comp_cost_large},  
first, 
the greedy algorithm takes the shortest run time, while IPWL-Gurobi takes the longest run time. 
In particular, the IPWL-Gurobi takes about 1 hour to get the confidence interval for the $80\%$ quantile of individual effects, 
while the greedy algorithm takes about $1$ minute.
Second,
ILP-DP, PWL-Gurobi and IPWL-Gurobi tend to 
take longer run time when we consider smaller quantiles of individual effects, while the run time of the other three algorithms is quite stable across all quantiles.

From the above, the greedy algorithm can be preferable when analyzing large datasets or using statistics with concave rank transformation {\rev for which the greedy algorithm is also exact for the integer programming.}  
When we are particularly interested in the number of units with effects passing a given threshold, 
the dynamic programming algorithm can be preferred. 
In other general cases, we will suggest the Gurobi optimization for the integer programming. 
}
\begin{table}[htb]
    \centering
    \caption{Run time {\rev in minutes} of six algorithms in Table \ref{tab:comp_cost} for obtaining $80\%$ confidence intervals of the $95\%$, $90\%$, $85\%$, $80\%$ quantiles of individual treatment effects.
    }\label{tab:comp_cost_large}
    \begin{tabular}{lrrrrr}
    \toprule 
      Quantile  &$95\%$& $90\%$& $85\%$& $80\%$\\
    \midrule
        LP-GT& 1.13& 1.07& 1.08 &1.07\\
        LP-Gurobi&2.53&2.41&2.43&2.46\\
        ILP-DP& 4.54&  8.04&  11.49 & 15.40\\
        ILP-Gurobi&2.76&2.61&2.60&2.58\\
        PWL-Gurobi&7.18& 5.68& 26.76 &27.42\\
        IPWL-Gurobi&23.98&24.94&83.09&64.01\\
    \bottomrule
    \end{tabular}
\end{table}

}
\subsection{Choice of test statistics}

We consider a SCRE with $S = 200$ strata of equal size $n$ ($=10, 20, 50$), and randomly assign half of the units within each stratum to treatment. 
We simulate the potential outcomes as i.i.d.\ samples from the following model: 
\begin{align}\label{eq:simu_gene}
    Y_{si}(0) \sim \mathcal N(0,1),\quad Y_{si}(1)=\tau+Y_{si}(0)+\sigma\epsilon_{si},\quad \epsilon_{si}\sim\mathcal N(0,1),
    \quad \epsilon_{si}\perp \!\!\! \perp Y_{si}(0),
\end{align}
where we choose $\tau$ to be $0$ or $1$. 
Note that all strata have equal sizes.
We consider stratified Stephenson rank sum statistics 
with $h_1= \ldots=h_S=h$ for some $h\ge 2$.  
To compare the power under various choice of $h$, 
we focus on inferring the number of units with positive effects $n(0)$, 
and use the greedy algorithm to calculate the valid $p$-values. 


\begin{figure}[h]
    \centering
    \begin{subfigure}{\textwidth}
    \centering
    \includegraphics[width=0.6\textwidth]{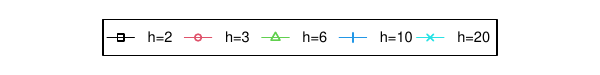}    
    \end{subfigure}
    \subfloat[$n=10$]{\includegraphics[width=0.25\textwidth]{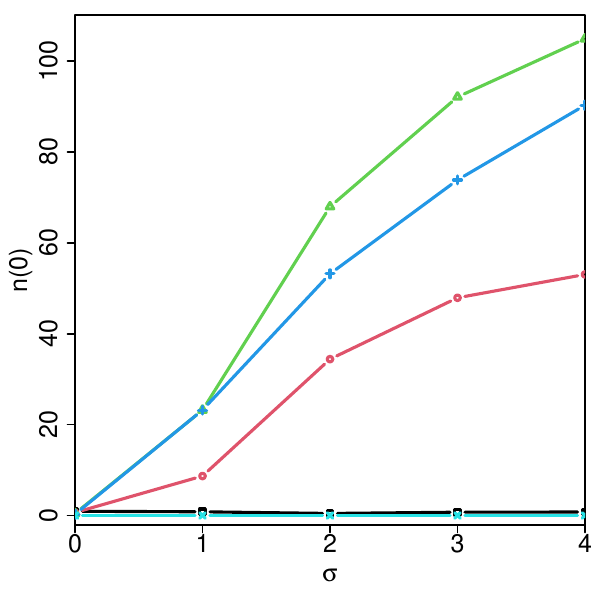}}\hfil 
    \subfloat[$n=20$]{\includegraphics[width=0.25\textwidth]{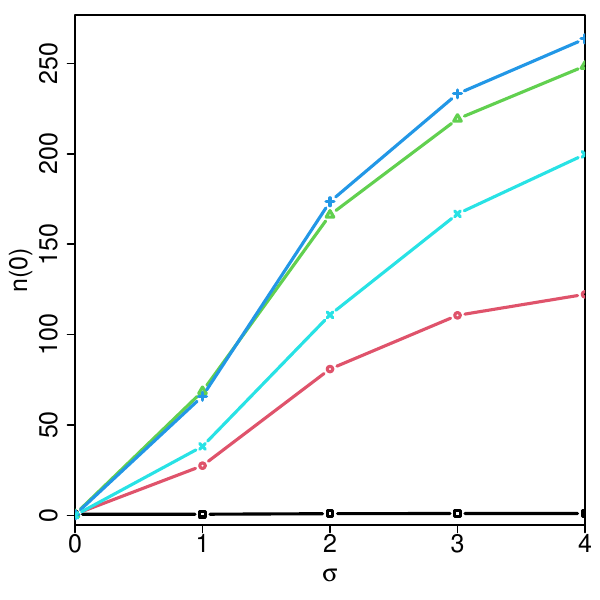}} \hfil
    \subfloat[$n=50$]{\includegraphics[width=0.25\textwidth]{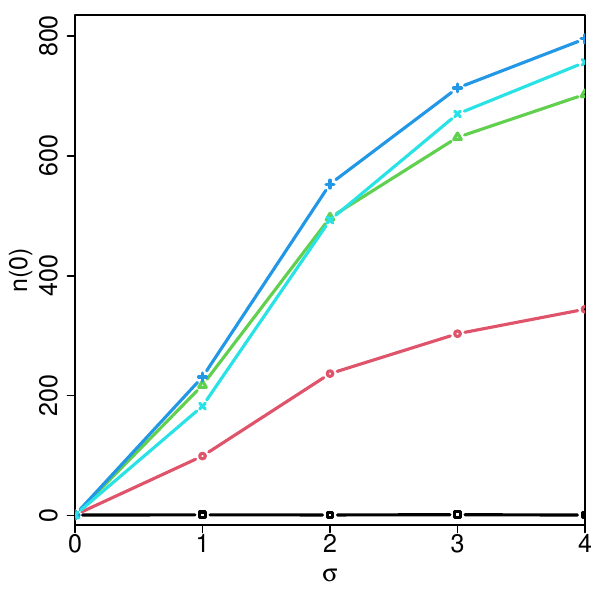}} 
    
    \subfloat[$n=10$]{\includegraphics[width=0.25\textwidth]{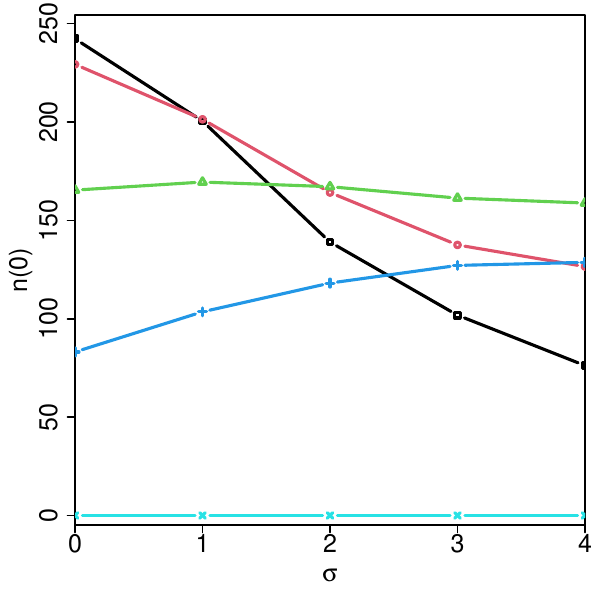}}\hfil
    \subfloat[$n=20$]{\includegraphics[width=0.25\textwidth]{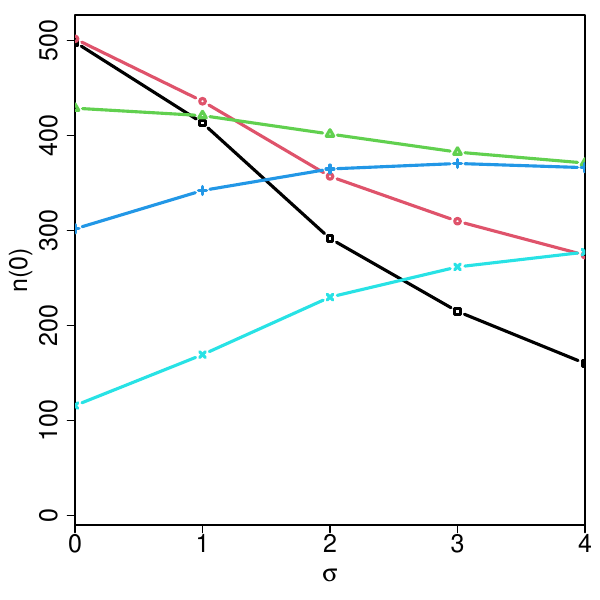}} \hfil
    \subfloat[$n=50$]{\includegraphics[width=0.25\textwidth]{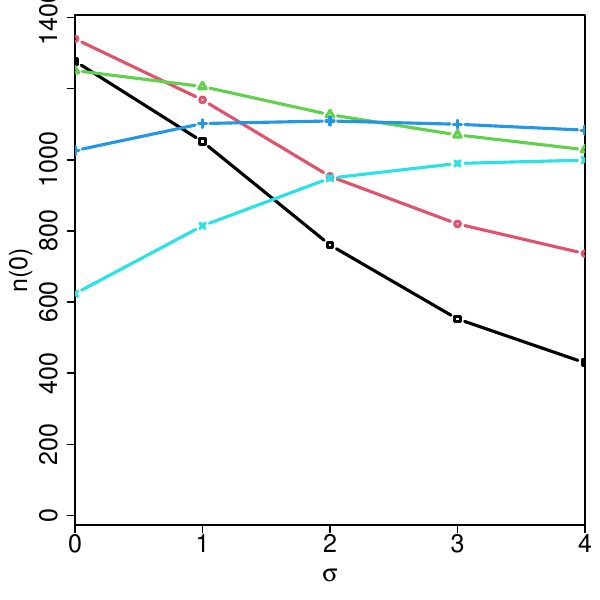}} 
\caption{
$90\%$ lower confidence limits for $n(0)$ under various Stephenson rank statistics and strata sizes, averaging over 100 simulations. 
The top and bottom rows are for $\tau =0$ and 1, respectively. }
\label{fig:tau=2}
\end{figure}
Figure \ref{fig:tau=2} shows the $90\%$ lower confidence limits of $n(0)$ using various  stratified Stephenson rank sum statistics under different strata sizes, averaging over 100 simulated datasets, 
when the expected average effect $\tau$ equals 0 and 1. 
Note that when $h > n$, the Stephenson ranks will become constant zero and the resulting test will lose power. 
We consider only the choice of $h$ such that $h<n$. 
From Figure \ref{fig:tau=2}, moderately large $h$ compared to $n$ can help improve the power for inferring $n(0)$. 
For example, when $n=50$, using the Stephenson rank with $h=6$ can provide 
more informative
lower confidence limit for $n(0)$ on average than using the Stephenson rank with $h=2$, {\rev which reduces to the Wilcoxon rank.} 
However, when $h$ is overly large compared to $n$, it can deteriorate the power of our inference. 
In practice, we suggest to use the Stephenson rank with moderate parameter compared to the stratum size. 
We leave a more detailed theoretical investigation for future study. 

\subsection{Treatment label switching  for matched observational study}

We consider a matched observational study with $S=200$ matched sets, each of which contains 1 treated unit and 9 control units. 
We generate the potential outcomes as i.i.d.\ samples from model \eqref{eq:simu_gene} with 
$(\tau, \sigma) = (1,0)$, 
i.e., the treatment has a constant effect 1. 
Moreover, 
following \citet{Rosenbaum02a}, 
we consider the favorable situation where there is no hidden confounding, 
i.e., the study is essentially a SCRE. 
In this favorable situation with nonzero treatment effects, we wish to detect significant treatment effects and hope the detected effects to be insensitive to hidden confounding. 

To illustrate the power gain from treatment label switching, we consider testing the null hypothesis $H_{k,0}:\tau_{(k)}\le 0$ with $N=2000$ and $k=0.95N$ under sensitivity models. 
In particular, we consider using either the original data or that with switched treatment labels and changed outcome signs. 
We use the greedy algorithm to calculate the valid $p$-values. 
Table \ref{tab:power_switch} shows the empirical power of the test using the stratified Stephenson rank sum statistic with $h_1 = \ldots = h_S = 4$ under various sensitivity models, based on 100 simulated datasets. 
\begin{table} 
\caption{Empirical power of two tests regarding the $90\%$ quantile of individual effects under various sensitivity models, averaging over 100 simulated datasets. One test uses the original data, while the other uses the data with switched treatment labels and changed outcome signs. }
\label{tab:power_switch}
\centering
    \begin{tabular}{ccccccccccc}
    \toprule
      $\Gamma$ &1&1.1&1.2&1.3&1.4&1.5& 1.6&1.7&1.8\\
    \midrule
    Switching labels &0.97&0.91&0.7&0.44&0.20&0.10&0.04&0.02&0.01\\
    No switching & 0.00 & 0.00  &0.00 &0.00 &0.00 &0.00 &0.00&0.00&0.00\\
    \bottomrule
    \end{tabular}%
\end{table}
From Table \ref{tab:power_switch},
switching labels can greatly increase the power of the test. When $\Gamma = 1$, i.e., there is no hidden confounding and the study reduces to a SCRE, the power of the test after label switching  is almost 1, while that using the original data is about 0. 
As $\Gamma$ increases, 
we allow more amount of 
biases
in the treatment assignment, 
and thus the power of the test decreases.
Importantly, 
label switching still provides significant gain in power when $\Gamma>1$, making the study more robust to unmeasured confounding.  
Thus, 
for general matched observational studies with one treated and multiple control units within each matched set, 
we suggest to perform label switching as discussed in Remarks \ref{rmk:switch} and \ref{rmk:switch_sen}, 
which can bring significant power gain as demonstrated by Table \ref{tab:power_switch}.


\section{Proof of Propositions, Theorems and Remarks}\label{sec:proof_prop_thm}

\subsection{Proof of Proposition \ref{prop:dist_free}}

Under the SERE, $\bs Z_1$, $\bs Z_2$, $\ldots$, $\bs Z_S$ are mutually independent, so it suffices to prove that $t_s(\bs{Z}_s,\bs{y}_s)$ is distribution free within each stratum $s$, for $1\le s \le S$.
For any $\bs{y}_s,\bs{y}_s' \in \mathbb{R}^{n_s}$, let $r_{si}=\rank_i(\bs{y_s})$ and $r_{si}'=\rank_i(\bs{y}_{s}')$ for $1\le i \le n_{s}$. 
By the property of the stratified rank score statistic in Definition \ref{def:strat_rank_score}, 
both 
$\bs{r}_{s} = (r_{s1}, r_{s2}, \ldots, r_{sn_s})$ and $\bs{r}'_{s} = (r'_{s1}, r'_{s2}, \ldots, r'_{sn_s})$ are permutations of $\{1, 2, \ldots, n_s\}$. 
Thus, there must exist two permutations $\pi(\cdot)$ and $\pi'(\cdot)$ of $\{1, 2, \ldots, n_s\}$ such that 
$r_{s\pi(i)} = r'_{s\pi'(i)} = i$ for $1\le i \le n_{s}$. Consequently, we have  $t_s(\bs{Z}_s,\bs{y}_s)=\sum_{i=1}^{n_s}Z_{si}\phi_{s}(r_{si})=\sum_{i=1}^{n_s}Z_{s\pi(i)}\phi_{s}(i)$,  and by the same logic $t_s(\bs{Z}_s,\bs{y}_s')=\sum_{i=1}^{n_s}Z_{s\pi'(i)}\phi_{s}(i)$. 
By Definition \ref{def:ERBE} of the SERE, $(Z_{s\pi(1)}, \ldots, Z_{s\pi(n_s)})$ follows the same distribution as $(Z_{s\pi'(1)}, \ldots, Z_{s\pi'(n_s)})$. 
Thus, $t_s(\bs{Z}_s,\bs{y}_s) \sim t_s(\bs{Z}_s,\bs{y}'_s)$.
From the above, Proposition \ref{prop:dist_free} holds.


\subsection{Proof of Theorem \ref{thm:pNkc}}
From \citet[][Theorem 8.3.27]{casella2002statistical}, 
$p_{k,c} = \sup_{\bs{\delta} \in \mathcal{H}_{k,c}} p_{\bs{Z}, \bs{\delta}}$ is a valid $p$-value for testing $H_{k,c}$. 
Below we prove its equivalent forms in \eqref{eq:pNkc}. 
It suffices to show that $t(\bs{Z},\bs{Y} - \bs{Z} \circ \bs{\delta})$ can achieve its infimum over $\bs{\delta} \in \mathcal{H}_{k,c}$. 
From the discussion in \S \ref{sec:simp_opt}, 
the infimum of $t(\bs{Z},\bs{Y} - \bs{Z} \circ \bs{\delta})$ over $\bs{\delta} \in \mathcal{H}_{k,c}$ has an equivalent form shown in \eqref{eq:equiv_min_test_stat}, which immediately implies that there exists $(\tilde{l}_1, \tilde{l}_2, \ldots, \tilde{l}_S) \in \mathcal{K}_{\bs{n}}(N-k)$ such that  
$
\inf_{\bs{\delta} \in \mathcal{H}_{k,c}} t(\bs{Z},\bs{Y} - \bs{Z}\circ \bs{\delta}) = \sum_{s=1}^S t_{s, c}(\tilde{l}_s).
$
From \citet[][Theorem 3]{li2020quantile}, 
there exists $\tilde{\bs{\delta}} = (\tilde{\bs{\delta}}_1^\top, \ldots, \tilde{\bs{\delta}}_S^\top)^\top$ with $\tilde{\bs{\delta}}_s \in \mathcal{H}^s_{n_s, n_s - \tilde{l}_s, c}$ for $1\le s\le S$ such that 
$t_{s, c}(\tilde{l}_s) = t_s(\bs{Z}_s, \bs{Y}_s - \bs{Z}_s \circ \tilde{\bs{\delta}}_s)$ for $1\le s\le S$. 
Therefore, we must have 
$\tilde{\bs{\delta}} \in \mathcal{H}_{k,c}$ by definition, and 
\begin{align*}
    \inf_{\bs{\delta} \in \mathcal{H}_{k,c}} t(\bs{Z},\bs{Y} - \bs{Z}\circ \bs{\delta}) = \sum_{s=1}^S t_{s, c}(\tilde{l}_s) = 
    \sum_{s=1}^S t_s(\bs{Z}_s, \bs{Y}_s - \bs{Z}_s \circ \tilde{\bs{\delta}}_s)
    = t(\bs{Z}, \bs{Y} - \bs{Z} \circ \tilde{\bs{\delta}}). 
\end{align*}
From the above, Theorem \ref{thm:pNkc} holds. 

\subsection{Proof of Theorem \ref{thm:bound_pval} and Remark \ref{rmk:bound_p_lp}}

To prove Theorem \ref{thm:bound_pval} and Remark \ref{rmk:bound_p_lp}, we introduce the following lemma. 

\begin{lemma}\label{lemma:rank_bound}
Consider any given $n\ge 1$, any $z=(z_1, \ldots, z_n)\in \{0,1\}^n$ and any permutation $\pi$ of $\{1,2,\ldots,n\}$. 
We define the following three rank functions that rank units from $1$ to $n$ based on any given vector $y=(y_1, \ldots, y_n)\in \mathbb{R}^n$: for any $1\le i\ne j \le n$, 
\begin{itemize}
    \item[(a)] $\rank_i(y) < \rank_j(y)$ if (i) $y_i < y_j$, or (ii) $y_i=y_j$ and $\pi_i < \pi_j$; 
    \item[(b)] $\overline{\rank}_i(y) < \overline{\rank}_j(y)$ if (i) $y_i < y_j$, (ii) $y_i = y_j$ and $z_i < z_j$, or (iii) $y_i = y_j$, $z_i = z_j$, and $\pi_i < \pi_j$; 
    \item[(c)] $\underline{\rank}_i(y) < \underline{\rank}_j(y)$ if (i) $y_i < y_j$, (ii) $y_i = y_j$ and $z_i > z_j$, or (iii) $y_i = y_j$, $z_i = z_j$, and $\pi_i < \pi_j$.  
\end{itemize}
Let $\phi(\cdot)$ be any increasing function on $\{1,2, \ldots, n\}$,
and define further $d(z, y) = \sum_{i=1}^n z_i \phi(\rank_i(y))$, 
$\overline{d}(z, y) = \sum_{i=1}^n z_i \phi( \overline{\rank}_i(y) )$, 
and $\underline{d}(z, y) = \sum_{i=1}^n z_i \phi( \underline{\rank}_i(y) )$. 
\begin{itemize}
    \item[(1)] For any $y = (y_1, \ldots, y_n)\in \mathbb{R}^n$, 
    $\underline{d}(z, y) \le d(z, y) \le \overline{d}(z, y)$. 
    \item[(2)] For any finite positive real number $a$, $\overline{d}(z, y) \le \underline{d}(z, y + a z)$.
\end{itemize}
Note that the values of $\overline{d}(z, y)$ and $\underline{d}(z, y)$ do not depend on how we rank tied units with the same $z_i$'s;
for simplicity, we use the same ranking as that in $\rank_i(\cdot)$ for these units.  
\end{lemma}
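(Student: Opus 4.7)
The plan for both parts of Lemma A1 is to reduce each inequality to a pointwise comparison of ranks at each treated unit $k$ (i.e., $z_k = 1$), and then invoke monotonicity of $\phi$ together with summation over the treated indices. The key bookkeeping object is, for each rank function, the set of competitors $j \ne k$ that end up strictly below $k$, whose cardinality is precisely the rank of $k$ minus one.

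For part (1), I would fix $k$ with $z_k = 1$ and partition the competitors $j \ne k$ by whether $y_j < y_k$, $y_j > y_k$, or $y_j = y_k$. The first two classes contribute identically to all three rank functions, so the only discrepancy lies in the tied class. Since $z_k = 1$, the tied competitors placed below $k$ are (i) $\{j : y_j = y_k,\, z_j = 1,\, \pi_j < \pi_k\}$ under $\underline{\rank}$; (ii) $\{j : y_j = y_k,\, \pi_j < \pi_k\}$ under $\rank$; and (iii) $\{j : y_j = y_k,\, z_j = 0\} \cup \{j : y_j = y_k,\, z_j = 1,\, \pi_j < \pi_k\}$ under $\overline{\rank}$. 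These three sets nest in this order, giving $\underline{\rank}_k(y) \le \rank_k(y) \le \overline{\rank}_k(y)$. Applying nondecreasing $\phi$ and summing over $\{k : z_k = 1\}$ yields $\underline{d}(z,y) \le d(z,y) \le \overline{d}(z,y)$.

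For part (2), I would fix $k$ with $z_k = 1$ and $a > 0$, and show $\overline{\rank}_k(y) \le \underline{\rank}_k(y + az)$ by counting competitors below $k$ on the two sides. For treated $j$ ($z_j = 1$), shifting both $y_j$ and $y_k$ by the same $a$ preserves their relative order and any $\pi$-based tie-breaking, so the contribution is identical. For control $j$ ($z_j = 0$), $j$ lies below $k$ under $\overline{\rank}(y)$ exactly when $y_j \le y_k$ (either $y_j < y_k$, or $y_j = y_k$ with $z_j < z_k$), whereas $j$ lies below $k$ under $\underline{\rank}(y + az)$ exactly when $y_j < y_k + a$ (strict, since a new tie $y_j = y_k + a$ is broken by $\underline{\rank}$ in favor of treated $k$, which actually places $j$ above $k$). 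Because $a > 0$, $\{j : z_j = 0,\, y_j \le y_k\} \subseteq \{j : z_j = 0,\, y_j < y_k + a\}$, and combining with the unchanged contribution from treated competitors gives $\overline{\rank}_k(y) \le \underline{\rank}_k(y + az)$. Monotonicity of $\phi$ and summation over $\{k : z_k = 1\}$ deliver $\overline{d}(z,y) \le \underline{d}(z, y + az)$.

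The main obstacle is purely the case analysis in the tied class: one must carefully track which of the three tie-breaking rules puts a given tied competitor above or below $k$, and in part (2) verify that the equalities potentially created by adding $az$ are resolved by $\underline{\rank}$ in a direction consistent with the nonstrict inequality $y_j \le y_k$ used on the $\overline{\rank}(y)$ side. Once the cases are separated cleanly along the three tie-breaking rules, no further substantial computation is required, and the final sum over treated indices is immediate.
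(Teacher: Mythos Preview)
Your proposal is correct and follows essentially the same route as the paper's proof: both establish the inequalities pointwise at each treated index by comparing ranks via a case analysis on competitors (the paper writes this out with indicator sums in its equation for $\rank_j$, $\overline{\rank}_j$, $\underline{\rank}_j$, while you phrase it in terms of the nested sets of competitors), and then sum over treated units using monotonicity of $\phi$. Your handling of the tie-breaking in part~(2) matches the paper's computation that $\underline{\rank}_j(y+az) - \overline{\rank}_j(y) = \sum_{i}(1-z_i)\{\mathbb{I}(y_i < y_j + a) - \mathbb{I}(y_i \le y_j)\} \ge 0$.
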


\begin{proof}[Proof of Lemma \ref{lemma:rank_bound}]
We first prove (1).
For any $1\le j\le n$ with $z_j=1$, by definition, 
\begin{align}\label{eq:rank_j}
    \rank_j(y) & = \sum_{i=1}^n \I(y_i < y_j) + \sum_{i=1}^n \I(y_i = y_j) \I(\pi_i \le \pi_j), 
    \nonumber
    \\
    \overline{\rank}_j(y) 
    & = 
    \sum_{i=1}^n \I(y_i < y_j)
    + 
    \sum_{i=1}^n  \I(y_i=y_j) \{(1-z_i)+z_i\I(\pi_i\le \pi_j)\}, 
    \nonumber
    \\
    \underline{\rank}_j(y)
    & = 
    \sum_{i=1}^n \I(y_i < y_j)
    + 
    \sum_{i=1}^n  \I(y_i=y_j) z_i \I(\pi_i \le \pi_j). 
\end{align}
These immediately imply that, for any $j$ with $z_j=1$, 
$\underline{\rank}_j(y) \le \rank_j(y) \le \overline{\rank}_j(y)$. 
Consequently, we must have $\underline{d}(z, y) \le d(z, y) \le \overline{d}(z, y)$.

We then prove (2). 
From \eqref{eq:rank_j}, for any $1\le j \le n$ with $z_j=1$, 
\begin{align*}
    & \quad \ \underline{\rank}_j(y+az)\\
    & = 
    \sum_{i=1}^n \I(y_i + az_i < y_j+a)
    + 
    \sum_{i=1}^n  \I(y_i+az_i=y_j+a) z_i \I(\pi_i \le \pi_j)\\
    & = 
    \sum_{i=1}^n \{z_i+(1-z_i)\} \I(y_i + az_i < y_j+a)
    + 
    \sum_{i=1}^n  \I(y_i=y_j) z_i \I(\pi_i \le \pi_j)\\
    & = 
    \sum_{i=1}^n z_i \I(y_i < y_j)
    + 
    \sum_{i=1}^n (1-z_i) \I(y_i < y_j+a)
    + 
    \sum_{i=1}^n  \I(y_i=y_j) z_i \I(\pi_i \le \pi_j). 
\end{align*}
From \eqref{eq:rank_j}, for any $1\le j \le n$ with $z_j=1$, we then have
\begin{align*}
    & \quad \ \underline{\rank}_j(y+az) - \overline{\rank}_j(y) \\
    & =
    \sum_{i=1}^n (1-z_i) \I(y_i < y_j+a)
    - \sum_{i=1}^n (1-z_i)\I(y_i < y_j) -
    \sum_{i=1}^n (1-z_i) \I(y_i=y_j) {\lxr \I(\pi_i \le \pi_j)} \\
    & {\lxr \ge }
    \sum_{i=1}^n (1-z_i) \{ \I(y_i < y_j+a) - \I(y_i \le y_j) \}
    \ge 0. 
\end{align*}
This immediately implies that 
$\underline{d}(z, y + a z) \ge \overline{d}(z, y)$.

From the above, Lemma \ref{lemma:rank_bound} holds. 
\end{proof}

\begin{proof}[Proof of Theorem \ref{thm:bound_pval}]
For any $\delta\in \mathbb{R}^N$, 
let $\overline{t}_s(Z_s, Y_s - Z_s \circ \delta_s)$ be the rank sum statistic for stratum $s$ using rank function $\overline{\rank}(\cdot)$, 
and $\overline{t}(Z, Y - Z \circ \delta)$ be the stratified rank sum statistic using rank function $\overline{\rank}(\cdot)$. 
Define analogously $\underline{t}_s(Z_s, Y_s - Z_s \circ \delta_s)$ and $\underline{t}(Z, Y - Z \circ \delta)$ using the rank function $\underline{\rank}(\cdot)$.

We first show that $\underline{p}_{k,c}  \le {p}_{k,c} \le \overline{p}_{k,c}$ for all $0\le k\le N$ and $c\in\mathbb R$.
From Lemma \ref{lemma:rank_bound}(1), we can know that, for any $\delta \in \mathbb{R}^N$, 
\begin{align*}
    \underline{t}_s(Z_s, Y_s - Z_s \circ \delta_s)
    \le 
    \sum_{i=1}^{n_s} Z_{si} \phi_s( \rank_i(Y_s - Z_s\circ \delta_s) ) 
    \le 
    \overline{t}_s(Z_s, Y_s - Z_s \circ \delta_s), 
    \quad (1\le s\le S)
\end{align*}
and thus 
$\underline{t}(Z, Y - Z \circ \delta) \le t(Z, Y - Z \circ \delta)\le \overline{t}(Z, Y - Z \circ \delta)$. 
This then implies that, for any $0\le k\le N$ and $c\in \mathbb{R}$, 
\begin{align*}
    G\Big(\inf_{\delta\in \mathcal{H}_{k,c}}\overline{t}(Z, Y - Z \circ \delta) \Big)
    \le  
    G\Big(\inf_{\delta\in \mathcal{H}_{k,c}}t(Z, Y - Z \circ \delta)\Big)
    \le 
    G\Big(\inf_{\delta\in \mathcal{H}_{k,c}}\underline{t}(Z, Y - Z \circ \delta)\Big), 
\end{align*}
i.e., $\underline{p}_{k,c} \le p_{k,c} \le \overline{p}_{k,c}$.

We then prove that, for any $0\le k\le N$ and $c<c'$, $\overline{p}_{k,c}\le \underline{p}_{k,c'}$. 
For any $\delta\in \mathcal H_{k,c}$, we must have $\delta' \equiv \delta + (c'-c) 1_N \in \mathcal H_{k,c'}$. This is because 
\begin{align*}
    \sum_{i=1}^N \I(\delta_i'> c' )
    =
    \sum_{i=1}^N \I\{\delta_i + (c'-c) > c' \} = 
    \sum_{i=1}^N \I(\delta_i  > c ) \le N-k. 
\end{align*}
From Lemma \ref{lemma:rank_bound}(2), for any $1\le s\le S$, 
\begin{align*}
    \overline{t}_{s}(Z_s, Y_s - Z_s \circ \delta_s') 
    \le 
    \underline{t}_{s}(Z_s, Y_s - Z_s \circ \delta_s' + (c'-c) Z_s)
    = \underline{t}_{s}(Z_s, Y_s - Z_s \circ \delta_s), 
\end{align*}
which immediately implies that 
$\overline{t}(Z, Y-Z\circ \delta') \le \underline{t}(Z, Y-Z\circ \delta)$. 
Thus, 
we must have 
$\inf_{\delta \in \mathcal{H}_{k,c'}}\overline{t}(Z, Y-Z\circ \delta) \le \inf_{\delta \in \mathcal{H}_{k,c}} \underline{t}(Z, Y-Z\circ \delta)$, which implies that 
\begin{align*}
    \overline{p}_{k,c} = G\Big( \inf_{\delta \in \mathcal{H}_{k,c}} \underline{t}(Z, Y-Z\circ \delta) \Big) \le 
    G\Big( \inf_{\delta \in \mathcal{H}_{k,c'}}\overline{t}(Z, Y-Z\circ \delta) \Big) = \underline{p}_{k,c'}. 
\end{align*}

From the above, Theorem \ref{thm:bound_pval} holds.
\end{proof}

\begin{proof}[Proof of Remark \ref{rmk:bound_p_lp}]
Define $\overline{t}_{s,c}(l)$ and $\underline{t}_{s,c}(l)$ the same as $t_{s,c}(l)$ in \eqref{eq:t_s,c} but using rank functions $\overline{\rank}(\cdot)$ and $\underline{\rank}(\cdot)$, respectively, 
and $\overline{t}_{k,c}'$ and $\underline{t}_{k,c}'$ as the minimum values of the objective function from the linear programming in \eqref{eq:linear_program} with $t_{s,c}(l)$'s replaced by $\overline{t}_{s,c}(l)$'s and $\underline{t}_{s,c}(l)$'s, respectively. 
From a special case of Theorem \ref{thm:bound_pval} with only one stratum, we can know that, 
for any $1\le s\le S$, $0\le l\le n_s$ and $c, c'\in \mathbb{R}$ with $c<c'$, 
\begin{align*}
    \underline{t}_{s,c}(l) \le t_{s,c}(l) \le \overline{t}_{s,c}(l), 
    \quad 
    \overline{t}_{s,c'}(l) \le \underline{t}_{s,c}(l), 
    \quad (1\le s\le S). 
\end{align*}
Note that the feasible region in the linear programming \eqref{eq:linear_program} does not depend on the choice of rank functions and the value of $c$. 
We must have that, for any $0\le k\le n$ and $c, c'\in \mathbb{R}$ with $c<c'$, 
$\underline{t}_{k,c}' \le t_{k,c}' \le \overline{t}_{k,c}'$, and 
$\overline{t}_{k,c'}' \le \underline{t}_{k,c}'$. 
Therefore, the $p$-values from the linear programming \eqref{eq:linear_program} using rank functions $\overline{\rank}(\cdot)$ and $\underline{\rank}(\cdot)$ must satisfy \begin{align*}
    \underline{p}_{k,c}' \equiv G(\overline{t}_{k,c}') 
    \le 
    p_{k,c}' \equiv G(t_{k,c}') 
    \le 
    \overline{p}_{k,c}' \equiv G(\overline{t}_{k,c}'), 
    \quad 
    ( 0\le k\le N, c\in \mathbb{R} )
\end{align*}
and 
\begin{align*}
    \overline{p}_{k,c}' \equiv
    G( \underline{t}_{k,c}' ) \le G( \overline{t}_{k,c'}' )
    \equiv \underline{p}_{k,c'}', 
    \quad 
    ( 0\le k\le N, \ c<c'\in \mathbb{R}).
\end{align*}
Therefore, Theorem \ref{thm:bound_pval} also holds when we consider the $p$-value $p_{k,c}'$
from the relaxed linear programming in \eqref{eq:linear_program}.
\end{proof}

\subsection{Proof of Theorem \ref{thm:conf_set} and Remarks \ref{rmk:conf_set_plus} and \ref{rmk:ci_order}}
To prove Theorem \ref{thm:conf_set}, we need the following lemma. 

\begin{lemma}\label{lemma:monotone_t}
$
t_{k,c} \equiv \inf_{\bs{\delta} \in \mathcal{H}_{k,c}} t(\bs{Z},\bs{Y} - \bs{Z}\circ \bs{\delta})
$ 
is monotone decreasing in $c$ and increasing in $k$.
\end{lemma}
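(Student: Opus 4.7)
The plan is to prove both monotonicity statements by a simple set-inclusion argument on the null parameter spaces $\mathcal{H}_{k,c}$, exploiting the fact that $t_{k,c}$ is defined as an infimum over $\mathcal{H}_{k,c}$.

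First, I would observe the following nesting properties of the null sets defined in \eqref{eq:set_H_Nkc}. For monotonicity in $c$: if $c \le c'$, then any $\bs{\delta}$ with $\delta_{(k)} \le c$ also satisfies $\delta_{(k)} \le c'$, so $\mathcal{H}_{k,c} \subseteq \mathcal{H}_{k,c'}$. For monotonicity in $k$: if $k \le k'$, then since the order statistics satisfy $\delta_{(k)} \le \delta_{(k')}$, any $\bs{\delta}$ with $\delta_{(k')} \le c$ also satisfies $\delta_{(k)} \le c$, so $\mathcal{H}_{k',c} \subseteq \mathcal{H}_{k,c}$.

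Next, I would invoke the elementary fact that an infimum taken over a smaller set is no less than an infimum over a larger set. Applying this to the two nesting relations above gives, respectively,
\begin{align*}
    t_{k,c'} \;=\; \inf_{\bs{\delta} \in \mathcal{H}_{k,c'}} t(\bs{Z},\bs{Y} - \bs{Z}\circ \bs{\delta}) \;\le\; \inf_{\bs{\delta} \in \mathcal{H}_{k,c}} t(\bs{Z},\bs{Y} - \bs{Z}\circ \bs{\delta}) \;=\; t_{k,c}
\end{align*}
whenever $c \le c'$, and
\begin{align*}
    t_{k,c} \;=\; \inf_{\bs{\delta} \in \mathcal{H}_{k,c}} t(\bs{Z},\bs{Y} - \bs{Z}\circ \bs{\delta}) \;\le\; \inf_{\bs{\delta} \in \mathcal{H}_{k',c}} t(\bs{Z},\bs{Y} - \bs{Z}\circ \bs{\delta}) \;=\; t_{k',c}
\end{align*}
whenever $k \le k'$. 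These two inequalities are precisely the required statements that $t_{k,c}$ is decreasing in $c$ and increasing in $k$.

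There is no genuine obstacle here: the lemma is a purely structural consequence of how the composite nulls are parametrized, and does not rely on any property of the stratified rank score statistic, on the SERE, or on the specific form of the optimization in \eqref{eq:equiv_min_test_stat}. The proof is essentially a two-line observation, and I would present it as such in the paper.
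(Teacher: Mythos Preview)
Your proposal is correct and is essentially identical to the paper's own proof: both arguments establish the nesting $\mathcal{H}_{k,c} \subseteq \mathcal{H}_{k,c'}$ for $c \le c'$ and $\mathcal{H}_{k',c} \subseteq \mathcal{H}_{k,c}$ for $k \le k'$ (the latter via $\delta_{(k)} \le \delta_{(k')}$), and then invoke the elementary fact that an infimum over a larger set is no greater.
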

\begin{proof}[Proof of Lemma \ref{lemma:monotone_t}]
First, we prove that $t_{k,c} $ is decreasing in $c$. By the definition in \eqref{eq:set_H_Nkc}, for any $0 \le k \le N$ and any $c_1,c_2\in \mathbb{R}$ with $c_1\le c_2$, $\mathcal{H}_{k,c_1}\subseteq \mathcal{H}_{k,c_2}$. 
This immediately implies that $t_{k,c_1} \ge t_{k,c_2}$. Thus, $t_{k,c} $ is decreasing in $c$. 

Second, we prove that $t_{k,c} $ is increasing in $k$. For any $c\in \mathbb{R}$ and integers $k_1,k_2\in [0,N]$ with $k_1\le k_2$, 
because $\delta_{(k_1)} \le \delta_{(k_2)}$ for any $\bs{\delta}\in \mathbb{R}^N$, we can derive that $\mathcal{H}_{k_1,c}\supseteq \mathcal{H}_{k_2,c}$. This immediately implies that $t_{k_1,c} \le t_{k_2,c}$. Thus, $t_{k,c} $ is increasing in $k$.

From the above, Lemma \ref{lemma:monotone_t} holds.
\end{proof}

\begin{proof}[Proof of Theorem \ref{thm:conf_set}]
First, by Lemma \ref{lemma:monotone_t} and the fact that $G(\cdot)$ is a monotone decreasing function, we can know that 
$p_{k,c} = G(t_{k,c})$ is increasing in $c$ and decreasing in $k$.

Second, we consider Theorem \ref{thm:conf_set}(i) and (ii). 
The validity of confidence sets $\mathcal{I}_{\alpha}(k)$ and $\mathcal{S}_{\alpha}(c)$ follows from the validity of the $p$-value in \eqref{thm:pNkc}, and their simplified forms follow from the property that the $p$-value $p_{k,c}$ is increasing in $c$ and decreasing in $k$. 
For conciseness, we omit the detailed proof here. 

Third, we prove that 
\begin{align}\label{eq:intersection_k}
    \left\{
    \bs{\delta}\in \mathbb{R}^N:\delta_{(k)}\in\mathcal{I}_\alpha(k), 1 \le k\le N
    \right\} =\bigcap_{k,c:p_{k,c}\le\alpha} \mathcal{H}_{k,c}^\complement. 
\end{align}
Suppose $\bs{\delta}$ is in the set on the left hand side of \eqref{eq:intersection_k}. 
Then by definition, we have $p_{k,\delta_{(k)}}>\alpha$ for all $1\le k \le N$. 
For any $1\le k\le N$ and $c\in\mathbb{R}$ such that $p_{k,c}\le\alpha$, because $p_{k,c}$ is increasing in $c$, we must have $\delta_{(k)}>c$ or equivalently $\bs{\delta}\in\mathcal{H}_{k,c}^\complement$.
This immediately implies that $\bs{\delta}$ is in the set on the right hand side of \eqref{eq:intersection_k}. 
Now suppose $\bs{\delta}$ is not in the set on the left hand side of \eqref{eq:intersection_k}. Then there exists $1\le k_0\le N$ such tht  $\delta_{(k_0)}\notin\mathcal{I}_\alpha(k_0)$ or equivalently $p_{k_0,\delta_{(k_0)}}\le\alpha$. 
This implies that $\bs{\delta}\in\mathcal{H}_{k_0,\delta_{(k_0)}} \subset  \bigcup_{k,c:p_{k,c}\le\alpha}\mathcal{H}_{k,c}$. Consequently, $\bs{\delta}$ is not in the set on the right hand side of \eqref{eq:intersection_k}. Therefore, the two sets on the left and right hand sides of \eqref{eq:intersection_k} must be equal, i.e.,  \eqref{eq:intersection_k} holds.

Fourth, we prove that 
\begin{align}\label{eq:intersection_c}
    \left\{
    \bs{\delta}\in \mathbb{R}^N:\sum_{i=1}^N \I(\delta_i > c) \in\mathcal{S}_\alpha(c),c\in\mathcal{R}
    \right\} =\bigcap_{k,c:p_{k,c}\le\alpha} \mathcal{H}_{k,c}^\complement. 
\end{align}
Suppose $\bs{\delta}$ is in the set on the left hand side of \eqref{eq:intersection_c}. Then by definition, $p_{N-n(c),c}>\alpha$ for any $c\in \mathbb{R}$, where $n(c)=\sum_{i=1}^N \I(\delta_i >  c)$. 
For any $1\le k\le N$ and any $c\in \mathbb{R}$ such that $p_{k,c}\le \alpha$, 
because $p_{k,c}$ is decreasing in $k$, 
we must have $N - n(c) < k$ or equivalently $\bs{\delta}\in\mathcal{H}_{k,c}^\complement$. 
This immediately implies that $\bs{\delta}$ is in the set on the right hand side of \eqref{eq:intersection_c}. 
Now suppose $\bs{\delta}$ is not in the set on the left hand side of \eqref{eq:intersection_c}. 
Then there exists $c_0\in\mathbb{R}$ such that $n(c_0)\notin\mathcal{S}_\alpha(c_0)$ or equivalently $p_{N-n(c_0),c_0}\le\alpha$. 
This implies that $\bs{\delta}\in\mathcal{H}_{N-n(c_0),c_0} \subset \bigcup_{k,c:p_{k,c}\le\alpha}\mathcal{H}_{k,c}$. 
Consequently, $\bs{\delta}$ is not in the set on the right hand side of \eqref{eq:intersection_c}. 
Therefore, the two sets on the left and right hand sides of \eqref{eq:intersection_c} must be equal, i.e.,  \eqref{eq:intersection_c} holds.
    
Fifth, we prove that $\bigcap_{k,c:p_{k,c}\le\alpha}\mathcal{H}_{k,c}^\complement$ is a $1-\alpha$ confidence set for the true treatment effect $\tau$.
If $\bs{\tau}\in \bigcup_{k, c: p_{k,c} \le \alpha} 
\mathcal{H}_{k,c}$, then there must exist a $(k,c)$ such that $p_{k,c}\le \alpha$ and $\bs{\tau}\in\mathcal{H}_{k,c}$, which further implies that $p_{\bs{Z},\bs{\tau}}\le \sup_{\bs{\delta}\in\mathcal{H}_{k,c}}p_{\bs{Z},\bs{\delta}} =  p_{k,c}\le\alpha$. 
Thus, the coverage probability of the set  $\bigcap_{k,c:p_{k,c}\le\alpha}\mathcal{H}_{k,c}^\complement$ satisfies 
\begin{align*}
    \Pr\Big(
    \bs{\tau} \in \bigcap_{k, c: p_{k,c} \le \alpha} 
    \mathcal{H}_{k,c}^{\c}
    \Big) =1- \Pr\Big(
    \bs{\tau} \in \bigcup_{k, c: p_{k,c}  \le \alpha} 
    \mathcal{H}_{k,c} \Big)
    \ge 1-\Pr(p_{\bs{Z,\tau}}\le \alpha)\ge 1-\alpha, 
\end{align*}
where the last inequality holds because $p_{\bs{Z},\bs{\tau}}$ is stochastically larger than or equal to $\Unif(0,1)$ by the validity of the FRT. 
Therefore, $\bigcap_{k,c:p_{k,c}\le\alpha}\mathcal{H}_{k,c}^\complement$ is a $1-\alpha$ confidence set for the true individual treatment effect $\bs \tau$. 

From the above, Theorem \ref{thm:conf_set} holds.
\end{proof}

To prove Remark \ref{rmk:conf_set_plus}, we need the following two lemmas. 

\begin{lemma}\label{lemma:monotone_t_stratum}
For all $1\le s \le S$, $t_{s, c}(l)$ is is decreasing in $c$ and $l$.
\end{lemma}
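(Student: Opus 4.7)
The plan is to reduce both monotonicity claims to a simple set-inclusion argument, exploiting the fact that $t_{s,c}(l)$ is defined in \eqref{eq:t_s,c} as the infimum of a \emph{fixed} function $\bs{\delta}_s \mapsto t_s(\bs{Z}_s, \bs{Y}_s - \bs{Z}_s \circ \bs{\delta}_s)$ (which does not depend on $c$ or $l$) over the constraint set $\mathcal{H}_{n_s-l,c}^s$. As soon as one shows that this constraint set is monotone (with respect to inclusion) in both $c$ and $l$, the monotonicity of $t_{s,c}(l)$ follows at once, because the infimum of a fixed function over a larger set can only be no greater than the infimum over a smaller set.

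First I would handle monotonicity in $c$. Fix $l$ and take any $c_1 \le c_2$. Using the form $\mathcal{H}_{n_s-l,c}^s = \{\bs{\delta}\in\mathbb{R}^{n_s}: \delta_{(n_s-l)} \le c\}$ from the definition in \S\ref{sec:simp_opt}, any $\bs{\delta}$ with $\delta_{(n_s-l)}\le c_1$ also satisfies $\delta_{(n_s-l)}\le c_2$, so $\mathcal{H}_{n_s-l,c_1}^s \subseteq \mathcal{H}_{n_s-l,c_2}^s$. Taking infima over the nested sets gives $t_{s,c_1}(l) \ge t_{s,c_2}(l)$, establishing that $t_{s,c}(l)$ is decreasing in $c$.

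Next I would handle monotonicity in $l$. Fix $c$ and take any $l_1 \le l_2$ with $0\le l_1\le l_2\le n_s$. Using the equivalent form $\mathcal{H}_{n_s-l,c}^s = \{\bs{\delta}\in\mathbb{R}^{n_s}: \sum_{i=1}^{n_s}\I(\delta_i>c)\le l\}$, any $\bs{\delta}$ with at most $l_1$ coordinates exceeding $c$ has at most $l_2$ such coordinates, so $\mathcal{H}_{n_s-l_1,c}^s \subseteq \mathcal{H}_{n_s-l_2,c}^s$. Taking infima again yields $t_{s,c}(l_1) \ge t_{s,c}(l_2)$, so $t_{s,c}(l)$ is decreasing in $l$.

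There is no serious obstacle here; the only thing to be careful about is the bookkeeping of the index $n_s-l$ versus $l$ when switching between the two equivalent descriptions of $\mathcal{H}_{n_s-l,c}^s$, and ensuring the direction of the set inclusion translates correctly into the direction of monotonicity of the infimum. This mirrors the analogous argument already carried out in Lemma~\ref{lemma:monotone_t} at the aggregate level, but is one step simpler since no decomposition across strata is needed.
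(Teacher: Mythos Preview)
Your proposal is correct and is essentially the same set-inclusion argument the paper uses; the paper simply observes that Lemma~\ref{lemma:monotone_t_stratum} is the one-stratum special case of Lemma~\ref{lemma:monotone_t}, whose proof is exactly the inclusion $\mathcal{H}_{k,c_1}\subseteq\mathcal{H}_{k,c_2}$ for $c_1\le c_2$ and $\mathcal{H}_{k_1,c}\supseteq\mathcal{H}_{k_2,c}$ for $k_1\le k_2$ that you spell out directly at the stratum level.
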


\begin{proof}[Proof of Lemma \ref{lemma:monotone_t_stratum}]
Lemma \ref{lemma:monotone_t_stratum} is actually a special case of Lemma \ref{lemma:monotone_t} with only one stratum, and thus it is directly implied by  Lemma \ref{lemma:monotone_t}. 
\end{proof}

\begin{lemma}\label{lemma:monotone_t_lp}
$
t^\LP_{k,c} 
$ 
is decreasing in $c$ and increasing in $k$.
\end{lemma}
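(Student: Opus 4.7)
The plan is to prove the two monotonicity claims separately, both reducing to Lemma \ref{lemma:monotone_t_stratum} which states that $t_{s,c}(l)$ is decreasing in $c$ and in $l$.

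Monotonicity in $c$ is essentially immediate. The feasible region of the LP \eqref{eq:linear_program} is determined solely by $k$ (through $n_{sk}$ and the constraint $\sum_{s,l} x_{sl} l = N-k$), not by $c$. Since $t_{s,c}(l)$ is pointwise decreasing in $c$, the linear objective $\sum_{s,l} x_{sl} t_{s,c}(l)$ is decreasing in $c$ at every feasible point. For $c_1 < c_2$, taking any optimal $\{x^{(1)}_{sl}\}$ for $c_1$ gives
\[
  t^{\LP}_{k,c_2} \ \le \ \sum_{s,l} x^{(1)}_{sl} t_{s,c_2}(l) \ \le \ \sum_{s,l} x^{(1)}_{sl} t_{s,c_1}(l) \ = \ t^{\LP}_{k,c_1}.
\]

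Monotonicity in $k$ is the harder part, because both the feasible set and the objective change with $k$. The plan is to rewrite the LP as a per-stratum allocation problem. For each stratum $s$, define
\[
  \pi_s(\bar{l}) \ = \ \min\Big\{\sum_{l=0}^{n_s} x_l t_{s,c}(l) \ : \ \sum_l x_l = 1, \ \sum_l l\, x_l = \bar{l}, \ 0 \le x_l \le 1\Big\}, \quad 0\le \bar{l}\le n_s,
\]
which is the lower convex envelope of the points $\{(l, t_{s,c}(l)): l=0,1,\ldots,n_s\}$. Because Lemma \ref{lemma:monotone_t_stratum} gives $t_{s,c}(0)\ge t_{s,c}(1)\ge \cdots \ge t_{s,c}(n_s)$, this envelope $\pi_s$ is nonincreasing on $[0,n_s]$. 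I would then argue that the LP \eqref{eq:linear_program} is equivalent to
\[
  t^{\LP}_{k,c} \ = \ \min\Big\{\sum_{s=1}^S \pi_s(\bar{l}_s) \ : \ \sum_s \bar{l}_s = N-k, \ 0\le \bar{l}_s \le n_s \Big\},
\]
with the implicit constraint $\bar{l}_s \le n_{sk}$ being automatic from $\bar{l}_s \le \sum_s \bar{l}_s = N-k$.

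Given this reformulation, for $k_1 < k_2$ the proof is a coupling argument. Let $\{\bar{l}^{(2)}_s\}$ attain $t^{\LP}_{k_2,c}$. The key computation is that $\sum_s (n_s - \bar{l}^{(2)}_s) = N - (N-k_2) = k_2 \ge k_2 - k_1$, so there is enough total slack to raise the $\bar{l}^{(2)}_s$'s by a total of $k_2 - k_1$ while respecting $\bar{l}^{(1)}_s \le n_s$; this produces a feasible $\{\bar{l}^{(1)}_s\}$ for the $k_1$-problem with $\bar{l}^{(1)}_s \ge \bar{l}^{(2)}_s$ coordinatewise. By the monotonicity of each $\pi_s$, $\pi_s(\bar{l}^{(1)}_s) \le \pi_s(\bar{l}^{(2)}_s)$, and summing yields $t^{\LP}_{k_1,c} \le \sum_s \pi_s(\bar{l}^{(1)}_s) \le \sum_s \pi_s(\bar{l}^{(2)}_s) = t^{\LP}_{k_2,c}$.

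The main obstacle I anticipate is justifying the per-stratum reformulation cleanly: one must argue that optimally the LP separates across strata once the marginal masses $\bar{l}_s = \sum_l l x_{sl}$ are fixed, which follows from the fact that the objective and the $\sum_l x_{sl}=1$ constraints are per-stratum and only the resource constraint $\sum_s \bar{l}_s = N-k$ couples them. Once that is in place, both monotonicity assertions fall out with essentially no further calculation.
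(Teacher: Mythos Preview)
Your argument for monotonicity in $c$ is correct and essentially identical to the paper's.

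For monotonicity in $k$, the reformulation you propose has a genuine gap. You define $\pi_s$ as the lower convex envelope over the \emph{full} support $\{0,1,\ldots,n_s\}$ and claim $t^{\LP}_{k,c}=\min\{\sum_s\pi_s(\bar{l}_s):\sum_s\bar{l}_s=N-k,\ 0\le\bar{l}_s\le n_s\}$. But the LP \eqref{eq:linear_program} restricts the index to $l\le n_{sk}=\min\{n_s,N-k\}$. When $n_{sk}<n_s$ and the sequence $t_{s,c}(0),\ldots,t_{s,c}(n_s)$ is not convex---precisely the situation that necessitates the paper's monotone dominating transformation---your full-support envelope can be strictly below the truncated one on $[0,n_{sk}]$, so your outer problem delivers a value strictly smaller than $t^{\LP}_{k,c}$. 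A concrete instance: $S=2$, $n_1=n_2=3$, $N-k=2$ (so $n_{sk}=2$), and $t_{s,c}(\cdot)=(10,9,8,0)$ for both $s$; then $t^{\LP}_{k,c}=18$, while your reformulation gives $40/3$ (put mass $2/3$ on $l=3$ and $1/3$ on $l=0$ in one stratum). Your coupling then proves monotonicity only of this smaller surrogate, not of $t^{\LP}_{k,c}$ itself. The obstacle you flagged (separability across strata) is not the issue; the support mismatch is.

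The repair is easy and preserves your structure: define the $k$-dependent envelope $\pi_s^{(k)}$ over $\{0,\ldots,n_{sk}\}$, making the reformulation exact, and for $k_1<k_2$ add the one extra inequality $\pi_s^{(k_1)}\le\pi_s^{(k_2)}$ on $[0,n_{sk_2}]$ (larger support gives a smaller infimum), after which your coupling goes through. The paper avoids envelopes and works directly with the LP variables: from any feasible $\{x_{sl}\}$ for $k$ it shifts mass $\epsilon_{sl}\in[0,x_{sl}]$ off each $x_{sl}$ with $l<n_{s,k-1}$ onto $x_{s,n_{s,k-1}}$, tunes the $\epsilon$'s continuously to hit the new resource constraint $N-(k-1)$, and observes that the objective can only decrease since $t_{s,c}(n_{s,k-1})\le t_{s,c}(l)$ by Lemma~\ref{lemma:monotone_t_stratum}. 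Once the support issue is handled, your envelope picture is arguably cleaner; the paper's construction is more hands-on but sidesteps the subtlety entirely.
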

\begin{proof}[Proof of Lemma \ref{lemma:monotone_t_lp}]
First, we prove that  $t_{N,k,c}^{\LP}$ is decreasing in $c$ for any $0 \le k \le N$. 
For any $c_1, c_2\in\mathbb{R}$ with $c_1\le c_2$ and  any $0 \le l_s \le n_s$,
from Lemma \ref{lemma:monotone_t_stratum}, we have $t_{s,c_1}(l_s) \ge t_{s,c_2}(l_s)$. 
Thus, for any $x_{sl}$'s satisfying the constrains in the linear programming problem \eqref{eq:linear_program}, we must have 
$
    \sum_{s=1}^S\sum_{l=1}^{n_{sk}} t_{s,c_1}(l)\cdot x_{sl}\ge \sum_{s=1}^S\sum_{l=1}^{n_{sk}} t_{s,c_2}(l)\cdot x_{sl}. 
$
This immediately implies that the solution from the linear programming \eqref{eq:linear_program} when $c=c_1$ must be greater than or equal to that when $c=c_2$, i.e. $t_{k,c_1}\ge t_{k,c_2}$.

Second, we prove that $t_{k,c}^{\LP}$ is increasing in $k$ for any given $c\in \mathbb{R}$.
It suffices to prove that $t_{k,c}^{\LP}\ge t_{k-1,c}^{\LP}$ for any $c\in \mathbb{R}$ and $1\le k\le N$. 
Consider any $x_{sl}$'s that satisfy the constraints in the linear programming problem \eqref{eq:linear_program}, i.e., 
\begin{align}
    \sum_{l=0}^{n_{sk}} x_{sl} = 1, 
    \quad 
    & \sum_{s=1}^S \sum_{l=0}^{n_{sk}} l \cdot x_{sl} = N-k, 
    \quad 
    0 \le x_{sl} \le 1, 
    \qquad (1\le l \le n_{sk}; s=1, 2, \ldots, S). 
\end{align}
For each $1\le s\le S$, 
if $n_{s,k-1} = n_{sk}+1$, we additionally define $x_{s, n_{s,k-1}} = 0$. 
We then consider the following transformation of $x_{sl}$'s: 
\begin{align*}
    \tilde{x}_{sl} = \begin{cases}
    x_{sl}-\epsilon_{sl}, & \text{for } l= 0,1,\ldots n_{s,k-1}-1 \\
    x_{sl}+\sum_{i=0}^{n_{s,k-1}-1}\epsilon_{si}, & \text{for } l= n_{s, k-1}, 
    \end{cases}
\end{align*}
for some $\epsilon_{sl}$'s satisfying  $0\le \epsilon_{sl}\le x_{sl}$ for all $s$. 
We can verify that 
\begin{align*}
    \sum_{l=1}^{n_{s,k-1}} \tilde{x}_{sl} = \sum_{l=1}^{n_{s,k-1}} x_{sl} = \sum_{l=1}^{n_{sk}} x_{sl} = 1, \qquad (1\le s\le S)
\end{align*}
and 
\begin{align*}
    0 \le 
    \tilde{x}_{sl} 
    \le 
    \begin{cases}
    x_{sl}, & \text{if } 0 \le l \le n_{s,k-1}-1 \\
    x_{sl}+\sum_{i=0}^{n_{s,k-1}-1}x_{si}, & \text{if } l= n_{s, k-1}, 
    \end{cases}
    \le 1, 
    \quad (1\le l \le n_{s,k-1}; 1\le s \le S). 
\end{align*}
Moreover, 
\begin{align*}
    \sum_{s=1}^S\sum_{l=0}^{n_{s,k-1}} l\cdot \tilde{x}_{sl}
    & = \sum_{s=1}^S\sum_{l=0}^{n_{s,k-1}} l\cdot x_{sl} + \sum_{s=1}^S\sum_{l=1}^{n_{s,k-1}-1} (n_{s,k-1}-l)\cdot \epsilon_{sl}
    \\
    & = \sum_{s=1}^S\sum_{l=0}^{n_{sk}} l\cdot x_{sl} + \sum_{s=1}^S\sum_{l=1}^{n_{s,k-1}-1} (n_{s,k-1}-l)\cdot \epsilon_{sl}
    \\ 
    & = 
    N-k + \sum_{s=1}^S\sum_{l=1}^{n_{s,k-1}-1} (n_{s,k-1}-l)\cdot \epsilon_{sl}
\end{align*}
is a
continuous and increasing function of $\epsilon_{sl}$'s, 
which achieves minimum value $N-k$ when $\epsilon_{sl}$'s are all zero
and achieves maximum value 
$\sum_{s=1}^S n_{s,k-1} \ge N-(k-1)$ when $\epsilon_{sl} = x_{sl}$ for all $s$ and $1\le l \le n_{s,k-1}-1$.  
Therefore, there must exist $\epsilon_{sl}$'s such that 
$\sum_{s=1}^S\sum_{l=0}^{n_{s,k-1}} l\cdot \tilde{x}_{ls} = N-(k-1)$. 
Moreover, the corresponding $\tilde{x}_{ls}$'s must satisfy that 
\begin{align*}
    \sum_{s=1}^S\sum_{l=0}^{n_{s,k-1}} t_{s,c}(l)\cdot \tilde{x}_{sl}&=\sum_{s=1}^S\sum_{l=0}^{n_{s,k-1}} t_{s,c}(l)\cdot x_{sl}+\sum_{s=1}^S\sum_{l=1}^{n_{s,k-1}-1} \left\{t_{s,c}(n_{s,k-1})-t_{s,c}(l)\right\}\cdot \epsilon_{sl}
    \\
    & = \sum_{s=1}^S\sum_{l=0}^{n_{sk}} t_{s,c}(l)\cdot x_{sl}+\sum_{s=1}^S\sum_{l=1}^{n_{s,k-1}-1} \left\{t_{s,c}(n_{s,k-1})-t_{s,c}(l)\right\}\cdot \epsilon_{sl}
    \\
    &\le \sum_{s=1}^S\sum_{l=0}^{n_{sk}} t_{s,c}(l)\cdot x_{sl}, 
\end{align*}
where the last equality holds because $t_{s,c}(l)$ is decreasing in $l$ as shown in Lemma \ref{lemma:monotone_t_stratum}.
From the above, for any possible value of the objective function from the linear programming \eqref{eq:linear_program}, 
we can always find a smaller value of the objective function from the linear programming \eqref{eq:linear_program} with $k$ replaced by $k-1$.
Therefore, the solution from the linear programming \eqref{eq:linear_program} with $k$ replaced by $k-1$ must be smaller than or equal to that from \eqref{eq:linear_program}, i.e., $ t_{k-1,c}^{\LP} \le t_{k,c}^{\LP}$. 

From the above, Lemma \ref{lemma:monotone_t_lp} holds.
\end{proof}

\begin{proof}[Proof of Remark \ref{rmk:conf_set_plus}]
From Lemmas \ref{lemma:monotone_t_lp}, 
Remark \ref{rmk:conf_set_plus} follows
by the same logic as Theorem \ref{thm:conf_set}. 
For conciseness, we omit the detailed proof here. 
\end{proof}
{
\begin{proof}[Proof of Remark \ref{rmk:ci_order}]
From Theorem \ref{thm:bound_pval},
for any $1\le k\le N$, 
$\{c: \underline{p}_{k, c} > \alpha\} \subset \{c: {p}_{k, c} > \alpha\} \subset \{c: \overline{p}_{k, c} > \alpha\}$. 
Moreover, because $\underline{p}_{k, c}$ and $\overline{p}_{k, c}$ are essentially special cases of $p_{k,c}$ when all control units are ordered before or after treated units, 
by the same logic as Theorem \ref{thm:conf_set}, 
all of the three sets are intervals of form $(c, \infty)$ or $[c, \infty)$. 
Below we prove that $\mathcal{C} = \{c: \overline{p}_{k, c} > \alpha\} \setminus \{c: \underline{p}_{k, c} > \alpha\}$ contains at most one point by contradiction.
Suppose that $c_1<c_2$ are both in $\mathcal{C}$. 
Then we have $\underline{p}_{k, c_2} \le \alpha$ and $\overline{p}_{k, c_1} > \alpha$. 
However, this contradicts with Theorem \ref{thm:bound_pval}, which implies that $\overline{p}_{k, c_1} \le \underline{p}_{k, c_2}$. 
Thus, $\mathcal{C}$ contains at most one point.
By the same logic, this will also be true when we consider the $p$-value from the relaxed linear programming. 
From the above, Remark \ref{rmk:ci_order} holds. 
\end{proof}
}

\subsection{Proof of Theorem \ref{thm:sen_ana}}

First, we prove that $\sup_{\bs{u}\in\mathcal U} G_{\bs Z,\bs \delta,\bs u,\Gamma}(c)$ does not depend on $Y_{\bs{Z}, \bs{\delta}}(0)$, and the supremum can be achieved at some $\bs{u} \in \mathcal{U}$. 
By the definition of the stratified rank score statistic in Definition \ref{def:strat_rank_score}, for each $1\le s\le S$, 
there exists a permutation $\{\pi_{s1}, \pi_{s2}, \ldots, \pi_{sn_s}\}$ of $\{1, 2, \ldots, n_s\}$ such that 
$\rank_{\pi_{si}}(\bs{Y}_s - \bs{Z}_s \circ \bs{\delta}_s) = i$ for $1\le i \le n_s$. 
For any $\bs{a}\in \mathcal{Z}$ and $\bs{u} \in \mathcal{U}$, 
define their permutations $\tilde{\bs{a}}$ and $\tilde{\bs{u}}$ such that 
$\tilde{a}_{si} = a_{s\pi_{si}}$ and $\tilde{u}_{si} = u_{s\pi_{si}}$ for all $1\le s\le S$ and $1\le i \le n_s$. 
Let $\bs{y}=(\bs{y}_1^\top, \ldots, \bs{y}_S^\top)^\top$ be a vector such that $\rank_i(\bs{y}_s) = i$ for $1\le i \le n_s$. 
We then have  
\begin{align*}
    t(\bs{a}, \bs{Y}_{\bs{Z},\bs{\delta}}(0))
    & = 
    \sum_{s=1}^S \sum_{i=1}^{n_s} a_{si} \phi_s\left
    (\rank_i(\bs{Y}_s - \bs{Z}_s \circ \bs{\delta}_s) \right)
    = 
    \sum_{s=1}^S \sum_{i=1}^{n_s} a_{s\pi_{si}} \phi_s(i)
    \\
    & = \sum_{s=1}^S \sum_{i=1}^{n_s} a_{s\pi_{si}} \phi_s(\rank_i(\bs{y}_s))
    = \sum_{s=1}^S \sum_{i=1}^{n_s} \tilde{a}_{si} \phi_s(\rank_i(\bs{y}_s)) = 
    t(\tilde{\bs{a}}, \bs{y}), 
\end{align*}
and consequently, 
\begin{align*}
G_{\bs{Z}, \bs{\delta}, \bs{u}, \Gamma} (c) 
& = 
\sum_{\bs{a} \in \mathcal{Z}} 
\prod_{s=1}^S
\frac{\exp(\gamma \sum_{i=1}^{n_s} a_{si} u_{si})}{
\sum_{i=1}^{n_s} \exp(\gamma u_{si})}
\I\left\{
t(\bs{a}, \bs{Y}_{\bs{Z},\bs{\delta}}(0))
\ge 
c
\right\}
\\
& = 
\sum_{\bs{a} \in \mathcal{Z}} 
\prod_{s=1}^S
\frac{\exp(\gamma \sum_{i=1}^{n_s} \tilde{a}_{si} \tilde{u}_{si})}{
\sum_{i=1}^{n_s} \exp(\gamma \tilde{u}_{si})}
\I\left\{
t(\tilde{\bs{a}}, \bs{y})
\ge 
c
\right\}\\
& = 
\sum_{\bs{a} \in \mathcal{Z}} 
\prod_{s=1}^S
\frac{\exp(\gamma \sum_{i=1}^{n_s} a_{si} \tilde{u}_{si})}{
\sum_{i=1}^{n_s} \exp(\gamma \tilde{u}_{si})}
\I\left\{
t(\bs{a}, \bs{y})
\ge 
c
\right\}. 
\end{align*}
Thus, 
\begin{align}\label{eq:G_sup_over_u_proof}
    \sup_{\bs{u}\in \mathcal{U}}G_{\bs{Z}, \bs{\delta}, \bs{u}, \Gamma} (c)  
    & = 
    \sup_{\bs{u}\in \mathcal{U}}
    \sum_{\bs{a} \in \mathcal{Z}} 
    \prod_{s=1}^S
    \frac{\exp(\gamma \sum_{i=1}^{n_s} a_{si} \tilde{u}_{si})}{
    \sum_{i=1}^{n_s} \exp(\gamma \tilde{u}_{si})}
    \I\left\{
    t(\bs{a}, \bs{y})
    \ge 
    c
    \right\}
    \nonumber
    \\
    & = 
    \sup_{\bs{u}\in \mathcal{U}}
    \sum_{\bs{a} \in \mathcal{Z}} 
    \prod_{s=1}^S
    \frac{\exp(\gamma \sum_{i=1}^{n_s} a_{si} u_{si})}{
    \sum_{i=1}^{n_s} \exp(\gamma u_{si})}
    \I\left\{
    t(\bs{a}, \bs{y})
    \ge 
    c
    \right\}, 
\end{align}
which does not depend on the imputed potential outcomes $\bs{Y}_{\bs{Z}, \bs{\delta}}(0)$. 
Moreover, it is not hard to see that the function in \eqref{eq:G_sup_over_u_proof} is continuous in $\bs{u}\in \mathcal{U}$. Because $\mathcal{U}$ is compact, the supremum in \eqref{eq:G_sup_over_u_proof} can be achieved at some $\bs{u} \in \mathcal{U}$. 
Therefore, \eqref{eq:G_Gamma} holds. 


Second, we prove the equivalent forms of $p_{k,c, \Gamma}$ in Theorem \ref{thm:sen_ana}. 
From the proof of Theorem \ref{thm:pNkc}, 
$t(\bs{Z}, \bs{Y}_{\bs{Z}, \bs{\delta}}(0))$ can achieve its infimum over $\bs{\delta} \in \mathcal{H}_{k,c}$ at some $\tilde{\bs{\delta}} \in \mathcal{H}_{k,c}$. 
Let $\tilde{c} = t(\bs{Z}, \bs{Y}_{\bs{Z}, \tilde{\bs{\delta}}}(0))$. 
We then have, for any $\bs{\delta} \in \mathcal{H}_{k,c}$ and $\bs{u} \in \mathcal{U}$, 
\begin{align*}
    p_{\bs{Z}, \bs{\delta}, \bs{u}, \Gamma} \equiv G_{\bs{Z}, \bs{\delta}, \bs{u}, \Gamma} \left( t(\bs{Z}, \bs{Y}_{\bs{Z}, \bs{\delta}}(0)) \right) 
    \le 
    G_{\bs{Z}, \bs{\delta}, \bs{u}, \Gamma} \left( \tilde{c} \right) 
    \le 
    G_{\Gamma}( \tilde{c} ). 
\end{align*}
This immediately implies that $p_{k,c, \Gamma} \equiv \sup_{\bs{\delta} \in \mathcal{H}_{k,c}, \bs{u} \in \mathcal{U}} p_{\bs{Z}, \bs{\delta}, \bs{u}, \Gamma} \le G_{\Gamma}( \tilde{c} )$. 
From the discussion before, there must exists $\tilde{\bs{u}}\in \mathcal{U}$ such that $G_{\bs{Z}, \tilde{\bs{\delta}}, \tilde{\bs{u}}, \Gamma} ( \tilde{c} ) = G_{\Gamma}( \tilde{c} )$. This implies that 
\begin{align*}
    p_{k,c, \Gamma} 
    & \equiv \sup_{\bs{\delta} \in \mathcal{H}_{k,c}, \bs{u} \in \mathcal{U}} p_{\bs{Z}, \bs{\delta}, \bs{u}, \Gamma} 
    = 
    \sup_{\bs{\delta} \in \mathcal{H}_{k,c}, \bs{u} \in \mathcal{U}}
    G_{\bs{Z}, \bs{\delta}, \bs{u}, \Gamma} \left( t(\bs{Z}, \bs{Y}_{\bs{Z}, \bs{\delta}}(0)) \right) 
    \\
    & \ge 
    G_{\bs{Z}, \tilde{\bs{\delta}}, \tilde{\bs{u}}, \Gamma} \left( t(\bs{Z}, \bs{Y}_{\bs{Z}, \tilde{\bs{\delta}}}(0)) \right)
    = G_{\bs{Z}, \tilde{\bs{\delta}}, \tilde{\bs{u}}, \Gamma}(\tilde{c})
    = G_{\Gamma}( \tilde{c} ). 
\end{align*}
Therefore, we must have $p_{k,c, \Gamma}  = G_{\Gamma}(\tilde{c})$.

Third, we prove the validity of the $p$-value $p_{k,c, \Gamma}$ for testing $H_{k,c}$ in \eqref{eq:H_Nkc} under the sensitivity model with bias at most $\Gamma$. 
Let $\bs{u}^{*}$ be the true value of the unmeasured confounding associated with the sensitivity model with bias at most $\Gamma$. 
If the null hypothesis $H_{k,c}$ holds, we have $\bs{\tau} \in \mathcal{H}_{k,c}$, and thus 
$p_{k,c,\Gamma} \ge p_{\bs{Z}, \bs{\tau}, \bs{u}^*, \Gamma}$, 
which compares the observed value of test statistic $t(\bs{Z}, \bs{Y}(0))$ to its true distribution under the sensitivity model with bias $\Gamma$ and unmeasured confounding $\bs{u}^{*}$. 
Therefore, $p_{k,c,\Gamma} \ge p_{\bs{Z}, \bs{\tau}, \bs{u}^*, \Gamma}$ must be stochastically larger than or equal to $\Unif(0,1)$. 

From the above, Theorem \ref{thm:sen_ana} holds.

\subsection{Proof of Theorem \ref{thm:sen_ana_large_sample}}
To prove Theorem \ref{thm:sen_ana_large_sample}, we need the following three lemmas. 

\begin{lemma}\label{lemma:lindfellerclt}
For each $n$, let $\xi_{n,m}$, $1\leq m\leq n$, be independent random variables with $\E \xi_{n,m}=0$. Suppose 
\begin{itemize}
\item[(i)] $\sum_{m=1}^{n} \E \xi_{n,m}^2 \rightarrow \sigma^2 > 0$, 
\item[(ii)] For all $c > 0$, 
$\lim_{n\rightarrow \infty} \sum_{m=1}^{n} \E [\xi^2_{n,m}1\{|\xi_{n,m}|>c\}] = 0$.
\end{itemize}
Then $\Xi_n = \xi_{n,1} + \ldots + \xi_{n,n} \converged \mathcal{N}(0, \sigma^2)$ as $n\rightarrow \infty$.
\end{lemma}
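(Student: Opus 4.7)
The plan is to prove this via the method of characteristic functions combined with L\'evy's continuity theorem. Writing $\varphi_n(t) = \E[e^{it\Xi_n}]$ and $\varphi_{n,m}(t) = \E[e^{it\xi_{n,m}}]$, the mutual independence of the row variables gives $\varphi_n(t) = \prod_{m=1}^n \varphi_{n,m}(t)$, and it suffices to show $\varphi_n(t) \to e^{-t^2\sigma^2/2}$ pointwise in $t \in \mathbb{R}$. I will compare this product to $\prod_{m=1}^n (1 - t^2 \sigma_{n,m}^2 / 2)$, where $\sigma_{n,m}^2 = \E[\xi_{n,m}^2]$, and then show that this simpler product converges to $e^{-t^2\sigma^2/2}$.

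For the first comparison, I would use the standard Taylor bound $|e^{ix} - 1 - ix + x^2/2| \le \min(|x|^3/6, x^2)$. Together with $\E \xi_{n,m} = 0$, this yields
\begin{equation*}
\bigl| \varphi_{n,m}(t) - (1 - t^2 \sigma_{n,m}^2/2) \bigr| \le \E\bigl[\min(|t\xi_{n,m}|^3/6, \, t^2 \xi_{n,m}^2)\bigr].
\end{equation*}
Splitting the expectation at $|\xi_{n,m}| \le \epsilon$ versus $|\xi_{n,m}| > \epsilon$ and summing over $m$ gives a bound of the form $(|t|^3 \epsilon / 6) \sum_m \sigma_{n,m}^2 + t^2 \sum_m \E[\xi_{n,m}^2 \mathbf{1}\{|\xi_{n,m}| > \epsilon\}]$, which can be made arbitrarily small by sending $n \to \infty$ using assumption (ii) and then $\epsilon \to 0$ using assumption (i). The elementary inequality $|\prod z_m - \prod w_m| \le \sum |z_m - w_m|$ for complex numbers with $|z_m|, |w_m| \le 1$ then transfers this sum bound to a bound on the difference between the two products.

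Second, I would show $\prod_{m=1}^n (1 - t^2 \sigma_{n,m}^2 / 2) \to e^{-t^2 \sigma^2 / 2}$. The key preliminary is negligibility of the maximum variance, $\max_m \sigma_{n,m}^2 \to 0$, which follows because for any $\epsilon > 0$, $\sigma_{n,m}^2 \le \epsilon^2 + \E[\xi_{n,m}^2 \mathbf{1}\{|\xi_{n,m}| > \epsilon\}]$, so $\max_m \sigma_{n,m}^2 \le \epsilon^2 + \sum_m \E[\xi_{n,m}^2 \mathbf{1}\{|\xi_{n,m}| > \epsilon\}] \to \epsilon^2$. Given this, for large $n$ each factor is close to $1$, and a uniform use of $\log(1-x) = -x + O(x^2)$ converts the product into $\exp(-\tfrac{t^2}{2} \sum_m \sigma_{n,m}^2 + o(1))$, which tends to $e^{-t^2 \sigma^2/2}$ by (i). A final appeal to L\'evy's continuity theorem gives $\Xi_n \converged \mathcal{N}(0, \sigma^2)$.

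The main obstacle is the careful interplay between the two limiting regimes (first $n \to \infty$, then $\epsilon \to 0$) and ensuring that the Taylor remainder estimate is uniform in $m$. The negligibility of $\max_m \sigma_{n,m}^2$ is the linchpin that makes both the logarithmic expansion and the multiplicative comparison valid simultaneously; everything else is bookkeeping. Since this is a classical result, I would expect the proof to be short once the above ingredients are assembled, and in fact it could alternatively be obtained by citing \citet{casella2002statistical} or any standard probability reference without reproducing the argument in full.
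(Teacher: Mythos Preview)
Your proposal is correct: it outlines the standard characteristic-function proof of the Lindeberg--Feller central limit theorem, and the steps you describe (Taylor expansion of $e^{ix}$, truncation at level $\epsilon$, negligibility of $\max_m \sigma_{n,m}^2$ via the Lindeberg condition, and the logarithmic comparison of the products) are exactly the ingredients of the classical argument. There are no gaps.

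The paper, however, does not reproduce this argument at all. Its entire proof is a one-line citation: ``Lemma~A1 is the Lindeberg--Feller theorem; see, e.g., \citet[][Theorem 3.4.5]{durrett2010probability}.'' You anticipated this possibility in your final sentence. So the difference is simply that the paper treats the lemma as a textbook fact to be quoted, whereas you sketched the underlying mechanism. Your approach buys self-containment; theirs buys brevity, which is appropriate given that the lemma is invoked only as a tool for Lemma~A2 and is not a contribution of the paper.
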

\begin{proof}[Proof of Lemma \ref{lemma:lindfellerclt}]
Lemma \ref{lemma:lindfellerclt} is the Lindeberg-Feller theorem; see, e.g., \citet[][Theorem 3.4.5]{durrett2010probability}. 
\end{proof}

\begin{lemma}\label{lemma:match_set_clt}
    Let $\bs{Z}\in \mathcal{Z}$ be a treatment assignment vector following the sensitivity model with bias at most $\Gamma\ge 1$ and unmeasured confounding $\bs{u}\in \mathcal{U}$, 
    $\bs{y} \in \mathbb{R}^N$ be any fixed constant vector, 
    and $t(\bs{Z}, \bs{y})$ be the stratified rank score statistic. 
    If Condition \ref{cond:match_set_clt} holds, then 
    \begin{align*}
    \frac{t(\bs{Z}, \bs{y}) - \E_{\bs{u}, \Gamma}\{t(\bs{Z}, \bs{y})\}}{
    \sqrt{\Var_{\bs{u}, \Gamma}\{t(\bs{Z}, \bs{y})\}}
    }
    \converged \mathcal{N}(0, 1), 
    \end{align*}
where $\E_{\bs{u}, \Gamma}(\cdot)$ and $\Var_{\bs{u}, \Gamma}(\cdot)$ denotes the mean and variance under the sensitivity model with bias at most $\Gamma$ and unmeasured confounding $\bs{u}$. 
\end{lemma}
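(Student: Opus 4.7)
The plan is to apply the Lindeberg-Feller theorem (Lemma \ref{lemma:lindfellerclt}) to the independent decomposition $t(\bs Z,\bs y)=\sum_{s=1}^{S} t_s(\bs Z_s,\bs y_s)$, which is legitimate because $\bs Z_1,\ldots,\bs Z_S$ are mutually independent under the sensitivity model. Define the standardized summands
\[
\xi_{S,s}=\frac{t_s(\bs Z_s,\bs y_s)-\mu_{s,\Gamma,\bs u_s}(\bs y_s)}{\sqrt{\Var_{\bs u,\Gamma}\{t(\bs Z,\bs y)\}}}, \qquad 1\le s\le S.
\]
They are independent, mean zero, and satisfy $\sum_{s=1}^{S}\E[\xi_{S,s}^{2}]=1$ by construction, so condition (i) of Lemma \ref{lemma:lindfellerclt} holds with $\sigma^{2}=1$.

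For condition (ii), the key observation is that under matched observational studies with one treated unit per set, $t_s(\bs Z_s,\bs y_s)$ takes values in $[\phi_s(1),\phi_s(n_s)]$, so $|t_s(\bs Z_s,\bs y_s)-\mu_{s,\Gamma,\bs u_s}(\bs y_s)|\le R_s$ and consequently $|\xi_{S,s}|\le R_s/\sqrt{\sigma^{2}_{\Gamma,\bs u}(\bs y)}$, where $\sigma^{2}_{\Gamma,\bs u}(\bs y)=\sum_s v_{s,\Gamma,\bs u_s}^{2}(\bs y_s)$. Therefore, if I can show $\max_{1\le s\le S}R_s^{2}/\sigma^{2}_{\Gamma,\bs u}(\bs y)\to 0$, then for any fixed $c>0$ the indicators $\mathbf{1}\{|\xi_{S,s}|>c\}$ are eventually all zero, and the Lindeberg condition reduces to $0$.

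The main step, which I regard as the main obstacle, is the variance lower bound on each set. Applying the identity $\Var(X)=\tfrac12\E(X-X')^{2}$ with an independent copy $X'$ and keeping only the cross term between the extreme ranks $r=1$ and $r=n_s$,
\[
v_{s,\Gamma,\bs u_s}^{2}(\bs y_s)\ \ge\ p_{s,1}\,p_{s,n_s}\,\bigl(\phi_s(n_s)-\phi_s(1)\bigr)^{2}=p_{s,1}\,p_{s,n_s}\,R_s^{2},
\]
where $p_{s,r}$ is the probability that the treated unit in set $s$ has rank $r$ under the sensitivity model. Since each factor $\exp(\gamma u_{sj})\in[1,\Gamma]$, one checks $p_{s,r}\in[1/(n_s\Gamma),\Gamma/n_s]$, giving $v_{s,\Gamma,\bs u_s}^{2}(\bs y_s)\ge R_s^{2}/(n_s^{2}\Gamma^{2})\ge R_s^{2}/(n_s^{3}\Gamma^{2})$. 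Summing over $s$,
\[
\frac{\max_{1\le s\le S}R_s^{2}}{\sigma^{2}_{\Gamma,\bs u}(\bs y)}\ \le\ \Gamma^{2}\cdot\frac{\max_{1\le s\le S}R_s^{2}}{\sum_{s=1}^{S}R_s^{2}/n_s^{3}}\ \longrightarrow\ 0
\]
by Condition \ref{cond:match_set_clt}. This verifies the Lindeberg condition and yields the asserted central limit theorem. The rest of the argument is mechanical; only the variance lower bound genuinely uses the structure of the sensitivity model and the rank score statistic.
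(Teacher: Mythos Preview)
Your proof is correct and follows essentially the same route as the paper's: apply Lindeberg--Feller to the independent per-set summands, use the uniform bound $|\xi_{S,s}|\le R_s/\sigma_{\Gamma,\bs u}(\bs y)$, and lower-bound each $v_{s,\Gamma,\bs u_s}^{2}$ via the probabilities at the two extreme ranks so that $\max_s R_s^2/\sigma_{\Gamma,\bs u}^2(\bs y)\to 0$ under Condition~\ref{cond:match_set_clt}. Your variance bound through $\Var(X)=\tfrac12\E(X-X')^{2}$ is in fact a bit cleaner than the paper's (which first bounds the distances $\mu_s-\phi_s(1)$ and $\phi_s(n_s)-\mu_s$) and gives the slightly sharper constant $1/(n_s^{2}\Gamma^{2})$ in place of $2/(n_s^{3}\Gamma^{3})$.
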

\begin{proof}[Proof of Lemma \ref{lemma:match_set_clt}]
For $1\le s \le S$, 
let $T_s=t(\bs Z_s, \bs y_s)$ be the rank score statistic for stratum $s$, 
and 
$\mu_s=\E_{\bs{u}, \Gamma}(T_s)$ and $v_s^2=\Var_{\bs{u}, \Gamma}(T_s)$
be the mean and variance of $T_s$, both of which depend on the unmeasured confounding $\bs{u}$. 
Let $T = \sum_{s=1}^S T_s$ be the stratified rank score statistic. 
By the mutually independence of $T_s$'s, $T$ has mean $\mu \equiv \E_{\bs{u}, \Gamma}(T) = \sum_{s=1}^S \mu_s$ and variance $\sigma^2 \equiv \Var_{\bs{u}, \Gamma}(T) = \sum_{s=1}^S v_s^2$. 
Define $X_s = (T_s - \mu_s)/\sigma$ for $1\le s\le S$. Obviously, $X_s$ has mean zero for all $s$, and $\sum_{s=1}^S \E_{\bs{u}, \Gamma}(X_s^2) = 1$. 

Below we consider bounding $X_s^2$ for all $s$. 
By construction, $\phi_s(1) \le T_s \le \phi_s(n_s)$. This implies that $\phi_s(1) \le  \mu_s \le \phi_s(n_s)$ and consequently $|T_s - \mu_s| \le \phi_s(n_s)-\phi_s(1) = R_s$. 
Let $1\le j_{s1}, j_{s2} \le n_s$ be the indices such that $\rank_{j_{s1}}(\bs{y}_s) = 1$ and $\rank_{j_{s2}}(\bs{y}_s) = n_s$, 
and 
let $\Pr_{\bs{u}, \Gamma}(\cdot)$ denote the probability measure for the treatment assignment vector $Z$ under the sensitivity model with bias at most $\Gamma$ and unmeasured confounding $\bs{u}$. 
We then have 
\begin{align*}
    v_s^2 & = \E_{\bs{u}, \Gamma}\big\{ \left( T_s - \mu_s \right)^2 \big\}
    = \sum_{i=1}^{n_s} \Pr_{\bs{u}, \Gamma}(Z_{si}=1)\{\rank_{i}(\bs{y}_s)-\mu_s\}^2
    \\
    & \ge 
    \Pr_{\bs{u}, \Gamma}(Z_{sj_{s1}}=1)\{\phi_s(1)-\mu_s\}^2 + \Pr_{\bs{u}, \Gamma}(Z_{sj_{s2}}=1)\{\phi_s(n_s)-\mu_s\}^2. 
\end{align*}
Let $p_{s1} = \Pr_{\bs{u}, \Gamma}(Z_{sj_{s1}}=1)$ and $p_{s2} = \Pr_{\bs{u}, \Gamma}(Z_{sj_{s2}}=1)$. 
We can then bound $\mu_s$ by 
\begin{align*}
    p_{s2} \phi_s(n_s) + (1-p_{s2}) \phi_s(1)
    \le 
    \mu_s \le p_{s1} \phi_s(1) + (1-p_{s1}) \phi_s(n_s), 
\end{align*}
which implies that 
\begin{align*}
    \mu_s - \phi_s(1) \ge p_{s2} \{ \phi_s(n_s) - \phi_s(1)\} = p_{s2} R_s, 
    \quad 
    \phi_s(n_s) - \mu_s \ge p_{s1}  \{ \phi_s(n_s) - \phi_s(1)\} = p_{s1} R_s.  
\end{align*}
Thus, we can bound $v_s^2$ by 
\begin{align*}
    v_s^2 
    \ge p_{s1}\{\phi_s(1)-\mu_s\}^2 + p_{s2}\{\phi_s(n_s)-\mu_s\}^2
    \ge p_{s1} p_{s2}^2 R_s^2 + p_{s2} p_{s1}^2 R_s^2 
    = p_{s1} p_{s2} (p_{s1} + p_{s2})R_s^2. 
\end{align*}
By definition, we can verify that, under the sensitivity model with bias at most $\Gamma$, both $p_{s1}$ and $p_{s2}$ must be greater than or equal to $1/\{1 + (n_s-1)\Gamma\} \ge 1/(n_s\Gamma)$. 
Consequently, 
\begin{align*}
    v_s^2 \ge p_{s1} p_{s2} (p_{s1} + p_{s2})R_s^2 \ge \frac{2}{n_s^3\Gamma^3}R_s^2. 
\end{align*}
We can then bound $X_s^2$ by 
\begin{align}\label{eq:bound_X_s2}
    X_s^2 = \frac{(T_s - \mu_s)^2}{\sigma^2} = \frac{(T_s - \mu_s)^2}{\sum_{s=1}^S v_s^2} 
    \le 
    \frac{\Gamma^3}{2}
    \frac{R_s^2}{ \sum_{s=1}^S R_s^2/n_s^3 }. 
\end{align}

Finally, we prove the asymptotic Gaussianity of $T$. 
From \eqref{eq:bound_X_s2}, 
under Condition \ref{cond:match_set_clt}, as $S \rightarrow \infty$,  
\begin{align*}
    \max_{1\le s \le S} X_s^2  = \max_{1\le s \le S}\frac{(T_s-\mu_s)^2}{\sigma^2} \le \max_{1\le s \le S}
    \frac{\Gamma^3}{2}
    \frac{R_s^2}{\sum_{s=1}^S R_s^2/n_s^3}
    \rightarrow 0. 
\end{align*}
Thus, for any $\epsilon>0$, there exists $\underline{S}$ such that $\max_{1\le s \le S} X_s^2 < \epsilon^2$ when $S\ge \underline{S}$. 
Consequently, 
$\sum_{s=1}^{S} \E_{\bs{u}, \Gamma} [X^2_s \I\{|X_s|>\epsilon\}]=0$ when $S\ge \underline{S}$. 
This implies that, for any $\epsilon>0$, 
$\E_{\bs{u}, \Gamma} [X^2_s \I\{|X_s|>\epsilon\}]$ converges to zero as $S \rightarrow \infty$. 
Therefore, by the Lindeberg--Feller central limit theorem in Lemma \ref{lemma:lindfellerclt}, as $S \rightarrow \infty$, 
\begin{align*}
    \frac{T - \mu}{\sigma} = \frac{\sum_{s=1}^S(T_s-\mu_s)}{\sqrt{\sum_{s=1}^S v_s^2}} = 
    \sum_{s=1}^S X_s \converged \mathcal N(0,1). 
\end{align*}

From the above, Lemma \ref{lemma:match_set_clt} holds. 
\end{proof}

\begin{lemma}\label{lemma:mu_sigma_tilde_greater}
Under the sensitivity model with bias at most $\Gamma$, 
if Condition \ref{cond:sen_asymp_conservative} holds, then for any $k \ge 0$, 
$\tilde{\mu}_{\Gamma} + k \tilde{\sigma}_{\Gamma}$ is greater than or equal to  $\mu + k \sigma$ when $S$ is sufficiently large. 
\end{lemma}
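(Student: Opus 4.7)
The plan is to reduce the claim to showing $\tilde{\mu}_{\Gamma}-\mu \ge k(\sigma - \tilde{\sigma}_{\Gamma})$, split into two cases depending on the sign of $\sigma - \tilde{\sigma}_{\Gamma}$, and use Condition \ref{cond:sen_asymp_conservative} to control the unfavorable case. First I would use the decomposition $\tilde{\mu}_{\Gamma}-\mu = \sum_{s}(\tilde{\mu}_{s,\Gamma}-\mu_s)$ and the fact (noted in the discussion preceding Condition \ref{cond:sen_asymp_conservative}) that $\tilde{\mu}_{s,\Gamma}\ge \mu_s$ for every $s$, with strict inequality whenever $s\in\mathcal{A}$. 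This yields the lower bound $\tilde{\mu}_{\Gamma}-\mu \ge |\mathcal{A}|\Delta_{\Gamma}(\mu)$. If $\mathcal{A}=\emptyset$, then $\tilde{v}^2_{s,\Gamma}\ge v_s^2$ for all $s$ forces $\tilde{\sigma}_{\Gamma}\ge \sigma$, so the inequality is immediate.

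Otherwise I would handle the two cases for $\sigma$ vs.\ $\tilde{\sigma}_{\Gamma}$. If $\sigma \le \tilde{\sigma}_{\Gamma}$, then $k(\tilde{\sigma}_{\Gamma}-\sigma)\ge 0$ and combined with $\tilde{\mu}_{\Gamma}\ge \mu$ the conclusion follows. If $\sigma > \tilde{\sigma}_{\Gamma}$, I would write
\begin{align*}
    \sigma - \tilde{\sigma}_{\Gamma}
    = \frac{\sigma^2 - \tilde{\sigma}^2_{\Gamma}}{\sigma+\tilde{\sigma}_{\Gamma}}
    = \frac{\sum_{s}(v_s^2 - \tilde{v}^2_{s,\Gamma})}{\sigma+\tilde{\sigma}_{\Gamma}}
    \le \frac{|\mathcal{A}|\Delta_{\Gamma}(v^2)}{\tilde{\sigma}_{\Gamma}},
\end{align*}
where I discarded the non-positive contributions from $s\notin\mathcal{A}$ and bounded $\sigma+\tilde{\sigma}_{\Gamma}\ge \tilde{\sigma}_{\Gamma}$.

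Next I would obtain a uniform lower bound on $\tilde{\sigma}_{\Gamma}$ of the right order. The argument from the proof of Lemma \ref{lemma:match_set_clt} showing $v_s^2 \ge 2R_s^2/(n_s^3\Gamma^3)$ applies to any admissible unmeasured confounding under the sensitivity model with bias $\Gamma$, hence in particular to the maximizer that defines $\tilde{v}^2_{s,\Gamma}$. Summing over $s$ gives $\tilde{\sigma}_{\Gamma}\ge \sqrt{2/\Gamma^3}\sqrt{\sum_s R_s^2/n_s^3}$. Plugging this into the previous display and combining with $\tilde{\mu}_{\Gamma}-\mu\ge |\mathcal{A}|\Delta_{\Gamma}(\mu)$ gives
\begin{align*}
    \tilde{\mu}_{\Gamma}-\mu - k(\sigma-\tilde{\sigma}_{\Gamma})
    \ge |\mathcal{A}|\Delta_{\Gamma}(\mu)\left(1 - \frac{k\sqrt{\Gamma^3/2}\,\Delta_{\Gamma}(v^2)}{\Delta_{\Gamma}(\mu)\sqrt{\sum_{s=1}^S R_s^2/n_s^3}}\right).
\end{align*}
Condition \ref{cond:sen_asymp_conservative} forces the fraction inside the parentheses to vanish as $S\rightarrow\infty$, so the right-hand side is nonnegative for all sufficiently large $S$, completing the proof.

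The only delicate step is the lower bound on $\tilde{\sigma}_{\Gamma}$: I need to verify that the argument from Lemma \ref{lemma:match_set_clt} really applies to the maximizing $\bs{u}$ defining $\tilde{v}^2_{s,\Gamma}$, which it does because that bound only used the sensitivity-model constraint $\bs{u}\in [0,1]^{n_s}$ and not any optimality property of $\bs{u}$. Everything else is bookkeeping with the definitions of $\mathcal{A}$, $\Delta_{\Gamma}(\mu)$, and $\Delta_{\Gamma}(v^2)$.
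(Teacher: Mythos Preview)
Your proposal is correct and follows essentially the same route as the paper: bound $\tilde{\mu}_{\Gamma}-\mu$ below by $|\mathcal{A}|\Delta_{\Gamma}(\mu)$, bound $\sigma-\tilde{\sigma}_{\Gamma}$ above by a constant times $|\mathcal{A}|\Delta_{\Gamma}(v^2)/\tilde{\sigma}_{\Gamma}$, invoke the lower bound $\tilde{\sigma}_{\Gamma}^2\ge 2\Gamma^{-3}\sum_s R_s^2/n_s^3$ from the proof of Lemma~\ref{lemma:match_set_clt}, and finish with Condition~\ref{cond:sen_asymp_conservative}. The only cosmetic difference is that the paper bounds $\sigma-\tilde{\sigma}_{\Gamma}$ via the concavity inequality $\sqrt{a}-\sqrt{b}\le (a-b)/(2\sqrt{b})$ (citing \citet{Rosenbaum2000sep}), whereas you use the identity $\sqrt{a}-\sqrt{b}=(a-b)/(\sqrt{a}+\sqrt{b})$; your version is more self-contained but loses an inessential factor of~$2$.
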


\begin{proof}[Proof of Lemma \ref{lemma:mu_sigma_tilde_greater}]
Recall that $\mathcal{A} = \{s: v_{s}^2 > \tilde{v}_{s,\Gamma}^2, 1\le s\le S \} $. 
By the construction in \eqref{eq:mu_s_Gamma} and  \eqref{eq:v_s_Gamma}, we must have $\mu_{s}<\tilde{\mu}_{s,\Gamma}$ for $s\in \mathcal{A}$. 
By definition, we then have 
\begin{align*}
    \tilde{\mu}_{\Gamma} - \mu & = 
    \sum_{s=1}^S (\tilde{\mu}_{s, \Gamma}-\mu_{s}) \ge \sum_{s\in \mathcal{A}} (\tilde{\mu}_{s, \Gamma}-\mu_{s}) = 
    |\mathcal{A}| \cdot \frac{1}{|\mathcal{A}|} \sum_{s\in \mathcal{A}} (\tilde{\mu}_{s, \Gamma}-\mu_{s})
\end{align*}
and
\begin{align*}
    \sigma - \tilde{\sigma}_{\Gamma} & = 
    \sqrt{\sum_{s=1}^S v_{s}^2} - \sqrt{\sum_{s=1}^S \tilde{v}_{s,\Gamma}^2} 
    =
    \sqrt{\sum_{s\in \mathcal{A}} v_{s}^2+ \sum_{s\notin \mathcal{A}} v_{s}^2} - \sqrt{\sum_{s\in \mathcal{A}} \tilde{v}_{s,\Gamma}^2 + \sum_{s\notin \mathcal{A}} \tilde{v}_{s,\Gamma}^2}
    \\
    &\le
    \sqrt{\sum_{s\in \mathcal{A}} v_{s}^2+ \sum_{s\notin \mathcal{A}} \tilde{v}_{s,\Gamma}^2} - \sqrt{\sum_{s\in \mathcal{A}} \tilde{v}_{s,\Gamma}^2 + \sum_{s\notin \mathcal{A}} \tilde{v}_{s,\Gamma}^2}
    \\& \le \frac{1}{2\tilde{\sigma}_{\Gamma}}\sum_{s\in \mathcal{A}} ( v_{s}^2 - \tilde{v}_{s,\Gamma}^2)
    =  
    \frac{|\mathcal{A}|}{2\tilde{\sigma}_{\Gamma}} \cdot \frac{1}{|\mathcal{A}|}\sum_{s\in \mathcal{A}} ( v_{s}^2 - \tilde{v}_{s,\Gamma}^2), 
\end{align*}
where the last inequality holds due to \citet[][Lemma 1]{Rosenbaum2000sep}. 
Thus, for any $k \ge 0$, 
\begin{align*}
    & \quad \ ( \tilde{\mu}_{\Gamma} + k \tilde{\sigma}_{\Gamma} ) - (\mu + k \sigma) 
    \\
    & = (\tilde{\mu}_{\Gamma} - \mu) - k(\sigma - \tilde{\sigma}_{\Gamma})
    \ge |\mathcal{A}| \cdot \frac{1}{|\mathcal{A}|} \sum_{s\in \mathcal{A}} (\tilde{\mu}_{s, \Gamma}-\mu_{s}) - \frac{|\mathcal{A}|}{2\tilde{\sigma}_{\Gamma}} \cdot \frac{k}{|\mathcal{A}|}\sum_{s\in \mathcal{A}} ( v_{s}^2 - \tilde{v}_{s,\Gamma}^2)\\
    & =
    |\mathcal{A}| \frac{\Delta_{\Gamma}(v^2) }{\tilde{\sigma}_{\Gamma}}
    \left\{ 
    \frac{\Delta_{\Gamma}(\mu)}{\Delta_{\Gamma}(v^2)} \tilde{\sigma}_{\Gamma} - \frac{k}{2}
    \right\}. 
\end{align*}
From the proof of Lemma \ref{lemma:match_set_clt}, we know that  $\tilde{\sigma}_{\Gamma}^2 \ge 2/\Gamma^3 \cdot \sum_{s=1}^S R_s^2/n_s^3$. 
Therefore, 
\begin{align*}
    ( \tilde{\mu}_{\Gamma} + k \tilde{\sigma}_{\Gamma} ) - (\mu + k \sigma) 
    & \ge 
    |\mathcal{A}|
    \frac{\Delta_{\Gamma}(v^2) }{\tilde{\sigma}_{\Gamma}}
    \left\{ 
    \frac{\Delta_{\Gamma}(\mu)}{\Delta_{\Gamma}(v^2)}\tilde{\sigma}_{\Gamma} - \frac{k}{2} 
    \right\}
    \\
    & \ge |\mathcal{A}|
    \frac{\Delta_{\Gamma}(v^2) }{\tilde{\sigma}_{\Gamma}}
    \left\{ 
    \frac{\sqrt{2}}{\Gamma^{3/2}}
    \frac{\Delta_{\Gamma}(\mu)}{\Delta_{\Gamma}(v^2)}\sqrt{\sum_{s=1}^S R_s^2/n_s^3} - \frac{k}{2} 
    \right\}. 
\end{align*}
Under Condition \ref{cond:sen_asymp_conservative}, 
there must exists $\underline{S}$ such that when $S\ge \underline{S}$, 
$\sqrt{\sum_{s=1}^S R_s^2/n_s^3} \cdot \Delta_{\Gamma}(\mu)/\Delta_{\Gamma}(v^2) \ge \Gamma^{3/2}k/(2\sqrt{2})$ or  $\Delta_{\Gamma}(v^2) = 0$. 
Consequently, when $S\ge \underline{S}$, we must have have 
$( \tilde{\mu}_{\Gamma} + k \tilde{\sigma}_{\Gamma} ) - (\mu + k \sigma) \ge 0$. 
Therefore, the difference between $\tilde{\mu}_{\Gamma} + k \tilde{\sigma}_{\Gamma}$ and $\mu + k \sigma$ must be nonnegative when $S$ is sufficiently large. 
\end{proof}

\begin{proof}[Proof of Theorem \ref{thm:sen_ana_large_sample}]
Let $T = t(\bs{Z}, \bs{Y}(0))$ be the stratified rank score statistic using the true control potential outcomes, 
and $T_s = t(\bs{Z}_s, \bs{Y}_s(0))$ be the corresponding rank score statistic for each stratum $s$. 
By definition, $T = \sum_{s=1}^S T_s$. Recall that $\mu_s = \E(T_s)$ and $v^2_s = \Var(T_s)$ are the true mean and variance of $T_s$. 
Thus, 
by the mutual independence of treatment assignment across all strata, 
the true mean and variance of $T$ are, respectively, $\mu = \sum_{s=1}^S \mu_s$ and $\sigma^2 = \sum_{s=1}^S v_s^2$. 

First, we study the tail probability of $T$ evaluated at $\tilde{\mu}_{\Gamma} + k \tilde{\sigma}_{\Gamma}$ for $k\ge 0$. 
Under the sensitivity model with bias at most $\Gamma$, 
from Lemma \ref{lemma:mu_sigma_tilde_greater}, for any $k\ge 0$, there must exists $\underline{S}_k$ such that $( \tilde{\mu}_{\Gamma} + k \tilde{\sigma}_{\Gamma} ) - (\mu + k \sigma) \ge 0$ when $S\ge \underline{S}_k$. 
Consequently, we have 
$\Pr(T\ge \tilde{\mu}_{\Gamma} + k \tilde{\sigma}_{\Gamma}) \le \Pr(T\ge \mu + k \sigma)$ for $S\ge \underline{S}_k$,  
which immediately implies that
$
    \limsup_{S\rightarrow \infty}\Pr(T\ge \tilde{\mu}_{\Gamma} + k \tilde{\sigma}_{\Gamma}) \le 
    \limsup_{S\rightarrow \infty} \Pr(T\ge \mu + k \sigma).
$
From Condition \ref{cond:match_set_clt} and Lemma \ref{lemma:match_set_clt}, 
$(T-\mu)/\sigma \converged \mathcal{N}(0,1)$. 
Therefore, for any $k\ge 0$, we have 
\begin{align*}
    \limsup_{S\rightarrow \infty}\Pr
    \{ (T-\tilde{\mu}_{\Gamma})/{\tilde{\sigma}_{\Gamma}} \ge k \}
    \le 
    \limsup_{S\rightarrow \infty} \Pr
    \{ (T-\mu)/{\sigma} \ge k \} = 1 - \Phi(k), 
\end{align*}
where $\Phi(\cdot)$ is the distribution function for the standard Gaussian distribution. 

Second, we prove that $\limsup_{S\rightarrow \infty} \Pr( \tilde{p}_{k,c, \Gamma} \le \alpha ) \le \alpha$ for any $\alpha \in (0, 0.5]$ under the null hypothesis $H_{k,c}$. 
From the discussion in the first part, for any fixed $\alpha\in (0, 0.5]$, 
\begin{align*}
    \limsup_{S\rightarrow \infty} \Pr( \tilde{G}_{\Gamma}(T) 
    \le \alpha ) 
    =
    \limsup_{S\rightarrow \infty} \Pr
    \Big\{ 
    \frac{T-\tilde{\mu}_{\Gamma}}{\tilde{\sigma}_{\Gamma}} \ge \Phi^{-1}(1-\alpha) 
    \Big\}
    \le 1 - \Phi\left( \Phi^{-1}(1-\alpha) \right) = \alpha, 
\end{align*}
where $\Phi^{-1}(\cdot)$ is the quantile function for the standard Gaussian distribution. 
Thus, 
when the null hypothesis $H_{k,c}$ holds, i.e., $\bs{\tau} \in \mathcal{H}_{k,c}$, 
we have, for any fixed $\alpha\in (0, 0.5]$,
\begin{align*}
    \limsup_{S\rightarrow \infty} \Pr( \tilde{p}_{k,c, \Gamma} \le \alpha ) = 
    \limsup_{S\rightarrow \infty} \Pr\{ \tilde{G}_{\Gamma}(t_{k,c}) \le \alpha \}
    \le 
    \limsup_{S\rightarrow \infty} \Pr\{ \tilde{G}_{\Gamma}(T) \le \alpha \} \le \alpha, 
\end{align*}
where the second last inequality holds because $t_{k,c} = \inf_{\bs \delta\in\mathcal H_{k,c}}t(\bs Z, \bs Y-\bs Z \circ \bs \delta) \le t(\bs Z, \bs Y-\bs Z \circ \bs \tau) = T$ and $\tilde{G}_{\Gamma}(\cdot)$ is a monotone decreasing function.

Third, we consider the $p$-value $\tilde{p}_{k,c, \Gamma}^\LP$. 
Because $t_{k,c} \ge t_{k,c}^\LP$ and $\tilde{G}_{\Gamma}(\cdot)$ is a monotone decreasing function, 
$\tilde{p}_{k,c, \Gamma} \le \tilde{p}_{k,c, \Gamma}^\LP$. 
From the proof in the second part, under the sensitivity model with bias at most $\Gamma$ and the null hypothesis $H_{k,c}$, for any $0<\alpha \le 0.5$, 
\begin{align*}
    \limsup_{S\rightarrow \infty}\Pr(\tilde{p}_{k,c, \Gamma}^{\LP} \le \alpha) 
    \le 
    \limsup_{S\rightarrow \infty} \Pr(\tilde{p}_{k,c, \Gamma} \le \alpha)
    \le \alpha. 
\end{align*}

From the above, Theorem \ref{thm:sen_ana_large_sample} holds.
\end{proof}

\subsection{Proof of Theorem \ref{thm:sen_inv}}

Below we first give a complete version of Theorem \ref{thm:sen_inv}. 

\begin{theorem}\label{thm:sen_inv_comp}
Under the sensitivity model with bias at most $\Gamma$,
the 
randomization 
$p$-value 
$\tilde{p}_{k,c, \Gamma}$ 
is increasing in $c$ and decreasing in $k$. 
Moreover, 
if Conditions \ref{cond:match_set_clt} and \ref{cond:sen_asymp_conservative} hold, then 
for any $\alpha\in (0, 0.5]$, 
\begin{enumerate}[label=(\roman*)]
    \item $\tilde{\mathcal{I}}_{\alpha}(k) \equiv \{c: \tilde{p}_{k,c, \Gamma} > \alpha, c \in \mathbb{R}\}$ is an asymptotic $1-\alpha$ confidence set for $\tau_{(k)}$, and it must be an interval of the form $(\underline{c}, \infty)$ or $[\underline{c}, \infty)$, with $\underline{c} = \inf\{c: \tilde{p}_{k,c, \Gamma} > \alpha, c \in \mathbb{R}\}$; 
    \item $\tilde{\mathcal{S}}_{\alpha}(c) \equiv \{N-k: \tilde{p}_{k,c, \Gamma} > \alpha, 0 \le k \le N \}$ is an asymptotic $1-\alpha$ confidence set for $n(c)$, and it must have the form of $\{j: N-\overline{k} \le j \le N\}$, 
    with 
    $\overline{k} = \sup\{k: \tilde{p}_{k,c, \Gamma} > \alpha, 0 \le k \le N\}$;
    \item the intersection of all asymptotic $1-\alpha$ confidence sets for $\tau_{(k)}$'s, viewed as a confidence set for $\bs{\tau}$, is the same as that for $n(c)$'s, 
    and they have the following equivalent forms:
    \begin{align*}
        \big\{ \bs{\delta} \in \mathbb{R}^N: \delta_{(k)} \in \tilde{\mathcal{I}}_{\alpha}(k), 1\le k \le N \big\}
        & = 
        \Big\{
        \bs{\delta} \in \mathbb{R}^N: 
        \sum_{i=1}^N \I(\delta_i > c) \in \tilde{\mathcal{S}}_{\alpha}(c), c \in \mathbb{R}
        \Big\}
        \\
        & 
        =  \bigcap_{k, c: \tilde{p}_{k,c, \Gamma} \le \alpha} 
        \mathcal{H}_{k,c}^{\c}.
    \end{align*}
    Moreover, the resulting confidence set covers the true individual treatment effect vector $\bs{\tau}$ with probability asymptotically at least $1-\alpha$, i.e., 
    \begin{align*}
        \liminf_{S \rightarrow \infty}\Pr\Big(
        \bs{\tau} \in \bigcap_{k, c: \tilde{p}_{k,c, \Gamma} \le \alpha} 
        \mathcal{H}_{k,c}^{\c}
        \Big) 
        \ge 1-\alpha. 
    \end{align*}
    \item the above results also hold for the $p$-value $\tilde{p}_{k,c, \Gamma}^\LP$ from the relaxed linear programming. 
\end{enumerate}
\end{theorem}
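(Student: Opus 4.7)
The plan is to mimic the proof of Theorem \ref{thm:conf_set} (which establishes exact validity in the SCRE case) and upgrade each step to an asymptotic statement using the tail bound already proved in Theorem \ref{thm:sen_ana_large_sample}. Since $\tilde{p}_{k,c,\Gamma} = \tilde{G}_\Gamma(t_{k,c})$, monotonicity in $c$ and $k$ follows immediately by composing Lemma \ref{lemma:monotone_t} (decreasing in $c$, increasing in $k$) with the fact that $\tilde{G}_\Gamma$, being a Gaussian tail, is strictly decreasing. The same composition, now via Lemma \ref{lemma:monotone_t_lp}, handles the $\LP$ version $\tilde{p}_{k,c,\Gamma}^\LP = \tilde{G}_\Gamma(t_{k,c}^\LP)$.

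For parts (i) and (ii) of the complete version, asymptotic pointwise validity of $\tilde{\mathcal{I}}_\alpha(k)$ for $\tau_{(k)}$ and of $\tilde{\mathcal{S}}_\alpha(c)$ for $n(c)$ follows directly from Theorem \ref{thm:sen_ana_large_sample}: for each fixed $(k,c)$ satisfying the corresponding null, $\limsup_S\Pr(\tilde{p}_{k,c,\Gamma}\le\alpha)\le\alpha$. The structural forms of these sets (i.e., that $\tilde{\mathcal{I}}_\alpha(k)$ is a right half-line and $\tilde{\mathcal{S}}_\alpha(c)$ is of the form $\{j:N-\bar k\le j\le N\}$) are forced by the monotonicity in the first paragraph and are essentially the same set-theoretic manipulation as in the proof of Theorem \ref{thm:conf_set}(i)--(ii). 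For part (iii), the equivalence of the three set-valued confidence regions is also purely set-theoretic and carries over verbatim from the proof of Theorem \ref{thm:conf_set}(iii), using monotonicity of $\tilde{p}_{k,c,\Gamma}$ and the definition of $\mathcal{H}_{k,c}$.

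The real work is the asymptotic coverage statement for the full vector $\bm\tau$. The plan here is to reduce it to the tail-probability bound already established inside the proof of Theorem \ref{thm:sen_ana_large_sample}. Specifically, on the event $\{\bm\tau\in\bigcup_{k,c:\tilde{p}_{k,c,\Gamma}\le\alpha}\mathcal{H}_{k,c}\}$ there exists $(k_0,c_0)$ with $\bm\tau\in\mathcal{H}_{k_0,c_0}$ and $\tilde{p}_{k_0,c_0,\Gamma}\le\alpha$; for such $(k_0,c_0)$ we have $t_{k_0,c_0}=\inf_{\bm\delta\in\mathcal{H}_{k_0,c_0}}t(\bm Z,\bm Y-\bm Z\circ\bm\delta)\le t(\bm Z,\bm Y-\bm Z\circ\bm\tau)=t(\bm Z,\bm Y(0))\equiv T$, so monotonicity of $\tilde{G}_\Gamma$ yields $\tilde{G}_\Gamma(T)\le\tilde{G}_\Gamma(t_{k_0,c_0})=\tilde{p}_{k_0,c_0,\Gamma}\le\alpha$. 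Hence the non-coverage event is contained in $\{\tilde{G}_\Gamma(T)\le\alpha\}$, and Theorem \ref{thm:sen_ana_large_sample} (which was proved via Lemmas \ref{lemma:match_set_clt} and \ref{lemma:mu_sigma_tilde_greater} to give $\limsup_S\Pr(\tilde{G}_\Gamma(T)\le\alpha)\le\alpha$) closes the argument. The analogous statement for $\tilde{p}_{k,c,\Gamma}^\LP$ follows from $t_{k,c}^\LP\le t_{k,c}$ and monotonicity of $\tilde{G}_\Gamma$.

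The main obstacle I anticipate is purely bookkeeping rather than conceptual: one must verify that the asymptotic tail bound from Theorem \ref{thm:sen_ana_large_sample}, which was stated for a fixed null pair $(k,c)$, remains usable after we take the supremum over the (random) collection of pairs satisfying $\tilde{p}_{k,c,\Gamma}\le\alpha$. The reduction sketched above avoids any union bound over this random collection, since it dominates the entire non-coverage event by the single event $\{\tilde{G}_\Gamma(T)\le\alpha\}$ built from the true, data-generating control potential outcomes; this is why Conditions \ref{cond:match_set_clt} and \ref{cond:sen_asymp_conservative} suffice without any further uniformity assumption.
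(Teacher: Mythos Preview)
Your proposal is correct and follows essentially the same approach as the paper: monotonicity via Lemma \ref{lemma:monotone_t} (respectively Lemma \ref{lemma:monotone_t_lp}) composed with the decreasing Gaussian tail $\tilde{G}_\Gamma$, the structural forms of (i)--(ii) and the set equalities in (iii) carried over verbatim from Theorem \ref{thm:conf_set}, and the simultaneous coverage in (iii) obtained by dominating the non-coverage event by $\{\tilde{G}_\Gamma(T)\le\alpha\}$ with $T=t(\bm Z,\bm Y(0))$ and then invoking Lemmas \ref{lemma:match_set_clt} and \ref{lemma:mu_sigma_tilde_greater}. Your observation that this reduction avoids any union bound over the random collection of rejected $(k,c)$ pairs is exactly the point.
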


\begin{proof}[Proof of Theorem \ref{thm:sen_inv_comp}]
First, note that $\tilde{p}_{k,c,\Gamma}$ is the tail probability $\tilde{G}_{\Gamma}(\cdot)$ evaluated at $t_{k,c}$. 
The monotonicity of $\tilde{p}_{k,c,\Gamma}$ in $k$ and $c$ follows immediately from Lemma \ref{lemma:monotone_t}.

Second, Theorem \ref{thm:sen_inv_comp}(i) and (ii) on the asymptotic validity of confidence sets for the quantiles of individual effects and number of units with effects greater than $c$ follow immediately from Theorem \ref{thm:sen_ana_large_sample}.

Third, we prove Theorem \ref{thm:sen_inv_comp}(iii). 
The equivalent form of the confidence set follows by the same logic as the proof of Theorem \ref{thm:conf_set}. 
Below we prove the asymptotic validity of this confidence set. 
Suppose that $\bs{\tau} \notin \bigcap_{k, c: \tilde{p}_{k,c, \Gamma} \le \alpha} \mathcal{H}_{k,c}^{\c}$. 
Then there must exist $k$ and $c$ such that 
$\bs{\tau} \in \mathcal{H}_{k,c}$ and $\tilde{p}_{k,c, \Gamma} \le \alpha$. 
By definition, we must have 
$t_{k,c} = \inf_{\bs{\delta}\in \mathcal{H}_{k,c}} t(\bs{Z}, \bs{Y} - \bs{Z}\circ \bs{\delta})
\le t(\bs{Z}, \bs{Y} - \bs{Z}\circ \bs{\tau}) = t(\bs{Z}, \bs{Y}(0)),
$
and thus 
$
   \alpha \ge  \tilde{p}_{k,c, \Gamma} 
   = 
   \tilde{G}_{\Gamma} (t_{k,c}) 
   \ge 
   \tilde{G}_{\Gamma} (t(\bs{Z}, \bs{Y}(0))). 
$
By the definition of $\tilde{G}_{\Gamma}(\cdot)$, this further implies that 
$t(\bs{Z}, \bs{Y}(0)) \ge \tilde{\mu}_{\Gamma} + \Phi^{-1}(1-\alpha) \cdot \tilde{\sigma}_{\Gamma}$. 
From Lemma \ref{lemma:mu_sigma_tilde_greater}, 
there exists $\underline{S}_{\alpha}$ such that 
when $S$ is sufficiently large, 
$
\tilde{\mu}_{\Gamma} + \Phi^{-1}(1-\alpha) \cdot \tilde{\sigma}_{\Gamma}
\ge \mu + \Phi^{-1}(1-\alpha) \cdot \sigma. 
$
Thus, for sufficiently large $S$, 
$\bs{\tau} \notin \bigcap_{k, c: \tilde{p}_{k,c, \Gamma} \le \alpha} 
        \mathcal{H}_{k,c}^{\c}$ 
must imply that $t(\bs{Z}, \bs{Y}(0)) \ge \mu + \Phi^{-1}(1-\alpha) \cdot \sigma$. 
Consequently, for sufficiently large $S$, we have 
\begin{align*}
    & \quad \ \Pr\Big(
        \bs{\tau} \in \bigcap_{k, c: \tilde{p}_{k,c, \Gamma} \le \alpha} 
        \mathcal{H}_{k,c}^{\c}
        \Big) 
        \\
    & = 
    1 - \Pr\Big(
        \bs{\tau} \notin \bigcap_{k, c: \tilde{p}_{k,c, \Gamma} \le \alpha} 
        \mathcal{H}_{k,c}^{\c}
        \Big) 
    \ge
    1 - \Pr\Big\{
    t(\bs{Z}, \bs{Y}(0)) \ge \mu + \Phi^{-1}(1-\alpha) \cdot \sigma
     \Big\}
     \\
     & 
    =
    1 - \Pr\Big\{
    \frac{ t(\bs{Z}, \bs{Y}(0)) - \mu}{\sigma} \ge \Phi^{-1}(1-\alpha) \Big\}. 
\end{align*}
From Lemma \ref{lemma:match_set_clt}, letting $S\rightarrow \infty$, we then have  
\begin{align*}
    \liminf_{S\rightarrow \infty}\Pr\Big(
        \bs{\tau} \in \bigcap_{k, c: \tilde{p}_{k,c, \Gamma} \le \alpha} 
        \mathcal{H}_{k,c}^{\c}
        \Big)
    & \ge 
    1 - \limsup_{S\rightarrow \infty}\Pr\Big\{
    \frac{ t(\bs{Z}, \bs{Y}(0)) - \mu}{\sigma} \ge \Phi^{-1}(1-\alpha) \Big\}
    \\
    & = \Phi(\Phi^{-1}(1-\alpha)) = 1 - \alpha. 
\end{align*}

Fourth, 
from Lemma \ref{lemma:monotone_t_lp}, 
Theorem \ref{thm:sen_inv_comp}(iv) follows
by the same logic as Theorem \ref{thm:sen_inv_comp}(i)--(iii).

From the above, Theorem \ref{thm:sen_inv_comp} holds. 
\end{proof}

\subsection{Comment on switching treatment labels and change outcome signs for sensitivity analysis in matched observation studies}

Recall that $\bs Z$ and $\bs Y$ are our original treatment assignment and outcome vectors, 
$\bs{Y}(1)$, $\bs{Y}(0)$ and $\bs{\tau}$ are the corresponding treatment potential outcome, control potential outcome and individual effect vectors. 
After switching the treatment labels and changing the signs of the outcomes, 
the treatment assignment and observed outcome vectors become $\check{\bs Z} = \bs{1}-\bs Z$, $\check{\bs Y} = -\bs Y$, 
the corresponding potential outcomes become 
$\check{\bs{Y}}(1) = - \bs{Y}(0)$ and $\check{\bs{Y}}(0) = - \bs{Y}(1)$, and the corresponding individual effect vector becomes $\check{\bs{\tau}} = \check{\bs{Y}}(1) - \check{\bs{Y}}(0) =  \bs{\tau}$. 
We can verify that 
$\bs{\check{Y}} =  \check{\bs{Z}} \circ \check{\bs{Y}}(1) + (\bs{1}-\check{\bs{Z}}) \circ \check{\bs{Y}}(0)$. 

First, we consider the tail probability of $t\{\check{\bs{Z}}, \check{\bs{Y}}(0)\}$ when $\bs{Z}$ follows the sensitivity model with bias at most $\bs{\Gamma}$. 
By definition, 
for $1\le s \le S$,
\begin{align*}
t_s(\check{\bs Z_s}, \check{\bs Y_s}(0))
& = 
\sum_{i=1}^{n_s} \I(\check{Z}_{si}=1) \phi_s(\rank_i(\check{\bs Y}_s(0))) 
= 
\sum_{i=1}^{n_s} \I(Z_{si}=0) \phi_s(\rank_i(\check{\bs Y}_s(0))) 
\\
& = 
\sum_{i=1}^{n_s} \phi_s(i) 
- 
\sum_{i=1}^{n_s} \I(Z_{si}=1) \phi_s(\rank_i(\check{\bs Y}_s(0))) 
\\
& = 
\sum_{i=1}^{n_s} \phi_s(i) + 
\sum_{i=1}^{n_s} \I(Z_{si}=1) \check{\phi}_s(\rank_i(\check{\bs Y}_s(0))), 
\end{align*}
where $\check{\phi}_s(j) = -\phi_s(j)$ for $1\le j\le n_s$. Consequently, 
\begin{align*}
    t(\check{\bs{Z}}, \check{\bs{Y}}(0))
    & = 
    \sum_{s=1}^S \sum_{i=1}^{n_s} \phi_s(i) + 
    \sum_{s=1}^S \sum_{i=1}^{n_s} \I(Z_{si}=1) \check{\phi}_s(\rank_i(\check{\bs Y}_s(0))). 
\end{align*}
Let $\check{\mu}$ and $\check{\sigma}$ be the true mean and standard deviation of $\sum_{s=1}^S \sum_{i=1}^{n_s} \I(Z_{si}=1) \check{\phi}_s(\rank_i(\check{\bs Y}_s(0)))$, 
and let $\check{\tilde{\mu}}_{\Gamma}$ and $\check{\tilde{\sigma}}_{\Gamma}$ be the maximized mean and maximized standard deviation (given that mean is maximized) of $\sum_{s=1}^S \sum_{i=1}^{n_s} \I(Z_{si}=1) \check{\phi}_s(i)$ given that $\bs{Z}$ follows the sensitivity model with bias at most $\Gamma$, which can be calculated similarly as in \eqref{eq:mu_v2_Gamma}. 
Then by the same logic as Lemmas \ref{lemma:match_set_clt} and \ref{lemma:mu_sigma_tilde_greater}, under certain regularity conditions, 
\begin{align}\label{eq:check_clt}
    \frac{
    \sum_{s=1}^S \sum_{i=1}^{n_s} \I(Z_{si}=1) \check{\phi}_s(\rank_i(\check{\bs Y}_s(0))) - \check{\mu}
    }{\check{\sigma}}
    \converged \mathcal{N}(0,1), 
\end{align}
and for any $k \ge 0$, when $S$ is sufficiently large, 
\begin{align}\label{eq:check_greater}
    \check{\tilde{\mu}}_{\Gamma} + k \check{\tilde{\sigma}}_{\Gamma}
    \ge \check{\mu} + k \check{\sigma}. 
\end{align}


Second, we construct $p$-values and demonstrate their large-sample validity at significance level $\alpha\in (0, 0.5]$ given that $\bs{Z}$ follows the sensitivity model with bias at most $\Gamma$. 
Let $\check{\tilde{G}}_{\Gamma}(\cdot)$ denote the tail probability for the Gaussian distribution with mean $\sum_{s=1}^S \sum_{i=1}^{n_s} \phi_s(i) + \check{\tilde{\mu}}_{\Gamma}$ and standard deviation $\check{\tilde{\sigma}}_{\Gamma}$.
Let $\check{t}_{k,c}$ be the minimum of $t(\check{\bs{Z}}, \check{\bs{Y}} - \check{\bs{Z}}\circ \bs{\delta})$ over $\bs{\delta} \in \mathcal{H}_{k,c}$, which is equivalent to an integer linear programming as in \eqref{eq:integer_program}, 
and $\check{t}_{k,c}^{\LP}$ be the minimum objective value from the relaxed linear programming as in \eqref{eq:linear_program}. 
Define 
\begin{align}\label{eq:p_check_tilde}
    \check{\tilde{p}}_{k,c, \Gamma}
    \equiv 
    \check{\tilde{G}}_{\Gamma}(\check{t}_{k,c})
    \le 
    \check{\tilde{p}}^{\LP}_{k,c, \Gamma}
    \equiv 
    \check{\tilde{G}}_{\Gamma}(\check{t}^{\LP}_{k,c}),
\end{align}
where the inequalities holds because $\check{t}_{k,c} \ge \check{t}_{k,c}^{\LP}$ and  $\check{\tilde{G}}_{\Gamma}(\cdot)$ is a monotone decreasing function. 
To prove the asymptotic validity of the $p$-values in \eqref{eq:p_check_tilde}, it suffices to show the validity for the smaller one $\check{\tilde{p}}_{k,c, \Gamma}$. By definition, for any $\alpha\in (0, 0.5]$, 
\begin{align*}
    \Pr( \check{\tilde{p}}_{k,c, \Gamma} \le \alpha )
    & 
    = 
    \Pr\big\{ \check{\tilde{G}}_{\Gamma}(\check{t}_{k,c}) \le \alpha \big\}
    = 
    \Pr\Big\{ 
    \check{t}_{k,c} - \sum_{s=1}^S \sum_{i=1}^{n_s} \phi_s(i) \ge  \check{\tilde{\mu}}_{\Gamma} + \Phi^{-1}(1-\alpha) \check{\tilde{\sigma}}_{\Gamma}
    \Big\}
    \\
    & \le 
    \Pr\Big\{
    \sum_{s=1}^S \sum_{i=1}^{n_s} \I(Z_{si}=1) \check{\phi}_s(\rank_i(\check{\bs Y}_s(0))) \ge  \check{\tilde{\mu}}_{\Gamma} + \Phi^{-1}(1-\alpha) \check{\tilde{\sigma}}_{\Gamma}
    \Big\}. 
\end{align*}
From \eqref{eq:check_greater}, when $S$ is sufficiently larger, we have 
\begin{align*}
    \Pr( \check{\tilde{p}}_{k,c, \Gamma} \le \alpha )
    & 
    \le 
    \Pr\Big\{ 
    \sum_{s=1}^S \sum_{i=1}^{n_s} \I(Z_{si}=1) \check{\phi}_s(\rank_i(\check{\bs Y}_s(0))) \ge  \check{\tilde{\mu}}_{\Gamma} + \Phi^{-1}(1-\alpha) \check{\tilde{\sigma}}_{\Gamma}
    \Big\}
    \\
    & 
    \le 
    \Pr\Big\{  
    \sum_{s=1}^S \sum_{i=1}^{n_s} \I(Z_{si}=1) \check{\phi}_s(\rank_i(\check{\bs Y}_s(0))) \ge  \check{\mu} + \Phi^{-1}(1-\alpha) \check{\sigma}
    \Big\}
    \\
    & = 
    \Pr\Big\{  
    \frac{\sum_{s=1}^S \sum_{i=1}^{n_s} \I(Z_{si}=1) \check{\phi}_s(\rank_i(\check{\bs Y}_s(0))) - \check{\mu}}{\check{\sigma}} \ge \Phi^{-1}(1-\alpha) 
    \Big\}. 
\end{align*}
From \eqref{eq:check_clt}, letting $S\rightarrow \infty$, we then have
\begin{align*}
    & \quad \ \limsup_{S\rightarrow \infty}\Pr( \check{\tilde{p}}_{k,c, \Gamma} \le \alpha )
    \\
    & \le 
    \limsup_{S\rightarrow \infty}
    \Pr\Big\{ 
    \frac{\sum_{s=1}^S \sum_{i=1}^{n_s} \I(Z_{si}=1) \check{\phi}_s(\rank_i(\check{\bs Y}_s(0))) - \check{\mu}}{\check{\sigma}} \ge \Phi^{-1}(1-\alpha) 
    \Big\}\\
    & = 1 - \Phi(\Phi^{-1}(1-\alpha)) = \alpha. 
\end{align*}

Fourth, by the same logic as the proof of Theorem \ref{thm:sen_inv_comp}, 
the resulting confidence sets by inverting the $p$-values $\check{\tilde{p}}_{k,c, \Gamma}$ or $\check{\tilde{p}}_{k,c, \Gamma}^{\LP}$ 
for quantiles of individual effects and numbers of units with effects greater than $c$
are asymptotically simultaneously valid. 

\subsection{Proof of Proposition \ref{prop:Psi_opt}}
We first prove that $\tilde{\Psi}(\cdot)$ is a monotone dominating transformation.
Let $\tilde{\bs{a}} = (\tilde{a}_1, \ldots, \tilde{a}_n)^\top \equiv \tilde{\Psi}(\bs{a})$. 
First, we show $\tilde{\bs{a}}$ satisfies property (i) in Definition \ref{def:mono_transform}. 
For any $2 \le m \le n$, 
by definition, 
there must exist $m \le j_m \le n$ such that
$
\tilde{a}_m = (j_m-m+1)^{-1} ( \sum_{i=1}^{j_m} a_i - \sum_{i=1}^{m-1}\tilde{a}_i ), 
$
which implies that 
\begin{align*}
    & \quad \ (j_m-m+1)(\tilde{a}_m-\tilde{a}_{m-1})
    \\
    &
    =
    \sum_{i=1}^{j_m} a_i - \sum_{i=1}^{m-1}\tilde{a}_i - (j_m-m+1) \tilde{a}_{m-1}
    =\sum_{i=1}^{j_m}a_i-\sum_{i=1}^{m-2}\tilde{a}_i-(j_m-m+2)\tilde{a}_{m-1}
    \\
    &=
    (j_m-m+2)\left\{ \frac{\sum_{i=1}^{j_m}a_i-\sum_{i=1}^{m-2}\tilde{a}_i}{j_m-(m-1)+1} -\tilde{a}_{m-1}\right\}
    \\
    & = 
    (j_m-m+2)\left\{ \frac{\sum_{i=1}^{j_m}a_i-\sum_{i=1}^{m-2}\tilde{a}_i}{j_m-(m-1)+1} -
    \max_{m-1\le j \le n} 
    \frac{\sum_{i=1}^{j}a_i-\sum_{i=1}^{m-2}\tilde{a}_i}{j-(m-1)+1}
    \right\}
    \\
    &\le 0.
\end{align*}
Thus, $\tilde{a}_m \le \tilde{a}_{m-1}$ for $2\le m \le n$, i.e., $\tilde{a}_1 \ge \tilde{a}_2 \ge \ldots \ge \tilde{a}_n$ is a monotone decreasing sequence. 
Second, we show $\tilde{\bs{a}}$ satisfies property (ii) in Definition \ref{def:mono_transform}. 
It is obvious that $\tilde{a}_1= \max_{1\le j \le n}j^{-1}\sum_{t=1}^j a_t\ge a_1$. 
For $2\le m \le n$, by definition, 
\begin{align*}
    \tilde{a}_m=\max_{m \le j \le n} \frac{\sum_{i=1}^j a_i - \sum_{i=1}^{m-1} \tilde{a}_i}{j-m+1} 
    \ge \frac{\sum_{i=1}^{m} a_i - \sum_{i=1}^{m-1}\tilde{a}_i}{m-m+1}
    = \sum_{i=1}^{m} a_i - \sum_{i=1}^{m-1}\tilde{a}_i, 
\end{align*}
which immediately implies that $\sum_{i=1}^{m} \tilde{a}_i \ge \sum_{i=1}^{m} a_i$ for $2\le m\le n$.
Therefore, $\tilde{\Psi}(\cdot)$ is a monotone dominating transformation.

We then prove the optimality of $\tilde{\Psi}(\cdot)$. 
For any monotone dominating transformation $\Psi(\cdot)$ and any $\bs{a}\in \mathbb{R}^n$, 
we prove by induction that $\sum_{i=1}^j \tilde{\Psi}_i(\bs{a}) \le \sum_{i=1}^j \Psi_i(\bs{a})$ for all $1\le j \le n$.
First, 
by the two properties in Definition \ref{def:mono_transform}, 
\begin{align*}
     \sum_{i=1}^j a_i\le \sum_{i=1}^j \Psi_i(\bs{a}) \le j\Psi_1(\bs{a}),\ \forall 1\le j \le n \Longrightarrow \Psi_1(\bs{a})\ge \max_j \frac 1j\sum_{i=1}^ja_i=\tilde{\Psi}_1(\bs{a}). 
\end{align*}
Second, assume that $\sum_{i=1}^t {\Psi}_i(\bs{a}) \ge \sum_{i=1}^t \tilde{\Psi}_i(\bs{a})$, for $1\le t \le m-1$. When $t=m$, for any $m \le j \le n$, by the two properties in Definition \ref{def:mono_transform}, 
\begin{align*}
    (j-m+1){\Psi}_m(\bs{a})+\sum_{i=1}^{m-1}{\Psi}_i(\bs{a})\ge \sum_{i=m}^{j}{\Psi}_i(\bs{a})+\sum_{i=1}^{m-1}{\Psi}_i(\bs{a})=\sum_{i=1}^{j}{\Psi}_i(\bs{a}) \ge \sum_{i=1}^j a_i.
\end{align*}
This implies that ${\Psi}_m(\bs{a}) \ge \max_{m \le j \le n} (j-m+1)^{-1} \{ \sum_{i=1}^j a_i - \sum_{i=1}^{m-1}\Psi_i(\bs{a})\}$. 
Consequently, 
\begin{align*}
   \sum_{i=1}^m {\Psi}_i(\bs{a}) &= \sum_{i=1}^{m-1} {\Psi}_i(\bs{a}) + {\Psi}_m(\bs{a})
   \ge 
   \sum_{i=1}^{m-1} {\Psi}_i(\bs{a}) + \max_{m \le j \le n}
   \frac{\sum_{i=1}^j a_i - \sum_{i=1}^{m-1}\Psi_i(\bs{a})}{j-m+1}
   \\& =
   \max_{m \le j \le n} \left( \frac{\sum_{i=1}^j a_i}{j-m+1}  + \frac{(j-m)\sum_{i=1}^{m-1} {\Psi}_i(\bs{a})}{j-m+1}  \right)
   \\& \ge
   \max_{m \le j \le n} \left( \frac{\sum_{i=1}^j a_i}{j-m+1}  + \frac{(j-m)\sum_{i=1}^{m-1}\tilde{\Psi}_i(\bs{a})}{j-m+1} \right)
   \\& =
   \sum_{i=1}^{m-1} \tilde{\Psi}_i(\bs{a}) + 
   \max_{m \le j \le n} \frac{\sum_{i=1}^j a_i - \sum_{i=1}^{m-1}\tilde{\Psi}_i(\bs{a})}{j-m+1}
   \\& = 
   \sum_{i=1}^m \tilde{\Psi}_i(\bs{a}), 
\end{align*}
where the last inequality holds because 
$\sum_{i=1}^{m-1} \Psi_i(\bs{a}) \ge \sum_{i=1}^{m-1} \tilde{\Psi}_i(\bs{a})$ 
and the last equality holds by the definition of $\tilde{\Psi}(\cdot)$ in \eqref{eq:Psi_optimal}. 

From the above, Proposition $\ref{prop:Psi_opt}$ holds.

\subsection{Proof of Theorem \ref{thm:sen_conservative_finte_sample}}
We first show that $G_{\Gamma}(c) \le \overline{G}_{\Gamma}(c)$ for all $c\in \mathbb{R}$. 
Consider any fixed $\bs{y} \in \mathbb{R}^N$.  
For each $1\le s\le S$ and $c\in \mathbb{R}$, 
let $\mathcal{Q}_{s}(c) = \{i: \phi_s( \rank_i(\bs{y}_s) ) \ge c, 1\le i \le n_s\}$ be the set of indices in stratum $s$ with transformed rank greater than or equal to $c$. 
Let $\Pr_{\bs{u}, \Gamma}(\cdot)$ denote the probability measure for the treatment assignment vector $Z$ under the sensitivity model with bias at most $\Gamma$ and unmeasured confounding $\bs{u}$. 
Then 
the tail probability of the statistic $t(\bs{Z}_s, \bs{y}_s)$ under the sensitivity model with bias at most $\Gamma$ can be bounded by 
\begin{align*}
    {\rev \Pr_{\bs{u}, \Gamma}\left\{t(\bs{Z}_s, \bs{y}_s) \ge c\right\}}
    & = 
    \frac{\sum_{i\in \mathcal{Q}_{s}(c)} \exp(\gamma u_{si}) }{\sum_{i\notin \mathcal{Q}_{s}(c)} \exp(\gamma u_{si}) + \sum_{i\in \mathcal{Q}_{s}(c)} \exp(\gamma u_{si})}
    \\
    & \le 
    \frac{\sum_{i\in \mathcal{Q}_{s}(c)} \exp(\gamma u_{si}) }{n_s - |\mathcal{Q}_{s}(c)| + \sum_{i\in \mathcal{Q}_{s}(c)} \exp(\gamma u_{si})}
    \le 
    \frac{|\mathcal{Q}_{s}(c)| \Gamma}{n_s - |\mathcal{Q}_{s}(c)| + |\mathcal{Q}_{s}(c)| \Gamma}, 
\end{align*}
where the last inequality holds because the function $x/(1+x)$ is increasing in $x\ge 0$. 
Note that the tail probability of $\overline{T}_s$ has the following equivalent forms:
\begin{align*}
    \Pr(\overline{T}_s \ge c)
    & = 
    \begin{cases}
    1, & \text{if } c\le \xi_{s1} \\
    g_{si}\Gamma/\{ (n_s-g_{si}) + g_{si}\Gamma \}, 
    & \text{if } \xi_{s, i-1} < c \le \xi_{si}, 2\le i \le m_s
    \\
    0, & \text{if } c > \xi_{sm_s}
    \end{cases}
    \\
    & = 
    \frac{|\mathcal{Q}_{s}(c)| \Gamma}{n_s - |\mathcal{Q}_{s}(c)| + |\mathcal{Q}_{s}(c)| \Gamma}. 
\end{align*}
Thus, $t(\bs{Z}_s, \bs{y}_s)$ must be stochastically smaller than or equal to $\overline{T}_s$. 
Because both $t(\bs{Z}_s, \bs{y}_s)$'s and $\overline{T}_s$'s are mutually independent across all $s$, 
$t(\bs{Z}, \bs{y}) = \sum_{s=1}^S t(\bs{Z}_s, \bs{y}_s)$ must be stochastically smaller than or equal to $\sum_{s=1}^S \overline{T}_s$. 
Therefore, by definition and from \eqref{eq:G_Gamma}, 
for any $c\in \mathbb{R}$, 
\begin{align*}
    G_{\Gamma}(c) 
    = 
    \max_{\bs{u} \in \mathcal{U}} {\rev
    \Pr_{\bs{u}, \Gamma}\left\{ t(\bs{Z}, \bs{y}) \ge c \right\}}
    \le 
    \Pr\Big( \sum_{s=1}^S \overline{T}_s \ge c \Big) = \overline{G}_{\Gamma}(c)
\end{align*}
This implies that 
$\overline{p}_{k,c, \Gamma} \equiv \overline{G}_{\Gamma}(t_{k,c}) \ge G_{\Gamma}(t_{k,c}) = p_{k,c, \Gamma}$. 
From Theorem \ref{thm:sen_ana}, 
$\overline{p}_{k,c, \Gamma}$ is a valid $p$-value for testing $H_{k,c}$ under the sensitivity model with bias at most $\Gamma$. 

Because $t_{k, c} \ge t_{k, c}^\LP$ and $\overline{G}_{\Gamma}(\cdot)$ is a monotone decreasing function, 
$\overline{p}_{k, c, \Gamma}^\LP \ge \overline{p}_{k, c, \Gamma}$, and it must also be a valid $p$-value for testing $H_{k,c}$ under the sensitivity model with bias at most $\Gamma$.

From the above, Theorem \ref{thm:sen_conservative_finte_sample} holds.

\subsection{Comment on equality of $G_\Gamma(\cdot)$ and $\overline{G}_\Gamma(\cdot)$ in matched pair study}\label{sec:match_pair_Gamma}

In matched pair studies, we have exactly two units within each stratum. 
For any $\bs{y}\in \mathbb{R}^N$, define $\bs{u} = (u_{11}, u_{12}, \ldots, u_{S2})$ such that $u_{s1} = 1- u_{s2} =  \I(y_{s1} \ge y_{s2})$ for all $1\le s \le S$. 
Then by the definition in \eqref{eq:G_Gamma}, for any $c\in \mathbb{R}$, we have 
\begin{align*}
   G_{\Gamma}(c) 
   & \ge 
   \sum_{\bs{a} \in \mathcal{Z}} 
    \prod_{s=1}^S
    \frac{\exp(\gamma \sum_{i=1}^{n_s} a_{si} u_{si})}{
    \sum_{i=1}^{n_s} \exp(\gamma u_{si})}
    \I\{
    t( \bs{a}, \bs{y} )
    \ge 
    c
    \}
    = 
    \sum_{\bs{a} \in \mathcal{Z}} \Pr_{\bs{u}, \bs{\Gamma}}(\bs{A}=\bs{a}) \I \{ t(\bs{a}, \bs{y}) \ge c \}
    \\
    & =
    \Pr_{\bs{u}, \bs{\Gamma}} \{ t(\bs{A}, \bs{y}) \ge c \}
    = 
    \Pr_{\bs{u}, \bs{\Gamma}} \Big\{ \sum_{s=1}^S t_s(\bs{A}_s, \bs{y}_s) \ge c \Big\}, 
\end{align*}
where 
$\Pr_{\bs{u}, \Gamma}(\cdot)$ denotes the probability measure for the treatment assignment vector from the sensitivity model with bias at most $\Gamma$ and unmeasured confounding $\bs{u}$, 
$\bs{A}$ denotes the treatment assignment with probability mass function 
\begin{align*}
    \Pr_{\bs{u}, \bs{\Gamma}}(\bs{A}=\bs{a})
    & = \prod_{s=1}^S
    \frac{\exp(\gamma \sum_{i=1}^{n_s} a_{si} u_{si})}{
    \sum_{i=1}^{n_s} \exp(\gamma u_{si})},
\end{align*}
and $\bs{A}_s$ and $\bs{y}_s$ are subvectors of $\bs{A}$ and $\bs{y}$ corresponding to stratum $s$. 
We can verify that, for any $s$, 
if $\phi_s(1)\ne \phi_s(2)$, then 
$t_s(\bs{A}_s, \bs{y}_s)$ takes value $\phi_s(2)$ with probability $\Gamma/(1+\Gamma)$ and value $\phi_s(1)$ with probability $1/(1+\Gamma)$; otherwise $t_s(\bs{A}_s, \bs{y}_s)$ is a constant $\phi_s(1) = \phi_s(2)$. 
This implies that $t_s(\bs{A}_s, \bs{y}_s)$ has the same distribution as $\overline{T}_s$. 
Note that both $t_s(\bs{A}_s, \bs{y}_s)$ and $\overline{T}_s$ are mutually independent across all $s$. We then have, for any $c\in \mathbb{R}$, 
\begin{align*}
    G_{\Gamma}(c) & \ge \Pr_{\bs{u}, \bs{\Gamma}} \Big\{ \sum_{s=1}^S t_s(\bs{A}_s, \bs{y}_s) \ge c \Big\}
    = 
    \Pr\Big(\sum_{s=1}^S \overline{T}_s \ge c \Big) = \overline{G}_{\Gamma}(c).
\end{align*}
Note that from the proof of Theorem \ref{thm:sen_conservative_finte_sample}, 
$G_{\Gamma}(c) \le \overline{G}_{\Gamma}(c)$ for all $c\in \mathbb{R}$.
Therefore, we must have $G_{\Gamma}(c) = \overline{G}_{\Gamma}(c)$ for all $c\in \mathbb{R}$.

\subsection{Proof of Theorem \ref{thm:sen_inv_finite} and Remark \ref{rmk:sen_inv_finite_LP}}

\begin{proof}[Proof of Theorem \ref{thm:sen_inv_finite}]
First, note that $\overline{p}_{k,c,\Gamma}$ is the tail probability $\overline{G}_{\Gamma}(\cdot)$ evaluated at $t_{k,c}$. 
The monotonicity of $\overline{p}_{k,c,\Gamma}$ in $k$ and $c$ follows immediately from Lemma \ref{lemma:monotone_t}.

Second, Theorem \ref{thm:sen_inv_finite}(i) and (ii) on the validity of confidence sets for the quantiles of individual effects and number of units with effects greater than $c$ follow immediately from Theorem \ref{thm:sen_conservative_finte_sample}. 

Third, the simultaneous validity in Theorem \ref{thm:sen_inv_finite}(iii) holds by almost the same logic as 
Theorem \ref{thm:conf_set}(iii), and is thus omitted here. 
\end{proof}

\begin{proof}[Proof of Remark \ref{rmk:sen_inv_finite_LP}]
From Lemma \ref{lemma:monotone_t_lp}, 
Remark \ref{rmk:conf_set_plus} follows
by the same logic as Theorem \ref{thm:sen_inv_finite}. 
\end{proof}

\section{Additional technical details for optimizations}\label{sec:add_opi}

\subsection{A numerical example illustrating the drawback of the naive greedy algorithm}\label{sec:numerical_example}

Consider a stratified randomized experiment with three strata, where each stratum contains three treated and three control units, 
and suppose the observed data are as 
in Table \ref{tab:greedy_counter}. 
Based on the observed data, we calculate the values of the $\Delta_{s, 0}(j)$'s using the stratified Stephenson rank sum statistic with $h_1 = h_2 = h_3 = 4$, 
and display them in Table \ref{tab:greedy_counter}, 
where $\Delta_{s, 0}(j)$'s for $j \ge 4$ are all zero and are thus omitted from the table. From Table \ref{tab:greedy_counter}, 
to maximize the value of $\sum_{s=1}^3 \sum_{j=1}^{l_s} \Delta_{s, 0}(j)$ under the constraint that 
$l_1+l_2+l_3 = 2$, 
the naive greedy algorithm will pick first $\Delta_{3,0}(1) = 6$ and then $\Delta_{1,0}(1) = 3$, resulting a value of $\Delta_{3,0}(1) + \Delta_{1,0}(1) =9$. 
However, 
from Table \ref{tab:greedy_counter}, 
it is obvious that the maximum value 
is actually 
$\Delta_{1,0}(1) + \Delta_{1,0}(2) = 10$, 
greater 
than the one from the naive greedy algorithm. 
Table \ref{tab:greedy_example_pval} shows the maximum value of $\sum_{s=1}^3 \sum_{j=1}^{l_s} \Delta_{s, c}(j)$ over 
$(l_1, l_2, l_3) \in \mathcal{K}_{\bs{n}}(N-k)$ achieved by the naive greedy algorithm, integer linear programming and linear programming at various values of $N-k$, 
as well as the corresponding $p$-values, which are the values of $G(\cdot)$ in \eqref{eq:G} evaluated at the minimum value of  $\sum_{s=1}^3 t_{s, c} (0)-\sum_{s=1}^3 \sum_{j=1}^{l_s} \Delta_{s, c}(j)$ achieved by these three algorithms. 
From Table \ref{tab:greedy_example_pval}, 
the greedy algorithm sometimes cannot reach the global optimum, and thus provides $p$-value strictly less than the valid $p$-value $p_{k,c}$, which can cause the tests to be statistically invalid. 
As a side note, the linear programming and the greedy algorithm with the optimal transformation always provide equal $p$-values that are larger than or equal to $p_{k,c}$, and thus the corresponding tests are still valid, as discussed in 
\S \ref{sec:ILP_LP} and \S \ref{sec:greedy}. 
\begin{table}[htbp]
    \centering
    \caption{A numerical example 
    for illustrating 
    the drawback of the greedy algorithm. There are in total three strata, as indicated in the 1st column. 
    The 2nd and 3nd columns show the observed outcomes for the three treated and control units, respectively, within each stratum. 
    The 4th to 6th columns show the values of $\Delta_{s,0}(j)$'s for $1\le j\le 3$. 
    Note that 
    here $\Delta_{s,0}(j) = 0$ for $j \ge 4$. 
    }\label{tab:greedy_counter}
    \begin{tabular}{cccccc}
    \toprule
    stratum     & treated outcomes & control outcomes & $\Delta_{s, 0}(1)$ & $\Delta_{s, 0}(2)$ & $\Delta_{s, 0}(3)$\\
    \midrule
    1 & $2.9,\  2.3,\  1.1$ & $-0.5,\  1.0,\  1.9$ & 3 & 7 & 4 \\
    2 & $1.4,\  2.4,\  2.1$ & $0.3,\  -0.8,\  0.1$ & 1 & 4 & 10 \\
    3 & $3.3,\  0.5,\  1.8$ & $-0.1,\  -0.8,\ 2.0$ & 6 & 1 & 4 \\
    \bottomrule
    \end{tabular}
\end{table}
\begin{table}[htb]
    \centering
    \caption{The maximum of $\sum_{s=1}^3 \sum_{j=1}^{l_s} \Delta_{s, c}(j)$ over 
$(l_1, l_2, l_3) \in \mathcal{K}_{\bs{n}}(N-k)$ achieved by different optimization methods under various values of $N-k$
and the resulting $p$-value from the tail probability $G(\cdot)$ in \eqref{eq:G} evaluated at the minimum value of $\sum_{s=1}^S t_{s, c} (0)-\sum_{s=1}^3 \sum_{j=1}^{l_s} \Delta_{s, c}(j)$ achieved by different optimization methods for the observed data in Table \ref{tab:greedy_counter}. 
Specifically, ILP stands for integer linear programming, LP stands for linear programming, 
Greedy stands for the naive greedy algorithm, 
and GT stands for the greedy algorithm with the optimal monotone dominating transformation.}

    \label{tab:greedy_example_pval}
    \begin{tabular}{ccccccccccc}
    \toprule
    $N-k$ & & \multicolumn{4}{c}{$\sum_{s=1}^3 \sum_{j=1}^{l_s} \Delta_{s, 0}(j)$} & & \multicolumn{4}{c}{$p$-value} \\
    \midrule
        & & ILP & LP & Greedy&GT & & ILP & LP & Greedy&GT\\
    \midrule
    1 & & 6 & 6 & 6 &6 && 0.11 & 0.11 & 0.11& 0.11 \\
    2 & & 10 & 11 & 9 & 11&& 0.21 & 0.27 & 0.15 & 0.27\\
    3 & & 16 & 16 & 16 &16 && 0.47 & 0.47 & 0.47 & 0.47\\
    4 & & 21 & 21 & 20 &21 && 0.69 & 0.69 & 0.57 & 0.69 \\
    5 & & 25 & 26 & 21 &26&& 0.84 & 0.85 & 0.63 & 0.85\\
    6 & & 31 & 31 & 25 &31&& 0.95 & 0.95 & 0.79 & 0.95 \\
    7 & & 35 & 35 & 35 &35&& 0.98 & 0.98 & 0.98& 0.98\\
    8 & & 36 & 37.5 & 36 &37.5 && 0.99 & 1.00 & 0.99  & 1.00\\
    9 & & 40 & 40 & 40 & 40 && 1.00 & 1.00 & 1.00  & 1.00\\
    \bottomrule
    \end{tabular}
\end{table}


\subsection{Proof for the equivalence of the piecewise-linear optimizations in \eqref{eq:pwl}}\label{sec:proof_pwl}

\begin{lemma}\label{lemma:KR_pwl}
Let $n=(n_1, \ldots, n_S)$ be any vector consisting of $S\ge 1$ nonnegative integers, 
$k$ be any nonnegative integer less than or equal to $\sum_{s=1}^S n_s$, recall the definition of $\mathcal{K}_{n}(k)$ in \eqref{eq:setKn}, and define 
\begin{align*}
    \mathcal{R}_n(k) = \Big\{ (k_1, k_2, \ldots, k_S) \in \mathbb{R}^S: \ \sum_{s=1}^S k_s = k, \text{ and }\  0\le k_s \le n_s, 1\le s \le S
    \Big\}, 
\end{align*}
which is the same as $\mathcal{K}_{n}(k)$ except that we drop the integer constraints. 
Let
$\{b_{sj}: 1\le s\le S, 0\le j\le n_{s}\}$ be any finite real numbers, 
and 
$g_s(\cdot)$ be the piecewise-linear function on $[0, n_s]$ interpolating points
$(j, b_{sj})$ for $j=0,1, \ldots, n_s$, i.e., 
\begin{align*}
    g_s(x) = \I(x=0) b_{s0} + \sum_{j=0}^{n_s-1} \I(j < x\le j+1) [b_{sj} + (b_{s,j+1} - b_{sj})(x-j)], 
    \quad (1\le s\le S). 
\end{align*}
Define 
$
g(l_1, \ldots, l_S) \equiv \sum_{s=1}^S g_s(l_s). 
$
Then 
\begin{align}\label{eq:KR_pwl}
    \min_{(l_1, \ldots, l_S) \in \mathcal{K}_{n}(k)} g(l_1, \ldots, l_S)
    = \min_{(l_1, \ldots, l_S) \in \mathcal{R}_{n}(k)} g(l_1, \ldots, l_S),
\end{align}
and, by the same logic, 
\begin{align}\label{eq:KR_pwl_max}
    \max_{(l_1, \ldots, l_S) \in \mathcal{K}_{n}(k)} g(l_1, \ldots, l_S)
    = \max_{(l_1, \ldots, l_S) \in \mathcal{R}_{n}(k)} g(l_1, \ldots, l_S). 
\end{align}
\end{lemma}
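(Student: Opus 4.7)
The plan is to establish the two inequalities separately. The easy direction, $\min_{\mathcal{K}_n(k)} g \ge \min_{\mathcal{R}_n(k)} g$, follows immediately from $\mathcal{K}_n(k) \subseteq \mathcal{R}_n(k)$. For the reverse direction, I would use an exchange argument to show that any feasible relaxed point can be replaced by an integer feasible point whose objective value is no larger.

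Concretely, fix any $(l_1,\ldots,l_S)\in \mathcal{R}_n(k)$ and let $F$ denote the number of coordinates $l_s$ that are not integers. Since $\sum_s l_s = k$ is an integer, we have $F \ne 1$, so either $F=0$ (already done) or $F \ge 2$. In the latter case pick two non-integer coordinates $l_s$ and $l_{s'}$, and consider the one-parameter perturbation $l_s \mapsto l_s + \epsilon$, $l_{s'} \mapsto l_{s'} - \epsilon$. For $\epsilon$ in a sufficiently small neighborhood of $0$, neither $l_s + \epsilon$ nor $l_{s'} - \epsilon$ crosses any integer, so by the piecewise-linearity of $g_s$ and $g_{s'}$ the function
\begin{equation*}
\epsilon \mapsto g_s(l_s+\epsilon) + g_{s'}(l_{s'}-\epsilon)
\end{equation*}
is affine in $\epsilon$ on each side of $0$. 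Therefore at least one of the two directions (positive or negative $\epsilon$) gives a non-increasing change in $g$. Follow that direction, extending $\epsilon$ until one of $l_s+\epsilon$ or $l_{s'}-\epsilon$ first hits an integer; this must happen before leaving $[0,n_s]\times[0,n_{s'}]$ because the endpoints of those intervals are themselves integers. The resulting point lies in $\mathcal{R}_n(k)$, has objective value no larger than the original, and has $F$ reduced by at least one. Iterating at most $S$ times yields an integer point in $\mathcal{K}_n(k)$ with objective value no larger than the starting point, which establishes $\min_{\mathcal{R}_n(k)} g \ge \min_{\mathcal{K}_n(k)} g$.

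The identical argument with signs reversed (choose the direction that is non-decreasing) yields the maximization statement \eqref{eq:KR_pwl_max}. The main technical subtlety I anticipate is just bookkeeping at the boundaries: verifying that when $l_s$ or $l_{s'}$ lies in $\{0,n_s\}$ the chosen direction of $\epsilon$ keeps the iterates inside $[0,n_s]\times[0,n_{s'}]$, which follows because any direction that would exit the box must be paired with the reverse direction (still feasible) and at least one of the two produces a non-increasing change. No heavy machinery is needed beyond piecewise-linearity of the $g_s$'s and the observation that breakpoints occur only at integers.
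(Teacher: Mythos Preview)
Your argument is correct and follows the classic pairwise-exchange (LP vertex) reduction: pick two fractional coordinates, note that the objective is affine in the transfer parameter $\epsilon$ on the open interval between adjacent integer breakpoints, move in the non-increasing direction until one coordinate becomes integral, and iterate. One clarifying point: you phrase the map $\epsilon\mapsto g_s(l_s+\epsilon)+g_{s'}(l_{s'}-\epsilon)$ as ``affine on each side of $0$,'' but since both $l_s$ and $l_{s'}$ are strictly between consecutive integers, it is in fact affine on a full two-sided neighborhood of $0$ and remains affine all the way to the first integer hit; this is exactly what makes the one-shot move (rather than a sequence of infinitesimal steps) valid. Your boundary worry in the last paragraph is moot, because the two coordinates you perturb are by construction non-integer and hence strictly interior to $[0,n_s]$ and $[0,n_{s'}]$.

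The paper's proof is the same iterative-rounding idea but with different mechanics: instead of a pairwise swap, it selects the fractional coordinate $\tilde s$ with the \emph{smallest} slope $b_{\tilde s,\lfloor l_{\tilde s}\rfloor+1}-b_{\tilde s,\lfloor l_{\tilde s}\rfloor}$, pushes that coordinate all the way up to the next integer, and compensates by decreasing several other fractional coordinates simultaneously (a one-to-many redistribution). The min-slope choice guarantees the net change is non-positive in a single algebraic line. Your pairwise version is the more standard LP argument and arguably cleaner, since it avoids the bookkeeping of redistributing mass across multiple coordinates; the paper's version has the minor advantage that each step definitively converts one fractional coordinate to an integer in a prescribed direction. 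Either way, both terminate in at most $S$ steps and yield the same conclusion.
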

\begin{proof}[Proof of Lemma \ref{lemma:KR_pwl}]
Note that $g(l_1, \ldots, l_S)$ is continuous in $(l_1, \ldots, l_S)$, and $\mathcal{R}_{n}(k)$ is a compact set. 
Thus, the quantities on the right hand side of \eqref{eq:KR_pwl} and \eqref{eq:KR_pwl_max} are well-defined. 
Moreover, because $-g(l_1, \ldots, l_S) = \sum_{s=1}^S -g_s(l_s)$ is still a summation of piecewise-linear functions,  
to prove Lemma \ref{lemma:KR_pwl}, it suffices to prove \eqref{eq:KR_pwl}. 
Because $\mathcal{K}_{n}(k)\subset \mathcal{R}_{n}(k)$,
it suffices to prove that, 
for any given $(l_1, \ldots, l_S) \in \mathcal{R}_{n}(k)$, 
\begin{align}\label{eq:KR_suff}
    g(l_1, \ldots, l_S)
    \ge 
     \min_{(l_1, \ldots, l_S) \in \mathcal{K}_{n}(k)} g(l_1, \ldots, l_S). 
\end{align}
Let $\zeta_s = l_s - \lfloor l_s \rfloor \in [0, 1)$ for all $s$. 
If all the $\zeta_s$'s are zero, then \eqref{eq:KR_suff} holds obviously. 
Otherwise, 
$\mathcal{C} = \{s: \zeta_s > 0, 1\le s\le S\}$ is a nonempty set, 
and $\sum_{s\in \mathcal{C}} \zeta_s = \sum_{s=1}^S l_s - \sum_{s=1}^S \lfloor l_s \rfloor = k - \sum_{s=1}^S \lfloor l_s \rfloor$ must be a positive integer. 
Let $\tilde{s} = \argmin_{s\in \mathcal{C}} ( b_{s, \lfloor l_s \rfloor+1} - b_{s, \lfloor l_s \rfloor})$. 
Because $\sum_{s\in \mathcal{C}} \zeta_s \ge 1$ and equivalently $1 - \zeta_{\tilde{s}} \le \sum_{s\in \mathcal{C}, s\ne \tilde{s}} \zeta_s$, 
we must be able to define 
$(\zeta_1', \ldots, \zeta_S')$ such that 
(i) $\zeta_{\tilde{s}}' = 1$, 
(ii) $0 \le \zeta_s' \le \zeta_s$ for $s\in \mathcal{C}$ and $s\ne \tilde{s}$, 
(iii) $\zeta_s' = \zeta_s = 0$ for $s\notin \mathcal{C}$, 
and 
(iv) $\sum_{s=1}^S \zeta_s' = \sum_{s=1}^S \zeta_s$. 
We then have 
\begin{align*}
    & \quad \ \sum_{s=1}^S \zeta_s' (b_{s, \lfloor l_s \rfloor+1} - b_{s, \lfloor l_s \rfloor}) - \sum_{s=1}^S \zeta_s ( b_{s, \lfloor l_s \rfloor+1} - b_{s, \lfloor l_s \rfloor} )
    \\
    & = 
    (\zeta_{\tilde{s}}' - \zeta_{\tilde{s}} ) ( b_{\tilde{s}, \lfloor l_{\tilde{s}} \rfloor+1} -  b_{\tilde{s}, \lfloor l_{\tilde{s}} \rfloor} )- 
    \sum_{s\in \mathcal{C}, s\ne \tilde{s}} (\zeta_s - \zeta_s') ( b_{s, \lfloor l_s \rfloor+1} - b_{s, \lfloor l_s \rfloor} )\\
    & \le (\zeta_{\tilde{s}}' - \zeta_{\tilde{s}} ) ( b_{\tilde{s}, \lfloor l_{\tilde{s}} \rfloor+1} -  b_{\tilde{s}, \lfloor l_{\tilde{s}} \rfloor} ) - 
    \sum_{s\in \mathcal{C}, s\ne \tilde{s}} (\zeta_s - \zeta_s') ( b_{\tilde{s}, \lfloor l_{\tilde{s}} \rfloor+1} -  b_{\tilde{s}, \lfloor l_{\tilde{s}} \rfloor} ) = 0,
\end{align*}
where the second last equality holds due to the definition of $\tilde{s}$ and property (ii) in the construction of $(\zeta_1', \ldots, \zeta_S')$, 
and the last equality holds due to properties (iii) and (iv) in the construction of $(\zeta_1', \ldots, \zeta_S')$. 
Define $l_s' = \lfloor l_s \rfloor + \zeta_s'$ for all $s$. We then verify that 
\begin{align*}
    g(l_1, \ldots, l_S) & = \sum_{s=1}^S 
    \left\{ b_{s, \lfloor l_s \rfloor}
    + (l_s - \lfloor l_s \rfloor) ( b_{s, \lfloor l_s \rfloor+1} - b_{s, \lfloor l_s \rfloor} )
    \right\}
    \\
    & 
    = 
    \sum_{s=1}^S 
    \left\{ b_{s, \lfloor l_s \rfloor}
    + \zeta_s ( b_{s, \lfloor l_s \rfloor+1} - b_{s, \lfloor l_s \rfloor} )
    \right\}
    \ge 
    \sum_{s=1}^S 
    \left\{ b_{s, \lfloor l_s \rfloor}
    + \zeta_s' ( b_{s, \lfloor l_s \rfloor+1} - b_{s, \lfloor l_s \rfloor} )
    \right\}\\
    & = 
    \sum_{s=1}^S 
    \left\{ b_{s, \lfloor l_s' \rfloor}
    + (l_s' - \lfloor l_s'\rfloor) ( b_{s, \lfloor l_s' \rfloor+1} - b_{s, \lfloor l_s' \rfloor} )
    \right\}
    = g(l_1', \ldots, l_S'). 
\end{align*}
Moreover, $(l_1', \ldots, l_S')$ is also in $\mathcal{R}_n(k)$, and it contains strictly less non-integer elements than $(l_1, \ldots, l_S)$. 
As long as $(l_1', \ldots, l_S')$ contains non-integer elements, 
by the same logic, we can further find $(l_1'', \ldots, l_S'')$, which contains strictly less non-integer elements than $(l_1', \ldots, l_S')$ and satisfies 
$g(l_1', \ldots, l_S') \ge g (l_1'', \ldots, l_S'')$. 
Eventually, we can find $(\overline{l}_1, \ldots, \overline{l}_S) \in \mathcal{K}_n(k)$ such that $g(l_1, \ldots, l_S) \ge g(\overline{l}_1, \ldots, \overline{l}_S)$. 
This immediately implies \eqref{eq:KR_suff}. 
From the above, Lemma \ref{lemma:KR_pwl} holds. 
\end{proof}

\begin{remark}
Here we give a remark showing that the objective function in the piecewise-linear optimization in \eqref{eq:pwl} can be non-convex. 
Specifically, we consider the data set in \S  \ref{sec:numerical_example} or more precisely Table \ref{tab:greedy_counter}. 
Figure \ref{fig:pwl_nonconvex} shows the piecewise-linear functions $f_{s,0}(\cdot)$'s for the three strata, defined as in \S \ref{sec:pwl}. 
The objective function for the piecewise-linear optimization is then $f_0(l) = \sum_{s=1}^3 f_{s,0}(l_s)$ for $l=(l_1, l_2, l_3)$. 
Let $a = (0, 3, 3)$, $b=(2, 5, 3)$, and $c= (a+b)/2 = (1, 4, 3)$. 
By definition, we can verify that 
$f_0(a) = 14$, $f_0(b) = 4$, and $f_0(c) = 11$. 
Obviously, $(f_0(a)+f_0(b))/2 = 9 < 11 = f_0(c) = f_0((a+b)/2)$. 
Therefore, $f_0(\cdot)$ is not a convex function. 
\end{remark}

\begin{figure}[h]
    \centering
    \subfloat[$f_{1,0}(x)$]{\includegraphics[width=0.25\textwidth]{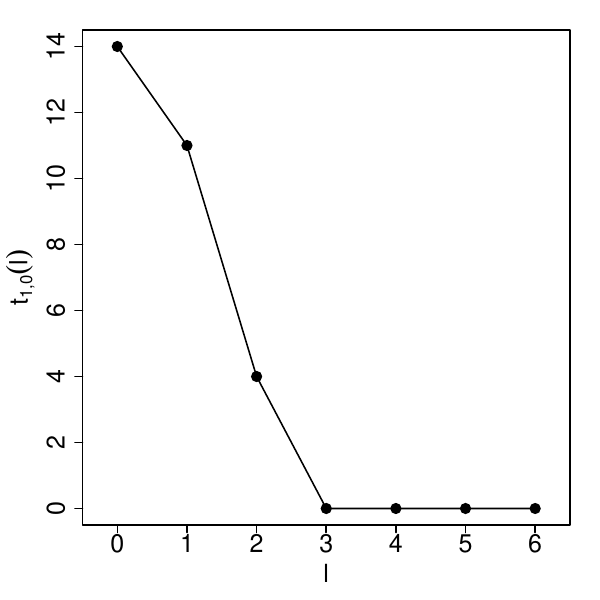}}\hfil 
    \subfloat[$f_{2,0}(x)$]{\includegraphics[width=0.25\textwidth]{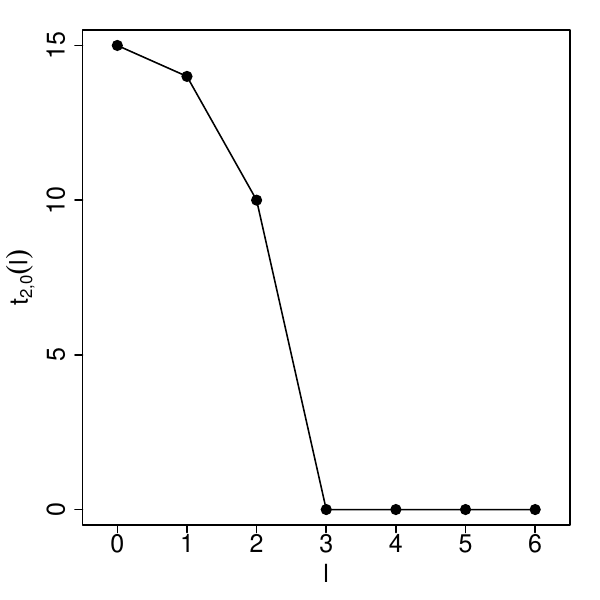}} \hfil
    \subfloat[$f_{3,0}(x)$]{\includegraphics[width=0.25\textwidth]{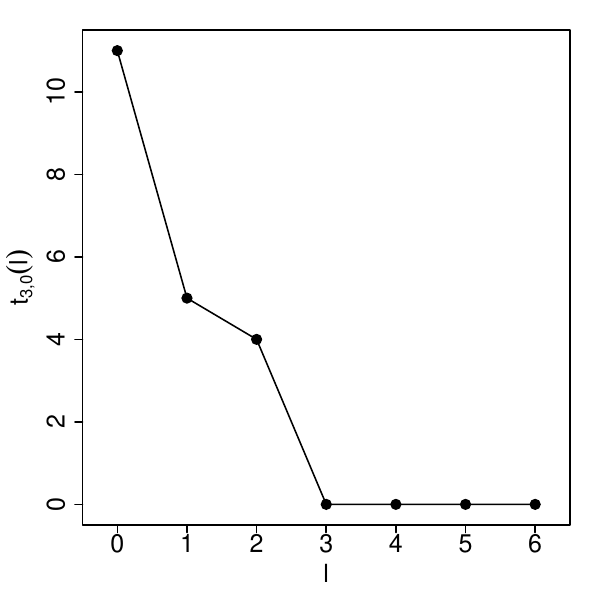}} 
\caption{
Piecewise-linear functions $f_{s,0}(\cdot)$'s for all strata, defined as in \S \ref{sec:pwl}, for the observed data in \S \ref{sec:numerical_example} or more precisely Table \ref{tab:greedy_counter}. 
}
\label{fig:pwl_nonconvex}
\end{figure}

\subsection{Greedy algorithm with the optimal monotone dominating transformation solves the linear programming problem}\label{sec:proof_greedy}
We first introduce the following key lemma. 

\begin{lemma}\label{lemma:opt_trans_ineq}
 For any $n\ge 1$, 
$b=(b_0, \ldots, b_n) \in \mathbb{R}^{n+1}$, 
$a=(a_1, \ldots, a_n) \in \mathbb{R}^n$ with 
$a_i = b_{i-1} - b_i$ for all $i$, 
and any real number $k \in [0, n]$, 
\begin{align}\label{eq:proof_monotone_strata}
    \sum_{i=1}^{\lfloor k \rfloor} 
    \tilde{\Psi}_i ( a ) 
    + 
    (k - \lfloor k \rfloor) \tilde{\Psi}_{\lfloor k \rfloor+1} ( a )
    = 
    \max_{(x_0, \ldots, x_n)\in \mathcal{S}_{n,k}}
    \sum_{l=0}^{n}  x_{l}( b_0- b_l ), 
\end{align}
where $\lfloor k \rfloor$ denotes the largest integer less than or equal to $k$, and 
$\mathcal{S}_{n, k} = \{ (x_0, \ldots, x_n)\in [0,1]^{n+1}: \sum_{l=0}^{n} x_{l}=1, \sum_{l=0}^{n} l x_{l} = k \}$. 
\end{lemma}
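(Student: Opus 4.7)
My plan is to identify both sides of \eqref{eq:proof_monotone_strata} with the value at $k$ of the upper concave envelope of the points $\{(l, S_l)\}_{l=0}^n$, where $S_l \equiv \sum_{i=1}^l a_i = b_0 - b_l$ (with $S_0 = 0$). Let $C:[0,n]\to\mathbb{R}$ denote this envelope, i.e., the pointwise smallest concave function satisfying $C(l) \ge S_l$ for $l=0,1,\ldots,n$. Since $C$ is piecewise linear with vertices at a subset of $\{0,1,\ldots,n\}$, it is affine on each interval $[\lfloor k\rfloor, \lfloor k \rfloor + 1]$, so $C(k) = C(\lfloor k\rfloor) + (k-\lfloor k\rfloor)\{C(\lfloor k\rfloor + 1) - C(\lfloor k\rfloor)\}$.

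First, I will show that the right-hand side of \eqref{eq:proof_monotone_strata} equals $C(k)$. For the upper bound, given any feasible $(x_0,\ldots,x_n) \in \mathcal{S}_{n,k}$, the concavity of $C$ together with $S_l \le C(l)$ gives
\[
\sum_{l=0}^n x_l (b_0 - b_l) \;=\; \sum_{l=0}^n x_l S_l \;\le\; \sum_{l=0}^n x_l C(l) \;\le\; C\Bigl(\sum_{l=0}^n l x_l\Bigr) \;=\; C(k).
\]
For the matching lower bound, let $j_r \le k \le j_{r+1}$ be two adjacent breakpoints of $C$ and take $x_{j_r} = (j_{r+1} - k)/(j_{r+1} - j_r)$, $x_{j_{r+1}} = (k - j_r)/(j_{r+1} - j_r)$, and all other $x_l = 0$. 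Because $S_{j_r} = C(j_r)$ and $S_{j_{r+1}} = C(j_{r+1})$ at breakpoints, the corresponding objective value is exactly $C(k)$.

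Second, I will show by induction on $i$ that $\tilde\Psi_i(a) = C(i) - C(i-1)$, i.e.\ that $\tilde\Psi$ produces the (constant on each linear piece) slopes of the concave envelope. For the base case, by definition $\tilde\Psi_1(a) = \max_{1 \le j \le n} S_j/j$, and this maximum equals the slope $C(j_1)/j_1$ of the first linear piece of $C$, since the secant from the origin to $(j, S_j)$ is maximized precisely where $S_j$ touches its concave envelope for the first time. For the inductive step, assume $\tilde\Psi_t(a) = C(t) - C(t-1)$ for $t < i$, so $\sum_{t=1}^{i-1}\tilde\Psi_t(a) = C(i-1)$. Then
\[
\tilde\Psi_i(a) \;=\; \max_{i \le j \le n} \frac{S_j - C(i-1)}{j-i+1} \;\le\; \max_{i \le j \le n} \frac{C(j) - C(i-1)}{j-i+1},
\]
and by concavity of $C$ the right-hand secant slopes are non-increasing in $j$, with common maximum value equal to the slope of $C$ on $[i-1,i]$, namely $C(i) - C(i-1)$. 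This upper bound is achieved at $j$ equal to the next breakpoint $\ge i$, where $S_j = C(j)$, completing the induction.

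Combining the two steps, the left-hand side of \eqref{eq:proof_monotone_strata} telescopes to
\[
\sum_{i=1}^{\lfloor k\rfloor} \{C(i) - C(i-1)\} + (k - \lfloor k\rfloor)\{C(\lfloor k\rfloor + 1) - C(\lfloor k\rfloor)\} \;=\; C(k),
\]
which equals the right-hand side by the first step. The main obstacle is the inductive characterization of $\tilde\Psi_i(a)$ as the slope increments of the concave envelope; the subtlety there is that the maximization in the definition of $\tilde\Psi_i(a)$ ranges over $j \ge i$ rather than over breakpoints of $C$, so the argument must simultaneously invoke $S_j \le C(j)$ (to pass from $S$ to $C$), concavity of $C$ (to identify the maximizing slope), and the tightness $S_j = C(j)$ at breakpoints (to certify the upper bound is attained).
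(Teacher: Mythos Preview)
Your proposal is correct and takes a genuinely different route from the paper's proof. The paper argues by induction on $m=\lfloor k\rfloor$: for the base case $k\in[0,1]$ it verifies \eqref{eq:proof_monotone_strata} directly, and for the step from $(m-1,m]$ to $(m,m+1]$ it proves both inequalities by explicitly constructing feasible points. For $\le$ it builds $(x_0',\ldots,x_n')\in\mathcal{S}_{n,k}$ attaining the left-hand side, and for $\ge$ it takes an arbitrary $(x_0,\ldots,x_n)\in\mathcal{S}_{n,k}$, engineers a related $(x_0',\ldots,x_n')\in\mathcal{S}_{n,m}$ via a somewhat intricate reweighting, and invokes the inductive hypothesis. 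Your argument instead identifies both sides with the upper concave envelope $C$ of the points $\{(l,S_l)\}_{l=0}^n$: the right-hand side equals $C(k)$ by Jensen plus the two-point construction at adjacent breakpoints, and the left-hand side equals $C(k)$ because $\tilde\Psi_i(a)=C(i)-C(i-1)$, which you establish by a short induction on $i$. This geometric viewpoint is in fact alluded to in the main text (\S\ref{sec:greedy}, where the paper notes that the cumulative sums of $\tilde\Psi$ form the upper convex hull), but the paper does not exploit it in the proof of the lemma itself. Your approach is shorter and more transparent; the paper's approach, while heavier, is self-contained in that it does not appeal to properties of concave envelopes. One small point worth making explicit in your write-up is that the endpoints $0$ and $n$ are always breakpoints of $C$ (so the ``adjacent breakpoints $j_r\le k\le j_{r+1}$'' always exist), and that when $k=n$ the term $(k-\lfloor k\rfloor)\tilde\Psi_{\lfloor k\rfloor+1}(a)$ vanishes so the undefined $\tilde\Psi_{n+1}$ causes no trouble.
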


\begin{proof}[Proof of Lemma \ref{lemma:opt_trans_ineq}]
We first note that $\sum_{l=0}^{n} x_{l}( b_0- b_l )$ is a continuous function of $(x_0,\ldots, x_n)$, and $\mathcal{S}_{n, k}$ is a compact set. 
Thus, the maximum on the right hand side of \eqref{eq:proof_monotone_strata} is well-defined. 
We will prove \eqref{eq:proof_monotone_strata} by induction. 
When $k \in [0,1]$, the left hand side of  \eqref{eq:proof_monotone_strata} must equal $k \tilde{\Psi}_1(a)$. 
By definition, $\tilde{\Psi}_1(a) = (b_0 - b_{j_1})/j_1$, where $ j_1 = \argmax_{1\le j\le n} (b_0 - b_j)/j$. 
This implies that, 
for any $(x_0,\ldots, x_n)\in\mathcal{S}_{n,k}$, 
\begin{align*}
    \sum_{l=0}^{n}  x_{l}( b_0- b_l ) = \sum_{l=1}^{n}  x_{l}( b_0- b_l ) 
    \le \sum_{l=1}^{n}  l x_{l} \frac{b_0-b_l}{l}  \le \tilde{\Psi}_1(a) \sum_{l=1}^{n}  l x_{l}
    = k  \tilde{\Psi}_1(a). 
\end{align*}
Moreover, define $x_0 = 1 - k/j_1$, $x_{j_1} = k/j_1$, and $x_l = 0$ for $1\le l\le n$ and $l\ne j_1$. 
We can verify that $(x_0,\ldots, x_n)\in\mathcal{S}_{n,k}$, and 
$\sum_{l=0}^{n}  x_{l}( b_0- b_l ) = x_{j_1}( b_0- b_{j_1} ) = k ( b_0- b_{j_1} )/j_1 =k  \tilde{\Psi}_1(a)$. 
Therefore, \eqref{eq:proof_monotone_strata} holds for $k\in [0,1]$. 
Assume that \eqref{eq:proof_monotone_strata} holds when $k \in (m-1, m]$ for some $1\le m < n$. 
Below we consider the case where $k \in (m, m+1]$.
Let $\zeta = k - m \in (0, 1]$. 
By definition, the left hand side of \eqref{eq:proof_monotone_strata} has the following equivalent forms:
\begin{align}\label{eq:proof_monotone_max}
    & \quad \ \sum_{i=1}^{\lfloor k \rfloor} 
    \tilde{\Psi}_i ( a ) 
    + 
    (k - \lfloor k \rfloor) \tilde{\Psi}_{\lfloor k \rfloor+1} ( a )
    \nonumber
    \\
    &= 
    \sum_{i=1}^{m} 
    \tilde{\Psi}_i(a) + \zeta \tilde{\Psi}_{m+1}(a)
    = \sum_{i=1}^{m} 
    \tilde{\Psi}_i(a) + \zeta  \max_{m+1\le j\le n} \frac{(b_0-b_j) - \sum_{i=1}^m \tilde{\Psi}_i(a)}{j-m}
    \nonumber
    \\
    & = 
    \max_{m+1 \le j \le n} \left\{ \frac{ \zeta (b_0-b_j)}{j-m}+\frac{j-m-\zeta}{j-m}\sum_{i=1}^{m}  \tilde{\Psi}_i( a) \right\}. 
\end{align}

We first prove that the left hand side of \eqref{eq:proof_monotone_strata} is less than or equal to the right hand side of \eqref{eq:proof_monotone_strata}. 
Let $j_{m+1} \in \{m+1, \ldots, n\}$ be the maximizer of the quantity in \eqref{eq:proof_monotone_max}. 
By our assumption, there must exist $(x_0,\ldots, x_n)\in\mathcal{S}_{n,m}$ such that  
$\sum_{i=1}^{m}  \tilde{\Psi}_i( a) = \sum_{l=0}^{n} x_{l}( b_0-b_l)$. 
Consequently, 
\begin{align*}
    & \quad \ \sum_{i=1}^{\lfloor k \rfloor} 
    \tilde{\Psi}_i ( a ) 
    + 
    (k - \lfloor k \rfloor) \tilde{\Psi}_{\lfloor k \rfloor+1} ( a ) \\
    &= \frac{ \zeta( b_0-b_{j_{m+1}})}{j_{m+1}-m}+\frac{j_{m+1}-m-\zeta}{j_{m+1}-m}\sum_{i=1}^{m}  \tilde{\Psi}_i( a)
    = \frac{ \zeta( b_0-b_{j_{m+1}})}{j_{m+1}-m}+\frac{j_{m+1}-m-\zeta}{j_{m+1}-m}\sum_{l=0}^{n} x_{l}( b_0-b_l)
    \\
    & = \sum_{l=0}^{n} x'_{l}( b_0-b_l), 
\end{align*}
where 
\begin{align*}
    x'_{l} \equiv 
    \begin{cases}
    (j_{m+1}-m-\zeta)x_l/(j_{m+1}-m), & \text{for } l\ne j_m\\
    (j_{m+1}-m-\zeta)x_l/(j_{m+1}-m) + \zeta/(j_{m+1}-m), & \text{for } l=j_{m+1}.
    \end{cases}
\end{align*}
We can verify that $0\le x_l'\le 1$ for all $l$, and $\sum_{l=0}^n x_l' = 1$. 
Moreover, 
\begin{align*}
    \sum_{l=0}^n l x_l' =  \frac{j_{m+1}-m-\zeta}{j_{m+1}-m} \sum_{l=0}^n l x_l + \frac{\zeta j_{m+1}}{j_{m+1}-m}
    = \frac{(j_{m+1}-m-\zeta)m + \zeta j_{m+1}}{j_{m+1}-m} = m+\zeta = k. 
\end{align*}
Thus, $(x_0',\ldots,x_n')\in \mathcal S_{n,k}$. 
This implies that 
\begin{align*}
    \sum_{i=1}^{\lfloor k \rfloor} 
    \tilde{\Psi}_i ( a ) 
    + 
    (k - \lfloor k \rfloor) \tilde{\Psi}_{\lfloor k \rfloor+1} ( a )
    = 
    \sum_{l=0}^{n} x'_{l}( b_0-b_l)
    \le     \max_{(x_0, \ldots, x_n)\in \mathcal{S}_{n,k}}
    \sum_{l=0}^{n}  x_{l} ( b_0- b_l ). 
\end{align*}

We then prove that the left hand side of \eqref{eq:proof_monotone_strata} is greater than or equal to the right hand side of \eqref{eq:proof_monotone_strata}. 
From \eqref{eq:proof_monotone_max}, it suffices to prove that, for any given $(x_0,\ldots, x_n)\in\mathcal S_{n,k}$, 
\begin{align}\label{eq:upper_x}
    \sum_{l=0}^{n} x_{l} ( b_0- b_l ) \le \max_{m+1 \le j \le n} \left\{ \frac{ \zeta (b_0-b_j)}{j-m}+\frac{j-m-\zeta}{j-m}\sum_{i=1}^{m}  \tilde{\Psi}_i( a) \right\}.
\end{align}
Let $\Omega \equiv \sum_{i=m+1}^n (i-m) x_i \ge \sum_{i=0}^n (i-m) x_i = \sum_{i=0}^n i x_i - m \sum_{i=0}^n x_i = k - m = \zeta > 0$, 
$\lambda_i \equiv (i-m) x_i / \Omega \ge 0$ for $m+1\le i\le n$, 
and 
\begin{align}\label{eq:Lambda}
    \Lambda \equiv \sum_{i=m+1}^n \frac{\lambda_i}{i-m} =  \frac{\sum_{i=m+1}^n x_i}{\Omega} = \frac{\sum_{i=m+1}^n x_i}{\sum_{i=m+1}^n (i-m) x_i} \le 1. 
\end{align}
If $\zeta \Lambda = 1$, then $\zeta = 1$ and $\Lambda = 1$. 
We then have $k=m+\zeta = m+1$, and $\sum_{i=m+1}^n (i-m-1) x_i = 0$, which further implies that 
$x_{i} = 0$ for $m+1<i\le n$. 
Consequently, 
$m+1 = k = \sum_{i=0}^n i x_i = \sum_{i=0}^{m+1} i x_i \le (m+1) \sum_{i=0}^{m+1} x_i \le m+1$. 
Thus, we must have $x_{m+1} = 1$, and 
\begin{align*}
    \sum_{l=0}^{n} x_{l} ( b_0- b_l ) & = b_0 - b_{m+1} 
    = \frac{ \zeta( b_0-b_{m+1})}{(m+1)-m}+\frac{(m+1)-m-\zeta}{(m+1)-m}\sum_{i=1}^{m}  \tilde{\Psi}_i( a)\\
    & \le 
    \max_{m+1 \le j \le n} \left\{ \frac{ \zeta (b_0-b_j)}{j-m}+\frac{j-m-\zeta}{j-m}\sum_{i=1}^{m}  \tilde{\Psi}_i( a) \right\},
\end{align*}
i.e., \eqref{eq:upper_x} holds. 
If $\zeta \Lambda < 1$, we define $(x_0', x_1', \ldots, x_n')$ as 
\begin{align}\label{eq:trans_m}
    x'_l =     \begin{cases}
    {x_l}/(1-\zeta \Lambda), & \text{for } 0 \le l\le m\\
    \{x_l-\zeta \lambda_l/(l-m)\}/(1-\zeta\Lambda) = x_l (1-\zeta \Omega^{-1})/(1-\zeta \Lambda), &\text{for } m < l \le n. 
    \end{cases}
\end{align}
Below we verify that $(x'_0,\ldots, x'_n)\in \mathcal S_{n,m}$. 
Because $\zeta \le 1$,  $\Omega \ge 1$ and $\sum_{i=m+1}^n x_i\le 1$, from \eqref{eq:Lambda}, we have 
\begin{align*}
    1 - \zeta \Lambda =  1 - \frac{\zeta \sum_{i=m+1}^n x_i}{\Omega} 
    \begin{cases}
    \ge 1 - \sum_{i=m+1}^n x_i = \sum_{i=0}^m x_i, 
    \\
    \ge 1 - \zeta \Omega^{-1}. 
    \end{cases}
\end{align*}
Thus, for $0 \le l \le m$, $0 \le x_l' = x_l/(1-\zeta \Lambda) \le \sum_{i=0}^m x_i /(1-\zeta \Lambda) \le 1$, 
and for $m<l\le n$, 
$
    0 \le x_l' = x_l (1-\zeta \Omega^{-1})/(1-\zeta \Lambda) \le x_l \le 1. 
$
Moreover, from \eqref{eq:Lambda}, 
\begin{align*}
    \sum_{i=0}^nx'_i=\left(\sum_{i=0}^n{x_i}-\zeta \sum_{i=m+1}^n \frac{\lambda_i}{i-m}\right)/(1-\zeta\Lambda)
    =
    (1-\zeta \Lambda)/(1-\zeta \Lambda)
    =1, 
\end{align*}
and 
\begin{align*}
     \sum_{i=0}^n i x'_i & =\left(\sum_{i=0}^n{i x_i}- \frac{\zeta\sum_{i=m+1}^nix_i}{\Omega}\right)/(1-\zeta \Lambda) 
     =\left(m+\zeta- \frac{\zeta\sum_{i=m+1}^nix_i}{\Omega}\right)/(1-\zeta \Lambda) 
     \\
     & = \left(m- \zeta\frac{\sum_{i=m+1}^nix_i - \Omega}{\Omega}\right)/(1-\zeta \Lambda) 
     = 
     \left(m- \zeta \frac{m\sum_{i=m+1}^n x_i}{\Omega}\right)/(1-\zeta \Lambda)
     \\
     & =
     (m-\zeta m\Lambda)/(1-\zeta \Lambda)
     =
     m. 
\end{align*}
Therefore,  $(x'_0,\ldots, x'_n)\in \mathcal S_{n,m}$.
Note that $\sum_{i=m+1}^n \lambda_i = 1$. 
By the definition in \eqref{eq:trans_m}, 
\begin{align*}
    x_l = x_l' (1-\zeta \Lambda)  =  x_l' \left( \sum_{i=m+1}^n \lambda_i - \zeta \sum_{i=m+1}^n \frac{\lambda_i}{i-m} \right)  
    = x_l' \sum_{i=m+1}^n  \frac{\lambda_i(i-m-\zeta)}{i-m}, \quad (0\le l\le m), 
\end{align*}
and by the same logic, 
\begin{align*}
    x_l = (1-\zeta \Lambda) x_l' + \frac{\zeta \lambda_l}{l-m}
    = x_l' \sum_{i=m+1}^n  \frac{\lambda_i(i-m-\zeta)}{i-m} + \frac{\zeta \lambda_l}{l-m}, \quad (m< l\le n). 
\end{align*}
The left hand side of \eqref{eq:upper_x} then has the following equivalent forms: 
\begin{align*}
    \sum_{l=0}^{n} x_l( b_0- b_l )
    & = 
    \sum_{l=0}^{n} ( b_0- b_l ) \left\{ x'_l \sum_{i=m+1}^n  \frac{\lambda_i(i-m-\zeta)}{i-m} \right\}
    + 
    \sum_{l=m+1}^{n}( b_0- b_l )\frac{\zeta \lambda_l}{l-m}
    \\
    & = 
    \sum_{i=m+1}^n \frac{\lambda_i(i-m-\zeta)}{i-m} \left\{ \sum_{l=0}^{n} x'_l ( b_0- b_l ) \right\} + \sum_{l=m+1}^{n}( b_0- b_l )\frac{\zeta \lambda_l}{l-m}
    \\
    & = 
    \sum_{j=m+1}^n \lambda_j \left[ \frac{\zeta(b_0-b_j)}{j-m} + \frac{j-m-\zeta}{j-m} \cdot \left\{ \sum_{l=0}^{n} x'_l ( b_0- b_l ) \right\}\right].
\end{align*}
By our assumption and the fact that $(x'_0,\ldots, x'_n)\in \mathcal S_{n,m}$, we must have $\sum_{l=0}^{n}  x'_l ( b_0- b_l ) \le \sum_{i=1}^m \tilde{\Psi}_i ( a )$. Consequently, we can bound the left hand side of \eqref{eq:upper_x} by 
\begin{align*}
    \sum_{l=0}^{n}  x_l( b_0- b_l ) 
    & \le 
    \sum_{j=m+1}^n \lambda_j \left\{ \frac{\zeta(b_0-b_j)}{j-m} + \frac{j-m-\zeta}{j-m}  \sum_{i=1}^m \tilde{\Psi}_i ( a ) \right\}
    \\
    & \le \max_{m+1 \le j \le n} \left\{ \frac{ \zeta (b_0-b_j)}{j-m}+\frac{j-m-\zeta}{j-m}\sum_{i=1}^{m}  \tilde{\Psi}_i( a) \right\},
\end{align*}
where the last inequality holds because $\sum_{j=m+1}^n \lambda_j=1$ and $\lambda_j \ge 0$ for $m<j\le n$. 

From the above, \eqref{eq:proof_monotone_strata} must hold when 
{\lxr $k \in (m, m+1].$}
Therefore, Lemma \ref{lemma:opt_trans_ineq} holds.
\end{proof}

Below we prove that the Greedy algorithm in \S \ref{sec:greedy} solves the linear programming problem in \S \ref{sec:ILP_LP}. 
Let $t_{k,c}^\LP$ denote the optimal value from the linear programming in \eqref{eq:linear_program}, and $t_{k,c}^\GT$ denote the optimal value from the Greedy algorithm in Algorithm \ref{alg:greedy}. 
Define $\mathcal{S}_{n, k}$ the same as in Lemma \ref{lemma:opt_trans_ineq}. 

\begin{proof}[Proof for $t_{k,c}^{\LP} \le t_{k,c}^{\GT}$]
Let 
$
(\tilde{l}_1, \ldots, \tilde{l}_S) \equiv  \argmax_{(l_1, \ldots, l_S) \in \mathcal{K}_{\bs{n}}(N-k)} \sum_{s=1}^S \sum_{i=1}^{l_s} 
\tilde{\Psi}_i\left( \bs{\Delta}_{s,c}(\overline{n_{sk}} ) \right). 
$
Recall that, by definition, $\bs{\Delta}_{s,c}(i) = t_{s,c}(i-1) - t_{s,c}(i)$ for $1\le i \le n_{sk}$. 
From Lemma \ref{lemma:opt_trans_ineq}, for each $1\le s\le S$, there exists $(x_{s0}, \ldots, x_{sn_{sk}}) \in \mathcal{S}_{n, \tilde{l}_s}$ such that 
\begin{align*}
    \sum_{i=1}^{\tilde{l}_s} 
    \tilde{\Psi}_i \left( \bs{\Delta}_{s,c}(\overline{n_{sk}} ) \right)
    & = \sum_{l=0}^{n_{sk}}  x_{sl} \{ t_{s,c}(0)- t_{s,c}(l) \}
    = t_{s,c}(0)-  \sum_{l=0}^{n_{sk}}  x_{sl} t_{s,c}(l), 
\end{align*}
where the last equality holds because $\sum_{l=0}^{n_{sk}}  x_{sl}=1$. 
Consequently, 
\begin{align*}
    t_{k,c}^{\GT} & = \sum_{s=1}^S t_{s,c}(0) - \sum_{s=1}^S \sum_{j=1}^{\tilde{l}_s} 
    \tilde{\Psi}_j\left( \bs{\Delta}_{s,c}(\overline{n_{sk}} ) \right)
    = \sum_{s=1}^S \sum_{l=0}^{n_{sk}}  x_{sl} t_{s,c}(l). 
\end{align*}
By the definition of $\mathcal{S}_{n, \tilde{l}_s}$'s, 
(i) $x_{sl}\in [0,1]$ for all $l, s$, 
(ii) $\sum_{l=0}^{n_{sk}}  x_{sl}=1$ for all $s$, 
and (iii) 
$
    \sum_{s=1}^S \sum_{l=0}^{n_{sk}} x_{sl} l = \sum_{s=1}^S \tilde{l}_s = N-k. 
$
Thus, $x_{sl}$'s satisfy the constraints in the linear programming \eqref{eq:linear_program}. 
This immediately implies that the solution from the linear programming \eqref{eq:linear_program} is 
bounded by $t_{k,c}^{\LP} \le t_{k,c}^{\GT}$. 
\end{proof}

\begin{proof}[Proof for  $t_{k,c}^{\LP} \ge t_{k,c}^{\GT}$]
To prove $t_{k,c}^{\LP} \ge t_{k,c}^{\GT}$, it suffices to prove that, for any given  $x_{sl}$'s satisfying that
$0\le x_{sl}\le 1$ for all $s, l$, 
$\sum_{l=0}^{n_{sk}}x_{sl}=1$ for all $s$, and $\sum_{s=1}^S\sum_{l=0}^{n_{sk}}x_{sl} l=N-k$, 
\begin{align*}
    \sum_{s=1}^S \sum_{l=0}^{n_{sk}}  x_{sl} t_{s,c}(l) \ge t_{k,c}^{\GT} 
    \Longleftrightarrow
    \max_{(l_1, \ldots, l_S) \in \mathcal{K}_{\bs{n}}(N-k)} \sum_{s=1}^S \sum_{j=1}^{l_s} 
    \tilde{\Psi}_j\left( \bs{\Delta}_{s,c}(\overline{n_{sk}} ) \right) \ge
    \sum_{s=1}^S \sum_{l=0}^{n_{sk}}x_{sl}\{ t_{s,c}(0)- t_{s,c}(l)\}.  
\end{align*}
For $1\le s\le S$ and real number $l_s\in [0, n_{sk}]$, define
\begin{align*}
    g_s (l_s) = \sum_{j=1}^{\lfloor l_s \rfloor} 
    \tilde{\Psi}_j\left( \bs{\Delta}_{s,c}(\overline{n_{sk}} ) \right) 
    + 
    (l_s - \lfloor l_s \rfloor)\tilde{\Psi}_{\lfloor l_s \rfloor+1}\left( \bs{\Delta}_{s,c}(\overline{n_{sk}} ) \right), 
\end{align*}
and define further $g(l_1, \ldots, l_S) = \sum_{s=1}^S g_s(l_s)$.
From Lemma \ref{lemma:KR_pwl},
it suffices to prove that 
\begin{align}\label{eq:suff_g}
    \max_{(l_1, \ldots, l_S) \in \mathcal{R}_{\bs{n}}(N-k)} g(l_1, \ldots, l_S) \ge
    \sum_{s=1}^S \sum_{l=0}^{n_{sk}}x_{sl}\{ t_{s,c}(0)- t_{s,c}(l)\}, 
\end{align}
where $\mathcal{R}_{\bs{n}}(N-k)$ is defined as in Lemma \ref{lemma:KR_pwl}.
Let $\overline{l}_s \equiv \sum_{l=0}^{n_{sk}}x_{sl} l$ for all $s$. 
By the property of $x_{sl}$'s, we can know that 
$(\overline{l}_1, \ldots, \overline{l}_S) \in \mathcal{R}_{\bs{n}}(N-k)$, 
and 
$(x_{s0}, \ldots, x_{sn_{sk}}) \in \mathcal{S}_{n_{sk}, \overline{l}_s}$. 
From Lemma \ref{lemma:opt_trans_ineq}, 
we can know that
$\sum_{l=0}^{n_{sk}}x_{sl}\{ t_{s,c}(0)- t_{s,c}(l)\} \le g_s(\overline{l}_s)$ for all $s$. 
Consequently, 
\begin{align*}
    \sum_{s=1}^S \sum_{l=0}^{n_{sk}}x_{sl}\{ t_{s,c}(0)- t_{s,c}(l)\} 
    \le \sum_{s=1}^S g_s(\overline{l}_s) = g(\overline{l}_1, \ldots, \overline{l}_S) 
    \le \max_{(l_1, \ldots, l_S) \in \mathcal{R}_{\bs{n}}(N-k)} g(l_1, \ldots, l_S), 
\end{align*}
i.e., \eqref{eq:suff_g} holds. 
Therefore, we must have $t_{k,c}^{\LP} \ge t_{k,c}^{\GT}$. 
\end{proof}

From the above, the Greedy algorithm indeed solves the linear programming problem. 

\subsection{Comment on computation cost of the greedy algorithm with the optimal transformation}\label{sec:comp_greedy}

The greedy algorithm with the optimal transformation mainly consists of two parts. First, we need to carry out the optimal transformation of the $\Delta_{s,c}$'s for each stratum $s$, as shown in \eqref{eq:Psi_optimal}. 
For each stratum, we can first calculate the cumulative summations of $\Delta_{s,c}$'s, which has complexity $O(n_s)$. 
Then, to calculate the $i$th term of the optimal transformed sequence, we need to calculate the summation of the first $i-1$ terms of the optimal transformed sequence, 
and then take the maximum of $O(n_s)$ linear combinations of known terms, which has complexity $O(n_s)$. 
Therefore, the total complexity for each stratum is $O(n_s^2)$, 
and the total complexity for all strata is $O(\sum_{s=1}^S n_s^2) = O(N\max_{s}n_s)$ .
Second, we apply the greedy algorithm to those monotone decreasing sequences, which is equivalent to combining these sequences into one sequence and summing up the largest $N-k$ elements. 
Thus, the complexity for this part is
at most that for sorting a sequence of length $N$, which is 
$O(N\log N)$.
From the above, 
the complexity of the greedy algorithm 
with the optimal transformation 
is $O(N\max_{s}n_s+N\log N)$.

\subsection{Greedy algorithm with the optimal monotone dominating transformation solves the integer linear programming problem under concave rank transformations}\label{sec:greedy_concave}

To prove that the greedy algorithm in \S \ref{sec:greedy} solves the integer linear programming problem in \eqref{eq:integer_program} under concave rank transformations, 
we need the following two lemmas. 

\begin{lemma}\label{lemma:concave}
Let $\phi$ be a concave function on an interval $\mathcal{I}$. Then for any $x_1, x_2, y_1, y_2\in \mathcal{I}$ with $x_2 - x_1 = y_2 - y_1 \ge 0$ and $x_1 \ge y_1$, we must have 
$\phi(x_2) - \phi(x_1) \le \phi(y_2) - \phi(y_1)$. 
\end{lemma}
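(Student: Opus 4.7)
The plan is to establish the standard slope-monotonicity property of concave functions---namely, that $\phi(t+h) - \phi(t)$ is non-increasing in $t$ for any fixed $h \ge 0$---which is exactly what the lemma asserts once we identify $t_1 = y_1$, $t_2 = x_1$, and $h = x_2 - x_1 = y_2 - y_1$. The argument proceeds by a single convex-combination identity, so I do not anticipate needing to split into cases based on the relative order of $x_1$ and $y_2$.

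First I would dispose of the degenerate situations: if $h = 0$ both sides vanish, and if $x_1 = y_1$ both sides are equal, so assume $h > 0$ and $y_1 < x_1$. The key observation is then that $y_1$ and $x_2 = x_1 + h$ are the outer points of the four, and both interior points $x_1$ and $y_2 = y_1 + h$ lie in $[y_1, x_2]$ (the latter because $y_2 = y_1 + h \le x_1 + h = x_2$). Setting $\lambda = (x_1 - y_1)/(x_2 - y_1) \in [0,1]$, a direct calculation using $x_2 - x_1 = y_2 - y_1$ shows that
\begin{equation*}
x_1 = (1-\lambda)\, y_1 + \lambda\, x_2, \qquad y_2 = \lambda\, y_1 + (1-\lambda)\, x_2,
\end{equation*}
so the two interior points are expressed as convex combinations of the outer points with complementary weights.

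Applying the definition of concavity to each decomposition yields $\phi(x_1) \ge (1-\lambda)\phi(y_1) + \lambda \phi(x_2)$ and $\phi(y_2) \ge \lambda \phi(y_1) + (1-\lambda)\phi(x_2)$. Summing the two, the right-hand side collapses to $\phi(y_1) + \phi(x_2)$, giving $\phi(x_1) + \phi(y_2) \ge \phi(y_1) + \phi(x_2)$, which rearranges exactly to the desired inequality $\phi(x_2) - \phi(x_1) \le \phi(y_2) - \phi(y_1)$.

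The only real obstacle is the bookkeeping for the complementary-weight identity; this is the place where the hypothesis $x_2 - x_1 = y_2 - y_1$ enters, and it is what lets the two concavity inequalities add up so cleanly. Everything else is routine.
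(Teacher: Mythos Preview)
Your proof is correct and is essentially the same as the paper's: both express $x_1$ and $y_2$ as convex combinations of the outer points $y_1$ and $x_2$ with complementary weights (your $\lambda$ equals the paper's $(x_1-y_1)/(x_2-y_1)$), apply concavity to each, and sum so that the right-hand sides collapse to $\phi(y_1)+\phi(x_2)$. The only cosmetic differences are that you name the weight $\lambda$ and separate out the degenerate cases $h=0$ and $x_1=y_1$ explicitly, whereas the paper writes out the fractions directly.
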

\begin{proof}[Proof of Lemma \ref{lemma:concave}]
Lemma \ref{lemma:concave} follows from the standard property of concave functions, and we give a brief proof below.
Note that $y_1\le \min\{y_2,x_1\} \le \max\{y_2, x_1\} \le x_2$. Then by the property of concave function, we have
\begin{align*}
    \phi(x_1) = \phi\left( \frac{x_2-x_1}{x_2-y_1}y_1+\frac{x_1-y_1}{x_2-y_1}x_2\right)\ge
    \frac{x_2-x_1}{x_2-y_1}\phi\left( y_1\right)+
    \frac{x_1-y_1}{x_2-y_1}\phi\left(x_2\right),
\end{align*}
and similarly we have 
\begin{align*}
    \phi(y_2) = \phi\left( \frac{x_2-y_2}{x_2-y_1}y_1+\frac{y_2-y_1}{x_2-y_1}x_2\right)\ge
    \frac{x_2-y_2}{x_2-y_1}\phi\left( y_1\right)+
    \frac{y_2-y_1}{x_2-y_1}\phi\left(x_2\right),
\end{align*}
Combining the two inequalities, we have
\begin{align*}
    \phi(x_1)+\phi(y_2)\ge \frac{2x_2-(x_1+y_2)}{x_2-y_1}\phi\left( y_1\right)+
    \frac{(x_1+y_2)-2y_1}{x_2-y_1}\phi\left(x_2\right) = \phi(y_1)+\phi(x_2),
\end{align*}
where the last equality holds because $x_2-x_1 = y_2-y_1$. 
Therefore, Lemma \ref{lemma:concave} holds. 
\end{proof}

\begin{lemma}\label{lemma:concave_transform}
Let $\{r_i(0):i=1,2,\ldots, m\}$ be a strictly monotone increasing sequence of positive integers, and define $\{r_i(l):i=1,2,\ldots, m\}$ for $l\ge 1$ recursively as 
$r_1(l) = 1$ and $r_{i}(l) = r_{i-1}(l-1)+1$ for $2\le i \le m$. 
Let $\phi(\cdot)$ be an increasing and concave function defined on $[1, \infty)$, and define  $\Delta(l) = \sum_{i=1}^m \{\phi(r_i(l-1))-\phi(r_i(l))\}$ for $l\ge 1$. Then we must have 
$
    \Delta(l)\ge 
     \Delta(l+1)
$
for all $l\ge 1$.
\end{lemma}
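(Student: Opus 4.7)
The strategy is to convert the desired inequality $\Delta(l) \ge \Delta(l+1)$ into the second-difference inequality $a_{l+1} - 2a_l + a_{l-1} \ge 0$, where $a_l := \sum_{i=1}^m \phi(r_i(l))$, and then verify this by careful bookkeeping of which values are added, removed, and shifted when the level increases.

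I would first unroll the recursion to show that, for $l \ge 1$, $r_i(l) = i$ if $i \le l$ and $r_i(l) = s_{i-l} + l$ if $i > l$, writing $s_j := r_j(0)$ for the given strictly increasing positive-integer sequence. Hence
\begin{align*}
a_l \;=\; \sum_{i=1}^l \phi(i) \;+\; \sum_{j=1}^{m-l}\phi(s_j + l), \qquad 0 \le l \le m,
\end{align*}
and $a_l$ is constant for $l \ge m$. The range $l \ge m$ is then immediate: $\Delta(l+1) = 0$, and $\Delta(l) \ge 0$ follows from the fact that $l \mapsto r_i(l)$ is non-increasing (using $s_{i-l-1}+1 \le s_{i-l}$). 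For the main range $1 \le l \le m-1$, comparing the multisets $\{r_i(l-1)\}, \{r_i(l)\}, \{r_i(l+1)\}$ and tracking which values appear in each leads, after routine algebra, to
\begin{align*}
a_{l+1} - 2 a_l + a_{l-1} \;=\; P_1 + P_2 - N_1 - N_2,
\end{align*}
where $P_1 = \phi(l+1) - \phi(l)$, $P_2 = \phi(s_{m-l+1}+l-1) - \phi(s_{m-l}+l)$, $N_1 = \phi(s_{m-l}+l) - \phi(s_{m-l}+l-1)$, and $N_2 = \sum_{j=1}^{m-l-1}\{2\phi(s_j+l) - \phi(s_j+l-1) - \phi(s_j+l+1)\}$. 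All four quantities are non-negative: $P_1,N_1$ by monotonicity of $\phi$, $P_2$ because strict integer monotonicity gives $s_{m-l+1} \ge s_{m-l}+1$, and $N_2$ by the concavity of $\phi$ (Lemma~\ref{lemma:concave}).

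The main obstacle is that the apparent pointwise pairing $P_2 \ge N_2$ can fail; e.g., with $\phi=\log$, $s = (1,5,6)$, $l = 1$, one computes $P_2 = 0$ yet $N_2 > 0$. So one cannot simply match the positive and negative contributions term by term. The resolution is to show instead that the slack in $P_1 - N_1$ alone already dominates $N_2$. Introducing $\psi(x) := \phi(x+1) - \phi(x)$, which is non-increasing by concavity, I would rewrite $P_1 = \psi(l)$, $N_1 = \psi(s_{m-l}+l-1)$, and each summand of $N_2$ as $\psi(s_j+l-1) - \psi(s_j+l)$. Strict integer monotonicity gives $s_j + l \le s_{j+1}+l-1$, so $\psi(s_j+l) \ge \psi(s_{j+1}+l-1)$, which bounds each summand of $N_2$ by $\psi(s_j+l-1) - \psi(s_{j+1}+l-1)$. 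Summing over $j$ telescopes to
\begin{align*}
N_2 \;\le\; \psi(s_1+l-1) - \psi(s_{m-l}+l-1) \;\le\; \psi(l) - \psi(s_{m-l}+l-1) \;=\; P_1 - N_1,
\end{align*}
where the last inequality uses $s_1 \ge 1$ together with the monotonicity of $\psi$. Therefore $\Delta(l) - \Delta(l+1) = (P_1 - N_1 - N_2) + P_2 \ge P_2 \ge 0$, completing the main case. The only step requiring genuine insight is this telescoping bound on $N_2$; the rest is calculation.
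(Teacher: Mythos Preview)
Your proof is correct, but it takes a genuinely different route from the paper's. The paper never unrolls the recursion; instead it works directly with the defining relation $r_i(l)=r_{i-1}(l-1)+1$ (extended to $r_0(l)=0$). From this it extracts the key identity $r_i(l)-r_i(l+1)=r_{i-1}(l-1)-r_{i-1}(l)$ together with $r_i(l+1)=r_{i-1}(l)+1\ge r_{i-1}(l)$, and then applies the concavity lemma (Lemma~\ref{lemma:concave}) \emph{termwise} to obtain $\phi(r_i(l))-\phi(r_i(l+1))\le \phi(r_{i-1}(l-1))-\phi(r_{i-1}(l))$. Summing over $i$ shifts the index by one; the $i=0$ boundary term vanishes and the dropped $i=m$ term is nonnegative by monotonicity of $\phi$, giving $\Delta(l+1)\le\Delta(l)$ in three lines.

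Your argument instead computes $a_{l+1}-2a_l+a_{l-1}$ explicitly from the closed form $r_i(l)$, splits it into $P_1+P_2-N_1-N_2$, and then proves $P_1-N_1\ge N_2$ via the telescoping bound on $\psi(x)=\phi(x+1)-\phi(x)$. That telescoping step is the real content and is correct; interestingly, it is essentially a disguised version of the paper's index shift (your inequality $\psi(s_j+l)\ge\psi(s_{j+1}+l-1)$ is the same use of concavity at shifted arguments). What the paper's approach buys is brevity and a proof that works uniformly without splitting into the ranges $l\ge m$ and $1\le l\le m-1$. What your approach buys is an explicit accounting of exactly which terms contribute and how much slack there is---in particular, your calculation makes visible the extra nonnegative term $P_2$ that the paper simply absorbs.
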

\begin{proof}[Proof of Lemma \ref{lemma:concave_transform}]
Lemma \ref{lemma:concave_transform} holds obviously when $m=1$. Below we prove only the case where $m\ge 2$. 
For descriptive convenience, we define $r_0(l) = 0$ for any $l\ge 0$. 
Then by definition, we have $r_i(l) = r_{i-1}(l-1)+1$ for $1\le i \le m$ and $l\ge 1$. 
By definition, we can verify that for $l\ge 1$ and $1 \le i \le m$, 
\begin{align*}
    & \text{(i) } r_i(l) - r_{i-1}(l) \ge 1, 
    \qquad 
    \text{(ii) } r_{i}(l-1) = r_{i+1}(l) - 1 \ge r_{i}(l), 
    \\
    & 
    \text{(iii) } r_i(l) - r_{i}(l+1) = r_{i-1}(l-1) - r_{i-1}(l).
\end{align*}
By definition, Lemma \ref{lemma:concave} and the monotone increasing and concave properties of $\phi(\cdot)$, we then have 
\begin{align*}
    \Delta(l+1) & = \sum_{i=1}^m \{\phi(r_i(l))-\phi(r_i(l+1))\}
    \le 
     \sum_{i=1}^m \{\phi(r_{i-1}(l-1))-\phi(r_{i-1}(l))\}
    \\
    & = \sum_{i=1}^{m-1} \{\phi(r_{i}(l-1))-\phi(r_{i}(l))\}
    \le \sum_{i=1}^{m-1} \{\phi(r_{i}(l-1))-\phi(r_{i}(l))\} 
    + \phi(r_{m}(l-1))-\phi(r_{m}(l)) \\
    & = \Delta(l). 
\end{align*}
Therefore, Lemma \ref{lemma:concave_transform} holds.
\end{proof}



As discussed in \S \ref{sec:greedy}, 
the following Proposition implies that the greedy algorithm indeed solves the integer programming in \eqref{eq:integer_program}. 

\begin{proposition}\label{prop:concave_transform}
For the stratified rank score statistic in \eqref{eq:strat_rank_sum}, 
    if the rank transformation $\phi_s(\cdot)$ is a concave function for all $1\le s\le S$, 
    then for any observed data $(\bs{Z}, \bs{Y})$ and any $c\in \mathbb{R}$, 
    the resulting 
    $\Delta_{s,c}(j)$'s defined in \S \ref{sec:greedy} satisfy that, for all $s$, 
    $
        \Delta_{s,c}(1) \ge \Delta_{s,c}(2) \ge \ldots 
        \ge \Delta_{s,c}(n_s). 
    $
\end{proposition}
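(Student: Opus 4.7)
The plan is to reduce Proposition \ref{prop:concave_transform} to a direct application of Lemma \ref{lemma:concave_transform} by reformulating $t_{s,c}(l)$ as a sum of $\phi_s$ applied to a carefully indexed sequence of ranks. Fix a stratum $s$ and let $n_{s1}$ be the number of treated units in it. Sort the treated observed outcomes in descending order as $T_1>T_2>\cdots>T_{n_{s1}}$ (with ties broken by the ``first'' rule), and let $C_1,\dots,C_{n_{s0}}$ denote the control outcomes. Using the explicit form of $\bs{\xi}_{s,c}(l)$ recalled in \S\ref{sec:simp_opt}, at level $l$ the units $T_1,\dots,T_l$ receive imputed value $-\infty$ and occupy the $l$ smallest ranks, while each remaining middle treated unit $T_i$ (for $i>l$) has imputed value $T_i-c$ and rank $\rho_i(l)=l+f(i)$, where $f(i)=n_{s1}-i+1+|\{k:C_k<T_i-c\}|$ depends only on $i$ and $c$ and is strictly decreasing in $i$.

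Next, re-enumerate the ranks of all $n_{s1}$ treated units in ascending order as $r_1(l)<r_2(l)<\cdots<r_{n_{s1}}(l)$: one checks that $r_j(l)=j$ for $j\le l$ and $r_j(l)=f(n_{s1}+l+1-j)+l$ for $j>l$. Since the test statistic sums $\phi_s$ only over treated units, this yields the closed form $t_{s,c}(l)=\sum_{j=1}^{n_{s1}}\phi_s(r_j(l))$, and hence $\Delta_{s,c}(l)=\sum_{j=1}^{n_{s1}}[\phi_s(r_j(l-1))-\phi_s(r_j(l))]$, which is exactly the quantity $\Delta(l)$ appearing in Lemma \ref{lemma:concave_transform} (with $m=n_{s1}$). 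The crucial step is to verify that the sequence $\{r_j(l)\}$ obeys the precise recursion of Lemma \ref{lemma:concave_transform}: the initial ranks $r_j(0)=f(n_{s1}+1-j)$ form a strictly increasing sequence of positive integers, and passing from level $l-1$ to level $l$ inserts the newly-extracted treated unit at rank $1$ and shifts every other treated rank up by exactly one, so $r_1(l)=1$ and $r_j(l)=r_{j-1}(l-1)+1$ for $2\le j\le n_{s1}$.

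Once this identification is established, Lemma \ref{lemma:concave_transform} applied to the non-decreasing concave function $\phi_s$ yields $\Delta_{s,c}(l)\ge\Delta_{s,c}(l+1)$ for every $1\le l\le n_{s1}-1$. For the remaining indices $l>n_{s1}$ all treated units have been moved to infinity, so $t_{s,c}(l)=\sum_{j=1}^{n_{s1}}\phi_s(j)$ is constant in $l$ and $\Delta_{s,c}(l)=0$; combined with the bound $\Delta_{s,c}(n_{s1})\ge 0$ from Lemma \ref{lemma:monotone_t_stratum}, this extends the monotonicity to the full chain $\Delta_{s,c}(1)\ge\Delta_{s,c}(2)\ge\cdots\ge\Delta_{s,c}(n_s)$, as required.

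The main obstacle is bookkeeping: one must verify the recursion $r_j(l)=r_{j-1}(l-1)+1$ separately for $j\le l$, $j=l+1$, and $j\ge l+2$, and the argument in the last case relies on the fact that the unique value leaving the middle pool (namely $T_l$ after its imputed value is sent to $-\infty$) is strictly the largest middle-treated value, which is precisely why every remaining middle-treated rank shifts up by exactly one. Modulo this verification, the concavity-based monotonicity of $\Delta(l)$ is then borrowed in one stroke from Lemma \ref{lemma:concave_transform}.
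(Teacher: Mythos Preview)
Your proposal is correct and follows essentially the same approach as the paper's proof: both reduce the claim to Lemma \ref{lemma:concave_transform} by verifying that the sorted ranks of the treated units under the imputed vector $\bs{\xi}_{s,c}(l)$ obey that lemma's recursion $r_1(l)=1$, $r_j(l)=r_{j-1}(l-1)+1$. Your version is simply more explicit---you write out closed forms for the sorted ranks, check the recursion case-by-case, and separately handle the tail range $l>n_{s1}$---whereas the paper states the recursion at the level of the individual (unsorted) treated ranks and then passes to the sorted sequence.
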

\begin{proof}[Proof of Proposition \ref{prop:concave_transform}]
The proof of Proposition \ref{prop:concave_transform} relies on the closed form of $t_{s,c}(l_s)$ given by \citet{li2020quantile}, which is briefly explained at the end of Section \ref{sec:simp_opt}. 
Recall that 
$t_{s, c}(l_s) = t_s(\bs{Z}_s, \bs{Y}_s - \bs{Z}_s \circ \bs{\xi}_{s,c}(l_s))$, 
where $\bs{\xi}_{s,c}(l_s) = (\xi_{s,c,1}(l_s), \xi_{s,c,2}(l_s), \ldots, \xi_{s,c,n_s}(l_s))^\top \in \mathbb{R}^{n_s}$, 
and $\xi_{s, c, i}(l_s)$ equals $\infty$
if $Z_{si}=1$ and $\rank_i(\bs{Y}_s)$ is among the $l_s$ largest elements  of  $\{\rank_j(\bs{Y}_s): Z_{sj} = 1, 1\le j \le n_s\}$, 
and $c$ otherwise.

Let $\mathcal T_s$ denote the set of units receiving treatment in stratum $s$, i.e, $\mathcal T_s = \{i:Z_{si}=1, 1\le i \le n_s\}$, and $r_{si}(l_s)=\rank_i(\bs{Y}_s - \bs{Z}_s \circ \bs{\xi}_{s,c}(l_s))$. Then $t_{s, c}(l_s) = \sum_{i\in\mathcal{T}_s}\phi_s(r_{si}(l_s))$. 
Moreover, 
we can verify that 
the sorted sequence of $\{r_{si}(l_s): i\in \mathcal{T}_s\}$, for $l_s\ge 1$,  satisfies the recursion in Lemma \ref{lemma:concave_transform}. 
Therefore, from Lemma \ref{lemma:concave_transform}, we can immediately derive Proposition \ref{prop:concave_transform}.
\end{proof}

\begin{remark}
The Stephenson rank transformations are always convex functions, and we give a brief proof here. 
We first extend the domain of the Stephenson rank transformation $\phi_s$ as follows:
\begin{align*}
    \phi_s(r) = 
    \begin{cases}
    (r-1)(r-2)\cdots \{(r-1)-(h_s-1)+1\}/(h_s-1)!, & \text{if } r \ge h_s-1,\\
    0, & \text{otherwise}. 
    \end{cases}
\end{align*}
Below we 
show that $\phi_s(\cdot)$ is convex on $\mathbb{R}$. 
First, $\phi_s(\cdot)$ is convex on $[h_s-1,\infty)$\footnote{See, e.g., https://math.stackexchange.com/q/1923193, for a proof.}. 
Second, by definition, $\phi_s(r)$ takes value zero at $r \le h_s-1$, is continuous at $r = h_s-1$, and is increasing at $r\ge h_s-1$. 
Therefore, we can verify that $\phi_s(\cdot)$ is convex on the whole real line. 
\end{remark}

\end{document}